\def\dOi{10(4:12)2014}
\keywords{Doob's martingale convergence theorem, algorithmic randomness,
computability theory, Schnorr randomness, computable randomness, Brownian
motion}
\newcommand{\addtheorem}[2]{
	\newtheorem{#1}[thm]{#2}
}
\newcommand{\la}{\langle}
\newcommand{\ra}{\rangle}
\newcommand{\mc}{\mathcal}
\newcommand{\E}{\mathbb{E}}
\renewcommand{\P}{\mathbb{P}}
\newcommand{\N}{\mathbb{N}}
\newcommand{\openbegin}{(}
\newcommand{\closedbegin}{[}
\newcommand{\openend}{)}
\newcommand{\closedend}{]}
\newcommand{\varW}{X} 
\newcommand{\varvarW}{Y} 
\begin{document}
	\title[Algorithmic randomness for Doob's theorem]{
		Algorithmic randomness for Doob's martingale convergence theorem in continuous time
	}
	\author[B.~Kjos-Hanssen]{Bj\o rn Kjos-Hanssen\rsuper a}
	\address{{\lsuper a}University of Hawai\textquoteleft i at M\=anoa}
	\email{bjoernkh@hawaii.edu}
	\thanks{{\lsuper a}This material is based upon work supported by
the National Science Foundation under Grant No.\ 0901020. This work was
partially supported by a grant from the Simons Foundation (\#315188 to Bj\o
rn Kjos-Hanssen).}

	\author[P.~Nguyen]{Paul Kim Long V.~Nguyen\rsuper b}
	\address{{\lsuper b}University of Hawai\textquoteleft i -- Leeward Community College}
	\email{pvnguyen@hawaii.edu}
        \thanks{{\lsuper b}This material is based upon work supported by the
National Science Foundation under Grant No. 0841223.}

	\author[J.~Rute]{Jason M.~Rute\rsuper c}
	\address{{\lsuper c}Pennsylvania State University}
	\email{jmr71@math.psu.edu}
	\thanks{{\lsuper c}This material is based upon work supported by the
National Science Foundation under Grant No. 0901020.}

	\begin{abstract}
		We study Doob's martingale convergence theorem for computable continuous time
		martingales on Brownian motion, in the context of algorithmic randomness.
		A characterization of the class of sample points for which the theorem holds is given. Such points are given the name of Doob random points.
		It is shown that a point is Doob random if its tail is computably random in a certain sense.
		Moreover, Doob randomness is strictly weaker than computable randomness and is incomparable with Schnorr randomness.
	\end{abstract}
	\maketitle
	\tableofcontents
	\section{Introduction}
		This paper is concerned with computable continuous-time gambling strategies, and their corresponding capital value functions, referred to as martingales.
		First developed by Ville in the 1930's, martingales are a central tool in the study of both algorithmic randomness and classical probability theory.
		
		The classic example of a martingale is as follows.
		A gambler, who has a starting capital of $M_0$, places a wager on the outcome of a fair coin toss.
		If she wins, her capital increases.  If she looses, it decreases by the same amount she would have won.
		This new capital value is denoted $M_1$.
		She continues to place bets on new coin tosses, giving rise to the capital values $M_2, M_3, \ldots$.
		Since the capital value $M_n$ at time $n$ depends on the sequence of coin tosses, we write it as a function $M_n : 2^\N \rightarrow \mathbb{R}$.
		This sequence of capital value functions $(M_n)$ is what is known as a martingale.
		If the martingale is nonnegative (i.e., the gambler never goes into debt),
		then Doob's martingale convergence theorem says that $(M_n)$ converges with probability one as $n\rightarrow \infty$ (Theorem~\ref{thm:doob_discrete}).
		
		Schnorr used martingales to define what is now known as computable randomness.
		Namely, $\alpha \in 2^\mathbb{N}$ is computably random if $\limsup_n M_n(\alpha) <\infty$ for all computable nonnegative martingales $(M_n)$.
		By a folklore result, $\alpha \in 2^\mathbb{N}$ is computably random iff
		$M_n(\alpha)$ converges as $n \rightarrow \infty$ for all computable nonnegative martingales $(M_n)$,
		thereby giving an effective version of Doob's martingale convergence theorem.
		
		Martingale theory is an important tool in modern probability theory and financial mathematics.
		The process of flipping coins can also be thought of as a random walk on the integers (move $+1$ on heads and $-1$ on tails).
		By decreasing both the time step and increment amount to be infinitesimally small, this random walk converges to a continuous process known as Brownian motion.
		A Brownian motion sample path is almost-surely continuous, nondifferentiable, and has high complexity.
		Brownian motion can be used to model many real-world processes.
		Doob developed a rigorous theory of continuous-time martingales ${(M_t)}_{t\in \closedbegin 0,\infty \openend}$,
		including martingales which bet on Brownian motion. 
		(The reader may choose to think of Brownian motion as the value of a stock over time
		and the martingale $(M_t)$ as the value, at time $t$,
		of a stock portfolio which buys and sells that stock according to a certain strategy.)
		Doob's martingale convergence theorem still holds:
		if $(M_t)$ is a nonnegative right-continuous martingale,
		then $(M_t)$ converges with probability one as $t \rightarrow \infty$ (Theorem~\ref{thm:doob_cont}).
		
		The purpose of this paper is to study a new randomness notion on Brownian motion, which we call Doob randomness.
		Namely, a Brownian motion sample path $W\in C \closedbegin 0,\infty \openend$ is
		Doob random (Definition~\ref{df:Doob-random-Brownian}) if $M_t(W)$ converges as $t\rightarrow \infty$ for all computable nonnegative martingales $(M_t)$.
		The reader may be tempted to think of Doob randomness as the continuous-time analogue of computable randomness.
		However, computable randomness already has a more natural definition on Brownian motion \cite{2012arXiv1203.5535R},
		and we show that Doob randomness is a strictly weaker notion than computable randomness.
		This weakness has an intuitive explanation: because our martingales are computable,
		$M_t$ must be finite at all times $t$.
		Hence a martingale cannot gain more than a finite amount of money
		solely based on the non-randomness of some proper \textit{initial segment} of the Brownian motion.
		Indeed, we show there are Doob random paths which are constant in the interval $[0,1]$,
		whereas this property cannot hold for a computably random Brownian motion path.
		For similar reasons, Doob randomness is incomparable with Schnorr randomness (another randomness notion weaker than computable randomness).
		
		Instead, Doob randomness depends on the limiting behavior of the Brownian path as time goes to infinity.
		This means that a path is Doob random if and only if ``the tail of the path is computably random'' in some sense.
		
		We believe Doob randomness is interesting because it adds a new dimension to the study of randomness.
		When a Brownian path fails to be Doob random, it is not just because the path contains nonrandom information,
		but also because there is enough time available to exploit that information to gain capital.
		In other words, the order in which we process information now matters in our randomness test.
		The space of Brownian motion paths is not just a probability space,
		but instead it is a probability space with an implicit ``time structure''.\footnote{%
		This time structure can be made explicit using filtrations.
		See Appendix~\ref{appendix}.
		Another type of probability space with additional structure is a measure preserving system.
		It would be interesting to explore definitions of randomness which take this structure into account.
		}
		
		Our paper is structured as follows. In Section~\ref{sec:background}, we provide background on computable analysis and computable measure theory. In Section \ref{sec:2}, we introduce expectation and  martingales on $2^\N$ and define computable randomness.
		
		Before studying Brownian motion,
		we look at two spaces that also have an interesting implicit notion of ``time'',
		${(2^\N)}^\N$ and $2^{\N\times\N}$.
		We find that these two spaces provide a nice simplified setting for studying Brownian motion.
		
		In Section \ref{sec:3}, for a sequence of sequences $\omega = (\omega_n) \in {(2^\N)}^\N$,
		we consider computable martingales which bet on the entire sequence $\omega_n$ at time $n$.
		Again, the resulting notion of randomness depends on the limiting behavior of $(\omega_n)$ as $n\rightarrow {\infty}$.
		
		In Section \ref{sec:4}, we look at martingales which bet on bits of $2^{\N \times \N}$
		using the lexicographic order of $\N \times \N$,
		thereby introducing limit stages into our martingales.
		This leads to a characterization of Doob randomness via a strategy where
		the gambler is only allowed to bet on finitely many bits of one row\footnote{%
			We say that $(i,j)\in\N\times\N$ belongs to row $i+1$ and column $j+1$, i.e., we think in terms of matrices rather than cartesian planes.
		}, before moving to the next.
		This latter characterization is closely related to a result of Miyabe \cite{MR2675686} generalizing van Lambalgen's theorem to the space ${(2^\N)}^\N$.  We show that his result holds for Schnorr randomness.

		In Section \ref{sec:Brownian}, we use the results of Sections \ref{sec:3} and \ref{sec:4} to characterize Doob randomness on Brownian motion.
		We decompose a Brownian path $W$ on the infinite time interval $0\leq t < \infty$ into countably many independent Brownian paths, each on the time interval $0 \leq t \leq 1$.
		Then we represent each of these by a sequence in $2^\N$, thereby representing our original path $W$ as a sequence in ${(2^\N)}^\N$.
		Using this correspondence, we prove that Doob randomness is strictly weaker than computable randomness and that Doob randomness is incomparable with Schnorr randomness.
		
		This paper brings together a number of fields including algorithmic randomness, computable analysis, measure theory, martingale theory, and Brownian motion.
		In an effort to make this paper more accessible to a broad audience,
		we have tried to limit the use of advanced probability theory.
		The reader does not need to be familiar with stochastic calculus, Brownian motion, or even martingale theory.
		Instead our proofs use basic computable analysis and measure theory, and we carefully spell out the few facts we will need about conditional expectation, martingales, and Brownian motion.
		
		This elementary exposition is most apparent in our definitions of conditional expectation and martingales.
		Instead of giving one general definition of a martingale (as is done in probability texts),
		we give a handful of similar definitions.
		For example, we separately define $\N$-indexed martingales on ${(2^\N)}^\N$,
		$\N \times \N$-indexed martingales on $2^{\mathbb{N}\times\mathbb{N}}$,
		and $\closedbegin 0,\infty \openend$-indexed martingales on the space of Brownian motion.
		This allows us to present elementary proofs of our results, at the cost of a little extra notation.
		These definitions also make it easier for us to reason computably about conditional expectation.
		However, for the reader who wants to know more about the probabilistic background,
		we provide Appendix~\ref{appendix}.
		There we give further background on conditional expectation, filtrations, martingales, and Brownian motion.  We also derive our characterization of conditional expectation from the standard definitions.
	\section{Background}\label{sec:background}
		In this section, we will present relevant background about computable analysis and algorithmic randomness.  		
		We assume the reader is familiar with basic computability theory as well as computability on the real numbers, including computable functions between the spaces $2^\N$, $\N^\N$, and $\mathbb{R}$.
		We also assume the reader has a basic grasp of analysis, including basic measure theory.
		
		The setting of computable analysis is a \emph{computable metric space}, that is a metric space $(M,d)$ with a countable set of points $\{a_n\} \subseteq M$ (called the \emph{simple points}) such that
		$d(a_m,a_n)$ is uniformly computable from the pair $(m,n)$.
		The spaces $2^\N$, $C([0,1])$, and $C(\closedbegin 0,\infty\openend)$ are computable metric spaces under the usual metrics $d(\alpha,\beta)=\inf\{2^{-n}:\alpha_n = \beta_n\}$, $d(f,g) = \|f-g\|_\infty$, and
		$d(f,g)=\sum_n 2^{-n} \min \{\|(f - g) \upharpoonright [0,n]\|_\infty,1\}$ (using any reasonable choice of simple points).
		A \emph{Cauchy name} is an element $f\in\N^\N$ such that for all $n \geq m$ we have $|a_{f(n)} - a_{f(m)}|\leq 2^{-m}$.
		A \emph{computable point} of $M$ is a point with a computable Cauchy name.
		For two computable metric spaces, $M_1,M_2$, a continuous map of type $f:M_1 \rightarrow M_2$ is called \emph{computable} if there is a partial computable map of type $\N^\N \rightarrow \N^\N$ which maps each name of $x$ to a name of $f(x)$.
		We say that $y\in M_2$ is \emph{uniformly computable} from $x \in M_1$ if there is a total computable map $f:M_1 \mapsto M_2$ such that $f(x)=y$.
		(We sometimes drop the ``uniformly'' if it is clear.)
		For example, on the compact space $[0,1]$ (and the same for $2^\N$), the maximum and minimum values of a continuous function $f \in C([0,1])$ are uniformly computable from $f$.
		The computable points of $C([0,1])$ and $C(\closedbegin 0,\infty\openend)$ are exactly the computable functions of type $[0,1] \rightarrow \mathbb{R}$ and $\closedbegin 0,\infty\openend \rightarrow \mathbb{R}$, respectively.
		For more information on these results see the book by Weihrauch \cite{Weihrauch2000}.
		
		There are other computable structures that are not necessarily objects in a computable metric space.
		For example, a \emph{lower semicomputable function} $f:M\rightarrow [0,\infty]$ is the supremum of a computable sequence $(f_n)$ of nonnegative computable functions.
		An \emph{upper semicomputable} function is the infimum.
		We say $r \in [0,\infty]$ is lower semicomputable from $x\in M$ if there is a lower semicomputable function $f:M \rightarrow [0,\infty]$ such that $f(x)=r$.
		An \emph{effectively open} or $\Sigma^0_1$ set is a computable union of balls $B(a,r)$ where $a \in M$ is a simple point and $r \in \mathbb{Q}$.
		Similarly an \emph{effectively closed} or $\Pi^0_1$ set is the complement of a $\Sigma^0_1$ set.
		The computable functions, lower semicomputable functions, upper semicomputable functions, $\Sigma^0_1$ sets, and $\Pi^0_1$ sets are, respectively,
		the computable points in the spaces of continuous functions, lower semicontinuous functions, upper semicontinuous functions, open sets, and closed sets.
		Each of these spaces have a natural representation which assigns to each object in the space a set of \emph{names} in $\N^\N$.
		This also allows us to talk about, say, computable maps from the open sets to the reals.  (See Weihrauch and Grubba \cite{MR2534366} for more details.)
		
		Now we turn to computable measure theory.  More more information see the paper by Hoyrup and Rojas \cite{MR2519075} and the lecture notes by G{\'a}cs \cite{Gacs:fk}.
		\begin{df}\label{df:comp_prob_space}A \emph{computable probability space} $(\Omega, \P)$ is a computable metric space $\Omega$ equipped with a \emph{computable Borel probability measure} $\P$, that is $\P$ is a Borel probability measure on $\Omega$ which satisfies any of the following equivalent properties (see \cite{MR2519075}).
		\begin{enumerate}
		\item{} The map $f \mapsto \int f \,d\P$ is lower semicomputable, where $f$ ranges over nonnegative continuous functions on $\Omega$.
		\item{} The map $f,c \mapsto \int f \,d\P$ is computable, where $f$ ranges over continuous functions  bounded above by $c \in \mathbb{R}$ and bounded below by $-c$.
		\item{} The map $U \mapsto \P(U)$ is lower semicomputable, where $U$ ranges over open sets.
		\item{} The map $C \mapsto \P(C)$ is upper semicomputable, where $C$ ranges over closed sets.
		\end{enumerate}
		\end{df}
		For example, property (3) in the previous definition implies that the measure of $\Sigma^0_1$ sets is lower semicomputable, and the measure of $\Pi^0_1$ sets is upper semicomputable.
		Also for the compact space $2^\N$ (or $[0,1]$), since the maximum and minimum operations are computable,
		$f \mapsto \int f\,d\P$ is a computable map of type $C(2^\N) \rightarrow \mathbb{R}$.
		
		We will have occasion to use both of these randomness notions.  (Computable randomness will be defined in Section \ref{sec:2}, when we introduce martingales.)
		For more information, see Downey and Hirschfelt \cite{MR2732288} for randomness on $2^\mathbb{N}$ and see \cite{MR2519075, Hoyrup-Rojas-Gacs, 2012arXiv1203.5535R} for randomness in a computable metric space.
		
		\begin{df}\label{df:ML_SR}Let $(\Omega, \P)$ be a computable probability space.
			\begin{enumerate}
				\item{} A $\emph{Martin-L{\"o}f test}$ is a computable sequence of $\Sigma^0_1$ sets $U_n$ such that $\P (U_n) \leq 2^{-n}$.
				\item{} A $\emph{Schnorr test}$ is a Martin-L{\"o}f test such that $\P (U_n)$ is uniformly computable from $n$.
				\item{} A point $\omega \in \Omega$ is \emph{Martin-L{\"o}f random} (resp.\ \emph{Schnorr random}) if $\omega \notin \bigcap_n U_n$ for all Martin-L{\"o}f tests (resp.\  Schnorr tests).
			\end{enumerate}	
		\end{df}
	\section{Computably random bit sequences}\label{sec:2}
		In this section, we review computable randomness on $2^\N$ and introduce some notation.
		We also introduce conditional expectation and martingales,
		two concepts that we will use throughout this paper.
		Consider the space $2^\N$ with the fair coin measure $\mathbb{P}$,
		the measure where each bit has an equal likelihood of being $0$ or $1$ independently of the others.
		This is a computable metric space.
		We use the variables $\alpha$, $\beta$ and $\gamma$ for sequences in $2^\mathbb{N}$.
		Let $\alpha_n$ denote the $n+1$st bit of $\alpha$, hence $\alpha = (\alpha_0 ,\alpha_1, \dots)$.
		Then let $\alpha_{<n} = (\alpha_0, \dots, \alpha_{n-1})$,
		$\alpha_{\geq n} = (\alpha_{n} ,\alpha_{n+1}, \dots)$.
		Concatenation of sequences is denoted by ${}^\frown$, so that if $\alpha=(\alpha_0,\dots,\alpha_n)$ is a finite sequence and $\beta = (\beta_0, \beta_1, \dots)$ is a finite or infinite sequence, then
		\[
			\alpha{}^\frown\beta = (\alpha_0,\dots,\alpha_n,\beta_0,\beta_1,\dots).
		\]
		All computable functions $f$ on $2^\N$ are integrable\footnote{%
			Every computable function $f$ is continuous.
			Since $2^{\mathbb N}$ is compact, $f$ is uniformly continuous and bounded, thus integrable.
		}
		and the expectation
		$\E(f) = \int f d\mathbb{P}$ is computable uniformly from $f$ (Definition~\ref{df:ML_SR} and the following paragraph).
		We can define conditional expectation as follows.
		\begin{df}\label{df:En}
			Given an integrable function $f:2^\N \rightarrow \mathbb{R}$ define $\E_n (f)$ as the function given by
			\[
				\E_n (f) (\alpha) = \int\! g(\beta)\, d\mathbb{P}(\beta) = \E(g) \quad \text{where}\quad g(\beta)=f(\alpha_{<n} {}^{\frown}\beta).
			\]
		\end{df}
		Note, if $ f $ is computable, then $\E_n (f) (\alpha)$ is computable uniformly from $f$ and $\alpha_{<n}$.
		\begin{rem}\label{rem:EnProperties}
			Conditional expectation is an important concept in probability theory.
			See Appendix~\ref{appendix} for more information.
			We will introduce a number of different variations of Definition~\ref{df:En} for different spaces and different choices of indices.
			Nonetheless, they will all satisfy the following five important properties
			which are straightforward from the definition of $\mathbb{E}_n$ (compare with Proposition~\ref{prop:cond_exp_properties}).
			Let $f$ and $g$ be integrable functions.
			\begin{enumerate}
				\item{} If $f(\alpha)$ only depends on $\alpha_{<n}$ for all $\alpha$, then $\E_n(f) = f$.
				\item{} $\E_n (cf + g) = c\, \E_n (f) + \E_n (g)$ for $c\in \mathbb{R}$.
				\item{} $|\E_n (f)| \leq  \E_n (|f|)$, and therefore  $\|\E_n (f)\|_{\infty} \leq  \|f\|_{\infty}$.
				\item{} If $f(\alpha)$ only depends on $\alpha_{<n}$ for all $\alpha$, then $\E_n (fg) = f\, \E_n(g)$.
				\item{} If $m \leq n$, then $\E_m (\E_n (f)) = \E_m (f)$.
			\end{enumerate}
		\end{rem}
		\begin{df}\label{df:mart}
			Let $M={(M_n)}_{n\in \mathbb N}$ be a sequence of real-valued integrable functions on ${2^\N}$.
			\begin{itemize}
				\item{} $(M_n)$ is \emph{adapted} if ${M_n(\alpha)}$ depends only on $n$ and ${\alpha_{< n}}$.
				\item{} $(M_n)$ is a \emph{martingale} if
				$(M_n) $ is adapted and
				${\E_m} (M_n) = M_m$ for ${m \leq n}$.
				\item{} $(M_n)$ is \emph{computably adapted} if $M_n(\alpha)$ is uniformly computable from $n$ and $\alpha_{<n}$.
				\item{} $(M_n)$ is a \emph{computable martingale} if it is a \emph{computably adapted} martingale.
			\end{itemize}
		\end{df}
		Here is another characterization of computable martingales.
		\begin{pro}\label{pro:compAdapted}
			A sequence $(M_n)$ is a computable martingale if and only if
			\begin{enumerate}
				\item{} $(M_n)$ is a computable sequence of computable functions, and
				\item{} ${\E_n} (M_{n+1}) = M_{n}$ for all $n$.
			\end{enumerate}
		\end{pro}
		\begin{proof}
			Assume $(M_n)$ is a computable martingale.  Then clearly (1) and (2) hold.
			Conversely, assume that (1) and (2) hold.  Then $M_n$ is a computable function uniformly in $n$.
			Further, $M_n(\alpha) = (\E_n M_{n+1})(\alpha)$ and is therefore computable from $\alpha_{<n}$ and $n$.
			Also, the more general property $m\leq n$, $\E_m M_n = M_m$ follows by (2) and induction.
		\end{proof}
		\begin{rem}
			The more standard notation in computability theory is to write a martingale as a function $d:2^* \rightarrow \mathbb{R}$
			(where $2^*$ is the set of finite binary sequences) such that
			$\frac{1}{2} d(\alpha_{<n} {}^{\frown}0) + \frac{1}{2} d(\alpha_{<n} {}^{\frown}1) = d(\alpha_{<n})$.
			These are the same using $M_n (\alpha) = d(\alpha_{<n})$.
		\end{rem}
		\begin{df}[\cite{MR0414225, MR0354328, MR2732288}]
			A sequence $\alpha \in 2^\N$ is \emph{computably random} if
			$\limsup_n M_n(\alpha) < {\infty}$ for all nonnegative computable martingales $M$.
		\end{df}
		\begin{thm}[{\cite[Theorem 7.1.3]{MR2732288}}]\label{thisLast}
			For $\alpha \in 2^\N$, the following are equivalent:
			\begin{enumerate}
				\item{} $\alpha$ is not computably random.
				\item{} $\liminf_n M_n (\alpha) = {\infty}$ for some computable nonnegative martingale $(M_n)$.
				\item{} $(M_n (\alpha))$ diverges as ${n \rightarrow {\infty}}$ for some computable nonnegative martingale $(M_n)$.
			\end{enumerate}
		\end{thm}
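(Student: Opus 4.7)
The plan is to prove the three implications $(2)\Rightarrow(3)\Rightarrow(1)\Rightarrow(2)$. The first, $(2)\Rightarrow(3)$, is immediate: $\liminf_n M_n(\alpha)=\infty$ forces $M_n(\alpha)\to\infty$, so $(M_n(\alpha))$ has no finite limit as $n\to\infty$.

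For $(3)\Rightarrow(1)$, I would take a computable nonnegative martingale $M$ for which $(M_n(\alpha))$ diverges and split into cases. If $\limsup_n M_n(\alpha)=\infty$ already, then $M$ itself witnesses (1). Otherwise the sequence is bounded but non-convergent, so there exist rationals $p<q$ with $M_n(\alpha)<p$ and $M_n(\alpha)>q$ each infinitely often. For every such pair I would build the Doob \emph{upcrossing martingale} $N^{p,q}$ as a martingale transform of $M$,
\[
    N^{p,q}_n \;=\; p \;+\; \sum_{k=1}^{n} H^{p,q}_k\,(M_k - M_{k-1}),
\]
where the predictable process $H^{p,q}_k(\alpha)\in\{0,1\}$ signals that we are in an active ``holding'' window, opened the last time $M$ dropped to $\leq p$ and not yet closed by $M$ rising above $q$. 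Standard bookkeeping gives $N^{p,q}_n\geq (q-p)\,U^{p,q}_n(\alpha)\geq 0$, where $U^{p,q}_n(\alpha)$ counts completed upcrossings by step $n$, so $\limsup_n N^{p,q}_n(\alpha)=\infty$ whenever the pair $(p,q)$ sees infinitely many upcrossings. I would then combine these into $M^\ast=\sum_{p<q} c_{p,q}\,N^{p,q}$, over a computable enumeration of rational pairs with positive rational weights $c_{p,q}$ chosen so that $\sum c_{p,q}\,p<\infty$, yielding a single computable nonnegative martingale that witnesses (1) on $\alpha$.

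For the main implication $(1)\Rightarrow(2)$ I would apply the \emph{savings trick}. Suppose $M$ is a computable nonnegative martingale with $\limsup_n M_n(\alpha)=\infty$. For each $k\in\N$ let $M^{(k)}$ be the martingale obtained from $M$ by freezing it once $M$ first reaches $2^k$, realised as a martingale transform with predictable indicator $H^{(k)}_j(\alpha)=1$ while $\max_{i<j}M_i(\alpha)<2^k$ and $0$ thereafter. Since $M_n(\alpha)$ eventually exceeds $2^k$, $M^{(k)}_n(\alpha)$ stabilises at some value $\geq 2^k$. The candidate is
\[
    M'_n \;=\; \sum_{k\geq 0} 2^{-k}\,M^{(k)}_n,
\]
a nonnegative martingale with $\E(M'_0)=2\,\E(M_0)<\infty$, pointwise finite because $M^{(k)}_n(\alpha)\leq \max_{j\leq n}M_j(\alpha)$. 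For every $N$, once $n$ is large enough that all $M^{(k)}$ with $k\leq N$ have frozen, $M'_n(\alpha)\geq \sum_{k=0}^{N}2^{-k}\cdot 2^k = N+1$, so $\liminf_n M'_n(\alpha)=\infty$, establishing (2).

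The main obstacle is effectivity: the sharp predicates ``$M_j(\alpha)\leq p$'', ``$M_j(\alpha)\geq q$'' and ``$M_j(\alpha)\geq 2^k$'' used in the indicator processes above are only semi-decidable on computable reals, and the associated first-passage times are not in general computable from $\alpha$. I would work around this by replacing each sharp indicator by a Lipschitz cutoff depending on a small rational tolerance $\varepsilon>0$; for instance, $H^{p,q}_k$ is built from a function of $M_{k-1}$ that equals $1$ below $p-\varepsilon$ and $0$ above $p$, linear in between, and analogously at the upper threshold. The martingale-transform structure automatically preserves the martingale property, and the resulting processes are effectively computable from $\alpha_{<n}$. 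This softening costs at most $2\varepsilon$ in the guaranteed per-upcrossing gain and at most $\varepsilon$ in the per-doubling saving, both still positive and uniform in the relevant parameters, so the asymptotic analyses on $\alpha$ go through unchanged.
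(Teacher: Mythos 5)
Your proposal is correct, but it reaches effectivity by a genuinely different route than the paper. The paper does not prove this theorem directly (it cites Downey--Hirschfeldt), but its detailed proof of the analogue on ${(2^\N)}^\N$ (Theorem~\ref{thm:DoobSequences}) shows the intended method: first replace the given computable martingale by a computable \emph{rational-valued} martingale within $2^{-k}$ in sup norm (via Lemma~\ref{lem:rationalExpectation} and the telescoping correction $L_{n+1}=N_{n+1}-\E_n(N_{n+1})+L_n$), after which the comparisons $M_j\le p$, $M_j\ge q$, $M_j\ge 2^k$ become decidable and the sharp upcrossing and savings stopping times are literally computable. You instead keep the original real-valued martingale and make the \emph{strategy} continuous: the indicator processes in your martingale transforms become predictable $[0,1]$-valued Lipschitz weights, i.e.\ fractional positions. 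This does work: a transform with predictable weights in $[0,1]$ is still a martingale; summation by parts exhibits the softened savings process $M^{(k)}_n$ as a convex combination of $M_0,\dots,M_n$ (hence nonnegative, bounded by $\max_{j\le n}M_j$, and frozen at a value $\ge 2^k-\varepsilon$ once $M$ reaches $2^k$); and the same bookkeeping gives $N^{p,q}_n\ge (q-p-\varepsilon)B_n-p$ where $B_n$ is the cumulative position sold off above $q-\varepsilon$, which tends to infinity along $\alpha$ when $\liminf_n M_n(\alpha)<p-\varepsilon<q<\limsup_n M_n(\alpha)$. Your approach avoids the rational-valued approximation lemma entirely, at the cost of a fussier fractional-position analysis; the paper's approach keeps the classical sharp-stopping-time arguments verbatim at the cost of an approximation step. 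Two small bookkeeping items you should fix but which do not affect correctness: with softened thresholds the starting capital of $N^{p,q}$ must be roughly $2p$ rather than $p$ (the open fractional position can lose up to $p$ \emph{and} the summation-by-parts bound contributes another $-p$), and the computability of the infinite sums $M'$ and $M^\ast$ requires the effective tail bounds $M^{(k)}_n\le\max_{j\le n}M_j$ and $N^{p,q}_n\le 2p+2n\max_{j\le n}M_j$, both of which your constructions do supply; for the upcrossing sum you should also let the tolerance $\varepsilon$ vary with the pair $(p,q)$ so that some enumerated triple satisfies $\liminf_n M_n(\alpha)<p-\varepsilon$.
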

		Theorem \ref{thisLast} is an effective version of Doob's martingale convergence theorem (Theorem~\ref{thm:doob_discrete}),
		which says that nonnegative martingales converge almost surely.
		More information about computable randomness on $2^\mathbb{N}$ is available
		in, for instance, Downey and Hirschfeldt's monograph \cite{MR2732288}.

	\section{Doob random sequences of sequences}\label{sec:3}
		In this section, we consider the space ${(2^\N)}^\N$ with the uniform\footnote{%
			To be clear, $\mathbb P$ is
			the product measure on ${(2^\N)}^\N$ of
			the product measure on $2^{\mathbb N}$ of
			the fair-coin measure on $2=\{0,1\}$.
		} probability measure $\mathbb{P}$.
		We use the variables $\omega$, $\xi$ and $\psi$ for sequences in ${(2^\N)}^\N$.
		Each $\omega\in {(2^\N)}^\N$ is an infinite sequence of infinite sequences $\omega_n$, and we write $\omega = (\omega_0 ,\omega_1, \dots)$.
		Then we let $\omega_{<n} = (\omega_0, \dots, \omega_{n-1})$, $\omega_{\geq n} = (\omega_{n} ,\omega_{n+1}, \dots)$,
		and  let ${}^\frown$ be concatenation of strings.
		The space $({(2^\N)}^\N,\mathbb{P})$ is isomorphic to $2^\N$ with the fair coin measure
		by the computable isomorphism $\omega \mapsto \bigoplus_{n\in\mathbb N} \omega_n$.%
		\footnote{As usual
			$
				{\left(\bigoplus_{n\in\mathbb N}\omega_n\right)}_{\langle n,k\rangle}={(\omega_n)}_k
			$
			where $\langle \cdot,\cdot\rangle$ is a computable bijective pairing function.
		}
		Therefore all computable functions $f$ on ${(2^\N)}^\N$ are integrable and the expectation $\E(f) = \int f d\mathbb{P}$ is computable uniformly from $f$.\footnote{%
		By this we mean that the map $f \mapsto \E(f)$ is a total computable map of type $C({(2^\N)}^\N) \rightarrow  \mathbb{R}$.  See \cite[Corollary 4.3.2]{MR2519075}.
		}
		As before we can define conditional expectation and martingales as follows.
		\begin{df}\label{df:cond_exp_seq_seq}
		The conditional expectation $\E_n$ for an integrable function $f:{(2^\N)}^\N \rightarrow \mathbb{R}$ is
		\[
			\E_n (f) (\omega) = \int\! g(\xi)\, d\mathbb{P}(\xi) = \E(g)\quad\text{where}\quad g(\xi)=f(\omega_{<n} {}^{\frown}\xi).
		\]
		\end{df}
		Again, $\E_n (f) (\omega)$ is computable uniformly from $f$ and $\omega_{<n}$ and the properties of Remark~\ref{rem:EnProperties} still hold.
		\begin{df}\label{df:mart_seq_seq}		
		A \emph{martingale} $(M_n)$ on ${(2^\N)}^\N$ is a sequence of functions such that the following hold almost surely,
		\begin{enumerate}
			\item{} $M_n (\omega)$ depends only on $n$ and $\omega_{<n}$, and
			\item{} $\E_m (M_n) = M_m$ for all $m \leq n$.
		\end{enumerate}
		\end{df}
		\begin{rem}
			Another way to define $\E_n$ and martingales is as follows. Let the projections $X_n:{(2^\N)}^\N \rightarrow 2^\N$ be given by $X_n(\omega) = \omega_n$.
			Let $\mc{F}_n = \sigma(X_0, \dots, X_{n-1})$, i.e., the $\sigma$-algebra generated by the first $n$ projections.
			For a function $f:{(2^\N)}^\N \rightarrow \mathbb{R}$,
			let $\E_n(f)=\E( f \mid \mc{F}_n)$, that is the conditional expectation of $f$ given $\mc{F}_n$.
			This agrees with our definition of $\E_n$ up to a.e.\ equivalence.
			A martingale adapted to $(\mc{F}_n)$ is a sequence $(M_n)$ of integrable functions such that
			$M_n$ is $\mc{F}_n$-measurable and $\E(M_n \mid \mc{F}_m) = M_m$.
			Again, this definition of martingale agrees with ours (up to a.e.\ equivalence).
			See Appendix~\ref{appendix}.
		\end{rem}
		A Schnorr test can be encoded by a function $f\in\mathbb N^{\mathbb N}$ which encodes a listing of basic open sets for each $n$ which union up to $U_n$,
		and also encodes a fast Cauchy sequence of rationals converging to each measure $\mu(U_n)$ \cite{2012arXiv1209.5478M}.
		This allows us to make the following definition.
		\begin{df}[\cite{2012arXiv1209.5478M}]
			A \emph{uniform Schnorr test} is a total computable function
			$\Phi:2^{\mathbb N}\rightarrow\mathbb N^{\mathbb N}$ such that each $\Phi(\gamma)$ encodes a Schnorr test.
			A collection ${(U^\gamma_n)}_{n\in\mathbb N, \gamma\in 2^{\mathbb N}}$ is a \emph{uniform Schnorr test} if
			${(U_n^\gamma)}_{n\in\mathbb N}$ is encoded by $\Phi(\gamma)$.
			Let $\alpha,\beta\in 2^{\mathbb N}$.
			We say that $\alpha$ is \emph{Schnorr random uniformly relative to} $\beta$ if there is no uniform Schnorr test such that $\alpha\in\bigcap_n U_n^\beta$.
			
			Similarly, a collection ${(M^\gamma)}_{\gamma \in 2^{\mathbb N}}$ of martingales on
			$2^{\mathbb N}$ is a \emph{uniform martingale test} if there is some total computable function
			$\Phi:2^{\mathbb N}\rightarrow\mathbb N^{\mathbb N}$ such that each $\Phi(\gamma)$ encodes a
			martingale.
			Let $\alpha,\beta\in 2^{\mathbb N}$.
			We say that $\alpha$ is \emph{computably random uniformly relative to} $\beta$ if there is no
			uniform martingale test such that $\limsup_n M^\gamma_n(\alpha) = \infty$.
			
		\end{df}
		
		\begin{df}\label{df:Doob-rand}
		Let $\omega \in {(2^\N)}^\N$. Say that
			\begin{enumerate}
				\item{} ${\omega}$ is \emph{computably random} (respectively, \emph{Schnorr random})
					if $\bigoplus_n \omega_n$ is.%
				\item{} ${\omega}$ is \emph{e.c.u.\ random (eventually computably uniformly random)}
					if there is some $n$ such that $\omega_{\geq n}$ is computably random uniformly relative to $\omega_{< n}$.%
				\item{} ${\omega}$ is \emph{Doob random}
					if $( M_n (\omega))$ converges as $n \rightarrow {\infty}$ for all nonnegative computable martingales $(M_n)$.
			\end{enumerate}
		\end{df}
		\begin{rem}\label{rem:time}
			Our definitions of computable and Schnorr randomness on ${(2^\N)}^\N$
			agree with the other definitions in the literature.
			In particular, computable and Schnorr randomness are both invariant under computable
			isomorphisms, such as that between $2^\N$ and ${(2^\N)}^\N$ \cite{2012arXiv1203.5535R}.
			This is not true of the definitions of $\E_n$, martingale, e.c.u.\ randomness, and Doob randomness.
			They depend on an implicit notion of time in the space.
			For us, time is expressed as the index $n$ in the conditional expectation $\mathbb{E}_n$.
			(One could also use the probabilistic notion of a filtration to define time.
			See Appendix~\ref{appendix}.)
		\end{rem}

		Fix a standard enumeration of the rationals ${\{q_n\}}_{n\in \N} = \mathbb{Q}$.
		We define a \emph{computable rational-valued function} to be a function $f : {(2^\N)}^\N \rightarrow \mathbb{Q}$ such that
		$f(\omega) = q_{g(\omega)}$ for some computable function $g:{(2^\N)}^\N \rightarrow \N$.
		Since $f$ is a truth-table reduction, $f(\omega)$ is computable from finitely many bits of ${\omega}$.
		In particular, there is some $n$ computable from the index for $f$ such that $f(\omega)$ is uniformly computable from $\omega_{<n}$.
		Hence, the range of $f$ is finite and computable from $f$.\footnote{%
			Note that a computable rational-valued function is not the same as a computable function taking rational values.
			An example of the latter is $f(\alpha)=\inf \{2^{-n} \mid \alpha_{<n} = 0^n\}$.
		}

		\begin{lem}\label{lem:rationalExpectation}
			If $f$ is a computable rational-valued function, then $\E_n(f)$ is a computable rational-valued function uniformly from $f$.
		\end{lem}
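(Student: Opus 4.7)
The plan is to combine two facts recorded in the paragraph immediately preceding the lemma: (i) the range of $f$ is finite and computable from $f$, and (ii) there is some $N$, computable from the index of $f$, such that $f(\omega)$ is uniformly computable from $\omega_{<N}$. Both follow from $g$ being a total computable function into $\N$, so by compactness of ${(2^\N)}^\N$ and continuity of $g$, the function $g$ is effectively a truth-table reduction whose finite set of queried bits (and hence some bounding $N$) can be extracted uniformly from its index.

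I would split into two cases. If $n \geq N$, then $f(\omega)$ depends only on $\omega_{<n}$, so property (1) of Remark~\ref{rem:EnProperties} gives $\E_n(f) = f$, which is already a computable rational-valued function; the required index for $\E_n(f)$ is just the index for $f$.

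If $n < N$, I would expand $\E_n(f)(\omega) = \int f(\omega_{<n}{}^\frown \xi)\,d\P(\xi)$ via Definition~\ref{df:cond_exp_seq_seq}. After fixing $\omega_{<n}$, the integrand depends only on finitely many bits of $\xi$, all contained in $\xi_{<N-n}$. The integral then collapses to a finite sum, each term being the product of a value of $f$ (a rational) with a dyadic-rational probability. This rational is uniformly computable from $f$, $n$, and $\omega_{<n}$, so we can produce a computable $h:{(2^\N)}^\N\to\N$ with $\E_n(f)(\omega) = q_{h(\omega)}$ by locating the value inside the standard enumeration $\{q_k\}$.

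The main obstacle, if one should call it that, is checking uniformity rather than any deep computation: one must verify that $N$, the finite list of bit positions that $f$ actually queries, and the finite range of $f$ are all extracted computably from the index for $g$. This is exactly where the truth-table-reducibility observation from the preceding paragraph does the real work; once that is in hand, the rest is a routine bookkeeping argument over a finite cube of bits.
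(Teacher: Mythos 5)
Your proposal is correct and follows essentially the same route as the paper: both reduce $\E_n(f)(\omega)=\E(g)$ for $g(\xi)=f(\omega_{<n}{}^\frown\xi)$ to a finite rational sum by exploiting the truth-table observation that a computable rational-valued function queries only finitely many (computably locatable) bits, so that its range, the clopen level sets, and their dyadic measures are all uniformly computable. Your case split on $n\geq N$ versus $n<N$ is harmless but unnecessary, since the finite-sum computation already handles both cases.
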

		\begin{proof}
			Recall that by definition, $\E_n(f)(\omega)=\E(g)$ for some rational-valued function $g$ computable uniformly from $f$ and ${\omega}$.
			Further, from $g$ we may compute a single number in $\N$ which  codes
			\begin{enumerate}
				\item{} the finitely-many possible rational values $a_1, \dots, a_k$ that $g$ may take,
				\item{} the clopen sets $A_0, \dots, A_k$ such that $A_i = g^{-1}(q_i)$, and
				\item{} the rational values of $\mathbb{P}(A_i)$ for each $i$.
			\end{enumerate}
			Then we may compute the rational value of $\E(g) = \sum_{i=1}^k {a_i\cdot\mathbb{P}(A_i)}$.
		\end{proof}
		It is not too hard to obtain the following analogue of Theorem \ref{thisLast}.
		\begin{thm}\label{thm:DoobSequences}
			For $\omega \in {(2^\N)}^\N$, the following are equivalent:
			\begin{enumerate}
				\item{} ${\omega}$ is not Doob random.
				\item{} $(M_n (\omega))$ diverges as $n \rightarrow {\infty}$ for some computable rational-valued nonnegative martingale $(M_n)$.
				\item{} $\limsup_n M_n (\omega) = {\infty}$ for some computable rational-valued nonnegative martingale $(M_n)$.
				\item{} $\liminf_n M_n (\omega) = {\infty}$ for some computable rational-valued nonnegative martingale $(M_n)$.
			\end{enumerate}
		\end{thm}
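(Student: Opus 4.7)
The implications $(4) \Rightarrow (3) \Rightarrow (2) \Rightarrow (1)$ are routine: any computable rational-valued martingale is, in particular, a computable martingale; $\liminf_n M_n(\omega)=\infty$ forces $\lim_n M_n(\omega)=\infty$, which yields both divergence in $\mathbb{R}$ and $\limsup_n M_n(\omega)=\infty$; and $\limsup_n M_n(\omega)=\infty$ precludes convergence to a finite limit. So the only real work is in $(1)\Rightarrow(4)$.

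My plan for $(1)\Rightarrow(4)$ is to proceed in two stages. First, given a computable nonnegative martingale $(M_n)$ with $(M_n(\omega))$ divergent, I would construct a (possibly real-valued) computable nonnegative martingale $(M'_n)$ with $\liminf_n M'_n(\omega)=\infty$, by adapting the proof of Theorem~\ref{thisLast}. The classical argument has two ingredients: an upcrossing strategy across a rational interval $[a,b]$ (used when $(M_n(\omega))$ diverges by oscillation, producing a new martingale unbounded at $\omega$), and Doob's savings trick $M^\star=\sum_{k\geq 0}2^{-k}M^{\tau_k}$, where $M^{\tau_k}$ is $M$ stopped at its first passage of level $2^k$ (used to convert unboundedness at $\omega$ into $\liminf=\infty$ at $\omega$). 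Both steps rely only on properties (1)--(5) of $\mathbb{E}_n$ recorded in Remark~\ref{rem:EnProperties}, which are equally satisfied by the conditional expectation in Definition~\ref{df:cond_exp_seq_seq}, so the construction transfers essentially verbatim.

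Second, I would rationalize $(M'_n)$. For each $n$, let $N_n$ be a computable rational-valued function depending only on $\omega_{<n}$ (obtained by rounding $M'_n\wedge 2^n$ down to a multiple of $2^{-n}$), so that $N_n$ has finite range and $N_n(\omega)\to\infty$ along the distinguished sample path. Form the Doob-type corrected sequence
\[
\tilde M_n \;=\; N_n \;-\; \sum_{k=0}^{n-1}\bigl(\mathbb{E}_k(N_{k+1})-N_k\bigr) \;+\; K,
\]
where $K$ is a rational constant. By Lemma~\ref{lem:rationalExpectation}, each $\tilde M_n$ is computable rational-valued, and a direct check (using $\mathbb{E}_n$-properties (1) and (5)) gives $\mathbb{E}_n(\tilde M_{n+1})=\tilde M_n$. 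A suitable choice of $K$ makes $\tilde M_n \geq 0$.

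The main obstacle is the rationalization step: since a computable rational-valued function has finite, hence bounded, range, the approximation $N_n$ must truncate $M'_n$ at a level that grows with $n$, and the resulting corrections $\mathbb{E}_k(N_{k+1})-N_k$ must be controlled so that their cumulative sum along $\omega$ stays bounded while $N_n(\omega)$ still escapes to infinity. Reconciling these competing requirements is the technical heart of the argument; once it is done, $\liminf_n\tilde M_n(\omega)=\infty$ follows and the remainder is direct verification.
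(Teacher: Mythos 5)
Your reduction of the theorem to $(1)\Rightarrow(4)$ and your ``easy'' implications are fine, and your corrected sequence $\tilde M_n = N_n - \sum_{k<n}(\E_k(N_{k+1})-N_k)+K$ is just the unrolled form of the standard Doob correction, so that part of the machinery is right. But the order in which you deploy the two main tools creates a genuine gap. You propose to run the upcrossing and savings constructions \emph{first}, on a real-valued computable martingale $(M_n)$, and only rationalize afterwards. The problem is that both constructions are driven by stopping times of the form $\inf\{n : M_n \geq b\}$, $\inf\{n : M_n \leq a\}$, or $\inf\{n : M_n \geq 2M_{\tau_k}\}$, and for a real-valued computable martingale the comparisons $M_n(\xi)\geq b$ are undecidable (equality is the obstruction). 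Consequently the stopped/switched process you build is not a \emph{computable} martingale --- its value at stage $n$ depends on which phase you are in, and you cannot compute the phase. This is exactly why one rationalizes first: once $M_n$ takes finitely many rational values, the comparisons become decidable and the stopping times become computable elements of $\N\cup\{\infty\}$, after which upcrossing and savings go through verbatim and output rational-valued martingales. (The same issue is hidden in your appeal to ``adapting the proof of Theorem~\ref{thisLast}'': the standard proof of that theorem also passes to rational-valued martingales before doing the savings trick.)

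On the other hand, the obstacle you single out as the ``technical heart'' --- that rationalization requires truncating $M'_n$ at a level growing with $n$ and controlling the cumulative corrections --- is not actually present. The space ${(2^\N)}^\N$ is compact, so each computable $M'_n$ is automatically bounded with a computable bound and a computable modulus of uniform continuity; hence it admits a finite-range rational approximation $N_n$ with $\|N_n-M'_n\|_\infty\leq 2^{-(n+k+2)}$ \emph{globally}, no truncation needed. With that sup-norm bound, the corrections satisfy $\|\E_n(N_{n+1})-N_n\|_\infty\leq\|\E_n(N_{n+1})-M'_n\|_\infty+\|M'_n-N_n\|_\infty\leq 2^{-(n+k+2)+1}$ by property~(3) of Remark~\ref{rem:EnProperties}, so the cumulative correction is bounded by $2^{-k}$ everywhere and $\liminf$, $\limsup$, and divergence are all preserved. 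So the correct architecture is: rationalize first (this is where compactness does the work), then upcrosss, then save --- reversing your two stages. With that reversal your outline matches the intended proof.
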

		\begin{proof}
			(1) $\Rightarrow$ (2):
				Assume $(M_n (\omega))$ diverges as $n \rightarrow {\infty}$. Either
				\[
					\liminf_n M_n (\omega) = {\infty},\quad\text{or}
				\]
				\[
					{\limsup_n M_n (\omega)} - {\liminf_n M_n (\omega)} > 2^{-k} > 0
				\]
				for some $k\in\mathbb N$.
				Either way there is a $k$ such that it is enough to approximate $(M_n)$ with
				a rational-valued martingale $(L_n)$ such that $\|M_n - L_n\|_{\infty}\leq 2^{-k}$.

				Fix $k$. From the index for each $M_n$, we may effectively find a rational-valued function $N_n$ depending only on $\omega_{<n}$ such that
				$\|N_n - M_n\|_{\infty} \leq 2^{-(n+k+2)}$.
				However, $(N_n)$ may not be a martingale.
				To make it such, let $L_0 = N_0$ and recursively define $(L_n)$ as
				\[
					L_{n+1} = N_{n+1} - \E_n(N_{n+1}) + L_n.
				\]
				This function $L_n$ remains rational-valued by Lemma~\ref{lem:rationalExpectation}.

				We have that $(L_n)$ is a computable martingale as follows.
				For each $n$,
				\begin{align*}
					\E_n(L_{n+1}) &= \E_n\left(N_{n+1} - \E_n(N_{n+1}) + L_n\right) \\
					&= \E_n(N_{n+1}) - \E_n(N_{n+1}) + L_n = L_n.
				\end{align*}

				Furthermore, we show $\|L_n - M_n\|_{\infty}  \leq (2 - 2^{-n}) 2^{-(k+1)} < 2^{-k}$ by induction.
				We have $\|L_0 - M_0 \|_{\infty} \leq 2^{-(k+2)} < 2^{-(k+1)}$ and
				\begin{align*}
					\|L_{n+1} - M_{n+1}\|_{\infty}
					&= \|N_{n+1} - \E_n(N_{n+1}) + L_n - M_{n+1}\|_{\infty}\\
					&\leq \|N_{n+1} - M_{n+1}\|_{\infty}
						+ \|\E_n(N_{n+1}) - M_n\|_{\infty}
						+ \|M_n - L_n\|_{\infty}
				\end{align*}
				By assumption $\|N_{n+1} - M_{n+1}\|_{\infty} \leq 2^{-((n+1)+k+2)}$.
				By the definition of $\E_n$ as well as the definition of martingale, we have
				\begin{gather*}
					\|\E_n(N_{n+1}) - M_n\|_{\infty}
					= \|\E_n(N_{n+1}) - \E_n(M_{n+1})\|_{\infty} \\
					\qquad =  \|\E_n(N_{n+1} - M_{n+1})\|_{\infty}\leq \|N_{n+1} - M_{n+1}\|_{\infty} \leq 2^{-((n+1)+k+2)}
				\end{gather*}
				Hence by the induction hypothesis,
				\begin{align*}
					\|L_{n+1} - M_{n+1}\|_{\infty}
					&\leq 2^{-(n+k+3)}
					+ 2^{-(n+k+3)}
					+ (2-2^{-n})2^{-(k+1)}\\
					&= (2-2^{-(n+1)})2^{-(k+1)}
				\end{align*}
				as desired.

			(2) $\Rightarrow$ (3):
				We apply \emph{Doob's upcrossing method}. Assume $(M_n (\omega))$ diverges as $n \rightarrow {\infty}$, but $\limsup_n M_n (\omega) < {\infty}$.
				There must be two rationals $a$ and $b$ such that $\limsup_n M_n (\omega) > b > a > \liminf_n M_n (\omega)$.
				Define the following times recursively.
				\begin{align*}
					{\sigma}^\text{up}_0 &= 0 \\
					{\sigma}^\text{down}_{k} &= \inf\{n\in {\openbegin} {\sigma}^\text{up}_{k},   {\infty}{\closedend} : M_n \geq b\}\\
					{\sigma}^\text{up}_{k+1} &= \inf\{n\in {\openbegin} {\sigma}^\text{down}_{k}, {\infty}{\closedend} : M_n \leq a\}
				\end{align*}
				We say that the interval $[{\sigma}^\text{up}_k, {\sigma}^\text{down}_k]$ is an \emph{upcrossing} of $[a,b]$,
				and the interval $[{\sigma}^\text{up}_k, {\sigma}^\text{down}_k]$ is a \emph{downcrossing} of $[a,b]$.
				For ${\omega}$, there are infinitely many upcrossings.

				Note that for any $\xi\in {(2^\N)}^\N$, since $(M_n)$ is rational-valued,
				${\sigma}^\text{up}_k(\xi)$ is computable\footnote{That is: if it is finite, we can compute it; if it is infinite, we can determine that it is above each finite number.}
				as an element of the compactification $\N \cup \{{\infty}\}$ of $\N$
				uniformly from $a$, $b$, $k$, and ${(M_m(\xi))}_{m\leq n}$ where $n={\sigma}^\text{up}_k(\xi)$.
				The same holds for ${\sigma}^\text{down}_k(\xi)$.

				We define our new martingale recursively as follows.
				The idea is that $N$ increases on the upcrossings and is constant on the downcrossings. Let $N_0 = M_0$ and
				\[
					N_n =
					\begin{cases}
						({M}_n - {M}_{{\sigma}^\text{up}_k}) + N_{{\sigma}^\text{up}_k}
						& n \in {\openbegin} {\sigma}^\text{up}_k, {\sigma}^\text{down}_k{\closedend}\\
						N_{{\sigma}^\text{down}_k}
						& n \in {\openbegin} {\sigma}^\text{down}_k, {\sigma}^\text{up}_{k+1}{\closedend}
					\end{cases}.
				\]
				It is easy to see that each $N_n$ is rational-valued uniformly in $n$.
				
				Next we shall show that $(N_n)$ is a computable nonnegative martingale.
				Fix $n$.
				We can simulate the cases in the definition by multiplying by indicator functions.
				Let $\mathbf{1}_\text{up} (\xi)$ indicate that
				$n+1 \in {\openbegin} {\sigma}^\text{up}_k(\xi), {\sigma}^\text{down}_k(\xi){\closedend}$ for some $k$
				and let $\mathbf{1}_\text{down}(\xi)$ indicate that ${n+1} \in {\openbegin} {\sigma}^\text{down}_k, {\sigma}^\text{up}_{k+1}{\closedend}$.
				Also, let $k(\xi)$ be the corresponding $k$ for each case.
				Notice, these functions depend only on $\xi_{<n}$.
				(Even though, these functions talk about events at time $n+1$, their values are determined at time $n$.)
				Then by the properties of Remark~\ref{rem:EnProperties}, we have
				\begin{align*}
					\E_n {N_{n+1}}
					&= \E_n \left(\mathbf{1}_\text{up} \cdot \left(({M}_{n+1} - {M}_{{\sigma}^\text{up}_k}) + N_{{\sigma}^\text{up}_k} \right)
					+ \mathbf{1}_\text{down} \cdot N_{{\sigma}^\text{down}_k} \right)\\
					&=  \mathbf{1}_\text{up} \cdot \left( \left(\E_n({M}_{n+1}) - {M}_{{\sigma}^\text{up}_k} \right) + N_{{\sigma}^\text{up}_k} \right)
					+ \mathbf{1}_\text{down} \cdot N_{{\sigma}^\text{down}_k}\\
					&= \mathbf{1}_\text{up} \cdot \left( \left({M}_{n} - {M}_{{\sigma}^\text{up}_k} \right) + N_{{\sigma}^\text{up}_k} \right)
					+ \mathbf{1}_\text{down} \cdot N_{{\sigma}^\text{down}_k}\\
					&= N_n.
				\end{align*}
				The last line follows by using the definition of $N_n$ and considering the following four cases individually:
				$n > {\sigma}^\text{up}_k, n= {\sigma}^\text{up}_k, n > {\sigma}^\text{down}_k, n = {\sigma}^\text{up}_k$.
				Therefore $(N_n)$ is a computable martingale.

				One can see that the martingale is nonnegative by showing (using induction) that on the upcrossing phase, $N_n \geq M_n$.

				Now we show that $N_{{\sigma}^\text{down}_k(\omega)}(\omega) \geq k(b -a)$ for all $k$.
				First of all ${\sigma}^\text{down}_k(\omega)$ is finite for all $k$ since there are infinitely many upcrossings of $M_n(\omega)$.
				Then using the definitions of $(N_n)$, ${\sigma}^\text{down}_k$ and ${\sigma}^\text{up}_k$, it follows by induction that
				\begin{align*}
					N_{{\sigma}^\text{down}_{k+1}(\omega)}(\omega)
					&= \left({M}_{{\sigma}^\text{down}_{k+1}(\omega)} - {M}_{{\sigma}^\text{up}_{k+1}(\omega)} \right) + N_{{\sigma}^\text{up}_{k+1}(\omega)} \\
					&\geq (a-b) + k(a-b) = (k+1)(a+b).
				\end{align*}
				Hence $\limsup_n N_n(\omega) = {\infty}$.

			(3) $\Rightarrow$ (4):
				Assume $\limsup_n M_n (\omega)= {\infty}$.
				We will apply the ``savings property method'': every time our martingale doubles in value, we keep half of the capital and only bet with the remaining half.
				Let $\tau_0 = 0$ and $\tau_{k+1} = \inf\{n > \tau_k: M_n \geq 2M_{\tau_k} \}$.
				For any $\xi \in {(2^\N)}^\N$, since $M_n$ is rational-valued, $\tau_k\in\mathbb N\cup\{{\infty}\}$ is computable from ${(M_m)}_{m \leq n}$ where $n = \tau_k$.

				Define a new martingale $N_n$ by $N_0 = M_0$ and
				\[ N_n =
					\frac12 N_{\tau_k} +
					\frac12 N_{\tau_k} \frac{M_n}{M_{\tau_k}} \quad \text{when } n \in {\openbegin} {\tau}_k, {\tau}_{k+1}{\closedend}.
				\]
				It is easy to see that each $N_n$ is rational-valued uniformly in $n$.
				
				Now we show $(N_n)$ is a computable nonnegative martingale.
				The proof is similar to the ``upcrossing method'' above.
				Fix $n$.
				Let $k(\xi)$ be such that $n+1 \in {\openbegin}\tau_k, \tau_{k+1}{\closedend}$.
				This function depends only on $\xi_{<n}$.
				By the properties of Remark~\ref{rem:EnProperties}, we have
				\begin{align*}
					\E_n {N_{n+1}}
					&= \E_n \left(\frac12 N_{\tau_k} +
					\frac12 N_{\tau_k} \frac{M_n}{M_{\tau_k}} \right)\\
					&= \frac12 N_{\tau_k} +
					\frac12 N_{\tau_k} \frac{\E_n(M_{n+1})}{M_{\tau_k}}\\
					&= \frac12 N_{\tau_k} +
					\frac12 N_{\tau_k} \frac{M_n}{M_{\tau_k}}\\
					&= N_n.
				\end{align*}
				The last line follows by using the definition of $N_n$ and considering the following two cases individually: $n > \tau_k, n= \tau_k$.
				Therefore $(N_n)$ is a computable martingale.

				One can see that the martingale in nonnegative by induction.

				Now we show that $N_{\tau_{k+1}(\omega)}(\omega) \geq \frac{3}{2} N_{\tau_k(\omega)}(\omega) \geq {(\frac{3}{2})}^k M_0 (\omega)$ for all $k$.
				First of all $\tau_k(\omega)$ is finite for all $k$ since $M_n(\omega)$ is unboundedly large.  Then using the definitions of $(N_n)$ and $\tau_k$, we have
				\begin{align*}
					N_{\tau_{k+1} (\omega)}(\omega)
					&= \frac{1}{2} N_{\tau_k (\omega)} +
					\frac{1}{2} N_{\tau_k (\omega)} \frac{M_{\tau_{k+1} (\omega)}}{M_{\tau_k (\omega)}} (\omega) \\
					&\geq \frac{1}{2} N_{\tau_k (\omega)} (\omega) +
					2 \cdot \frac{1}{2} N_{\tau_k (\omega)} (\omega) \\
					&= \frac{3}{2} N_{\tau_k (\omega)} (\omega).
				\end{align*}
				Therefore, $N_{\tau_{k+1} (\omega)}(\omega) \geq {(\frac {3}{2})}^k \cdot M_0 (\omega)$ by induction.
				Hence
				\[
					\liminf_n N_n(\omega) = {\infty}.
				\]
			(4) $\Rightarrow$ (1):
				Immediate.
		\end{proof}
		\subsection{Almost everywhere computable martingales}
			Up till now, it has sufficed to work with total computable martingales.
			However, this will not be the case when we transition to Brownian motion in Section~\ref{sec:Brownian}.
			This subsection will culminate in a characterization of Doob randomness (under the assumption of Schnorr randomness) in terms of a.e.\ computable martingales in Theorem \ref{thm:aeMart} below.
			The proofs in this subsection are general enough that they will also apply to Brownian motion in Theorem~\ref{thm:DoobS} and Lemma~\ref{lem:ae_comp_mart}.
			\begin{df}[{\cite[Definition~5.2]{MR2519075}}, see also \cite{2012arXiv1203.5535R}]\label{df:aeComp}
				An \emph{a.e.\ computable function}  $f:{(2^\N)}^\N \rightarrow \mathbb{R}$ is a computable function $f:A\subseteq {(2^\N)}^\N \rightarrow \mathbb{R}$
				where $A\subseteq{(2^\N)}^\N$ is measure-one and $\Pi^0_2$.  (Equivalently,
				$f$ is a partial computable function given by an partial computable map $g$ which maps $\omega$ to a name for $f(\omega)$.
				In this case the domain of $f$ is the $\Pi^0_2$ domain of $g$.)
			\end{df}
			\begin{df}\label{df:L1comp}
				A function $f:{(2^\N)}^\N \rightarrow \mathbb{R}$ is \emph{$L^1$-computable} if
				there is a computable sequence of bounded computable functions $(g_k)$ such $\|f-g_k\|_{L^1} \leq 2^{-k}$.
			\end{df}
			\begin{df}
				A function $f:{(2^\N)}^\N \rightarrow \mathbb{R}\cup\{+\infty\}$ is
				\emph{a.e.\ lower semi-computable} if $f$ is
				the supremum
				of a computable sequence of a.e.\ computable functions $f_n$,
				with
				\[
				 	\text{domain}(f)=\bigcap_n \text{domain}(f_n).
			 	\]
			\end{df}
			Note that the domain of an a.e.\ lower semi-computable function is the intersection of a uniformly $\Pi^0_2$ sequence of sets and therefore $\Pi^0_2$.
			\begin{pro}\label{pro:aeCompCondExp}
				Assume $f$ is nonnegative and a.e.\ computable.  Then $\mathbb{E}_n f$ is a.e.\ lower semicomputable in $n$ and $f$.
				Moreover, if $\mathbb{E} f$ is computable, then $\mathbb{E}_n f$ is $L^1$-computable.
				Further, if $f$ is bounded by a constant $d$, then $\mathbb{E}_n f$ is a.e.\ computable uniformly in $m$, $d$ and $f$.
			\end{pro}
			\begin{proof}
				First, the expectation of $\mathbb{E} f$ is lower semicomputable in general and it is computable if $f$ is bounded \cite[Proposition~4.3.1, Corollary 4.3.2]{MR2519075}.
				The a.e.\ computable and a.e.\ lower semicomputable results follow from the definition of $\mathbb{E}_n f$ and from the fact that
				$f \mapsto f(\omega_{<n}{}^\frown\xi)$ is a.e.\ computable (as a map from a.e.\ computable functions to a.e.\ computable functions).

				As for $L^1$-computability, if $\mathbb{E} f$ is computable and $f$ is nonnegative, then $f$ is $L^1$-computable \cite{Hoyrup:2009ly}.
				Let $(g_k)$ be the sequence of bounded computable functions from the definition of $L^1$-computability.
				Then $h_k := \mathbb{E}_n g_k$ is computable uniformly from $g_k$ (and its bound).
				Further, we have by Remark~\ref{rem:EnProperties} that $\|\mathbb{E}_n f - h_k\|_{L^1} \leq \|f - g_k \|_{L^1} \leq 2^{-k}$.
			\end{proof}
			We now come to a lemma that will be used on three separate occasions, a very useful generalization of \cite[Lemma 3.11]{MR3123251}.
			\begin{lem}\label{lem:avoidPoints}
				Given (the name of) an a.e.\ computable function $f$ (respectively, (the name of) a sequence of a.e.\ computable functions $(f_i)$),
				one can effectively find a dense sequence of points ${\{a_n\}}_{n\in \mathbb{N}} \subseteq \mathbb{R}$ such that $\mathbb{P}\{\omega \mid f(\omega) = a_n\} = 0$
				(respectively, $\mathbb{P}\{\omega \mid \forall i\  f_i(\omega) = a_n \} = 0$) for all $n$.
				This implies that for any such $a_n$, the measure $\mathbb{P}\{\omega \mid f_i (\omega) \leq a_n\}$ is uniformly computable from $n$ and $i$.
				Moreover, this result also holds for $L^1$-computable functions in place of a.e.\ computable functions.
			\end{lem}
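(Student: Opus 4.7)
The plan is an effective Baire category argument. Focus first on a single a.e.\ computable $f$. The key preparation is to verify that $F(t) := \mathbb{P}\{f \leq t\}$ is upper semicomputable and $G(t) := \mathbb{P}\{f < t\}$ is lower semicomputable, both uniformly in $f$ and $t$. Writing $A$ for the $\Pi^0_2$ full-measure domain of $f$, the function $f$ is continuous on $A$, so $\{\omega \in A : f(\omega) < t\} = A \cap W$ for some effectively open $W$ obtained uniformly from $f$ and $t$; since $\mathbb{P}(A^c) = 0$ this gives $G(t) = \mathbb{P}(W)$, which is lower semicomputable. The argument for $F$ is symmetric, using an effectively closed set. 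Consequently $\Phi(t) := F(t) - G(t) = \mathbb{P}\{f = t\}$ is upper semicomputable uniformly in $t$ and $f$.

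For each $k$ set $U_k := \{t \in \mathbb{R} : \Phi(t) < 2^{-k}\}$. This is effectively open uniformly in $k$, and dense in $\mathbb{R}$ because the set of atoms with $\Phi \geq 2^{-k}$ has size at most $2^k$ (jump masses sum to at most $1$). Given any rational open interval $J$, build by recursion nested open rational sub-intervals $J \supseteq J_1 \supseteq J_2 \supseteq \cdots$ with $\overline{J_k} \subseteq J_{k-1} \cap U_k$ and diameter at most $2^{-k}$: using the $\Sigma^0_1$ expansion $U_k = \bigcup_m V_{k,m}$ into basic open rational intervals, search through pairs $(k,m)$ until some $V_{k,m}$ is found sitting inside $J_{k-1}$ of small enough diameter, then shrink it to a closed rational sub-interval. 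Density of $U_k$ in $J_{k-1}$ guarantees the search terminates. The unique point $a \in \bigcap_k \overline{J_k}$ then satisfies $\Phi(a) \leq 2^{-k}$ for every $k$, i.e., $\mathbb{P}\{f = a\} = 0$. Running this on each rational open interval $J^{(n)}$ in turn produces the required dense sequence $\{a_n\}$, effectively from $f$.

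For the sequence version, replace $U_k$ by the uniformly effectively open, dense family $U_{i,k} := \{t : \Phi_i(t) < 2^{-k}\}$ and shrink through a recursive bijection $\mathbb{N} \to \mathbb{N}^2$ enumerating all pairs $(i,k)$. The resulting point $a$ then lies in every $U_{i,k}$, so $\mathbb{P}\{f_i = a\} = 0$ for every $i$ (which is the intended reading of the ``$\forall i$'' condition in the statement, and the form that the final claim requires). The final computability assertion is then immediate: at such an $a_n$ the upper semicomputable $F_i(a_n)$ and lower semicomputable $G_i(a_n)$ coincide, and their common value $\mathbb{P}\{f_i \leq a_n\}$ is therefore computable uniformly in $i$ and $n$. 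The $L^1$-computable case reduces to the a.e.\ computable case via the standard fact that an $L^1$-computable function admits an a.e.\ computable representative, obtained uniformly from its $L^1$-name by taking a fast Cauchy subsequence of the bounded computable approximations.

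The main delicacy lies in the nested-interval step: one must effectively exhibit, inside $J_{k-1}$, an open rational sub-interval whose closure is contained in $U_k$. This is what upper semicomputability of $\Phi$ buys us, via the computable enumeration of basic open intervals $V \subseteq U_k$, inside each of which we can explicitly compute a strictly smaller closed rational sub-interval; density of $U_k$ in $J_{k-1}$ is then what guarantees at least one such $V$ lies inside $J_{k-1}$ and thus witnesses the recursion step.
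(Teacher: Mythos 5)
Your argument for the a.e.\ computable case is correct and is essentially the paper's own proof: both reduce the problem to exhibiting, uniformly in $k$, a dense $\Sigma^0_1$ set $U_k$ of reals that fail to be atoms of mass $\geq 2^{-k}$ for the push-forward measures, and then run an effective Baire category argument through the $U_k$ inside each rational interval. (The paper builds $U_k$ as a union of rational intervals whose closures have small push-forward measure, citing computability of the push-forward measure; you get the same sets from semicomputability of the distribution functions $F$ and $G$. That difference is cosmetic.)

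The genuine gap is in the final clause. Your reduction of the $L^1$-computable case rests on the claim that an $L^1$-computable function admits an a.e.\ computable representative, and this is false. Take $\Omega = 2^\N$ with the fair-coin measure and let $C$ be a $\Pi^0_1$ set with empty interior whose measure is computable and positive (the complement of a dense effectively open set of computable measure at most $1/2$). Then $f = \mathbf{1}_C$ is $L^1$-computable, since $C$ can be approximated from outside by clopen sets of measure within $2^{-k}$ of $\mathbb{P}(C)$. But if $h$ were a.e.\ computable with $h = \mathbf{1}_C$ a.e., then $\{h > 1/2\}$ would equal $A \cap W$ for an open $W$ and the full-measure domain $A$, forcing $\mathbb{P}(W \triangle C) = 0$; since every nonempty open subset of $2^\N$ has positive measure, $W \setminus C = \varnothing$, so $W$ would be a nonempty open subset of $C$, contradicting that $C$ has empty interior. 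Passing to a fast $L^1$-Cauchy subsequence of the $g_k$ only yields a function computable on a $\Sigma^0_2$ (layerwise) domain, not a $\Pi^0_2$ one. The repair is routine and avoids the reduction entirely: for $L^1$-computable $f$ with witnesses $\|f - g_k\|_{L^1}\leq 2^{-k}$, Markov's inequality gives $\mathbb{P}\{g_k < t-\varepsilon\} - 2^{-k}/\varepsilon \leq \mathbb{P}\{f<t\}$ with equality of the supremum over rational $k,\varepsilon$, so $t \mapsto \mathbb{P}\{f < t\}$ is still lower semicomputable and, symmetrically, $t\mapsto \mathbb{P}\{f\leq t\}$ is upper semicomputable (equivalently, the push-forward measure of an $L^1$-computable function is computable, which is how the paper handles this case); the rest of your argument then applies verbatim.
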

			\begin{proof}
				Consider a computable sequence $(f_i)$ of a.e.\ computable functions (or $L^1$-computable functions).
				Consider the push-forward measures $\mu_i$ given by
				\[
					\mu_i(B)=\mathbb P(\{\omega \mid f_i(\omega)\in B\}).
				\]
				These are computable from $f_i$ \cite{Hoyrup:2009ly}. Fix $k$.
				Now, enumerate all rationals intervals $[q_i,r_i]$ such that
				$\max\{
					\mu_0([q_i,r_i]), \dots, \mu_k([q_i,r_i])
				\}<2^{-k}$.
				This is enumerable, since the measure of a $\Pi^0_1$ set is upper semicomputable \cite[Proposition 4.3.1]{MR2519075}.
				Now let $U_k = \bigcup_i (q_i,r_i)$.  The sequence $(U_k)$ is a uniformly $\Sigma^0_1$ sequence of dense sets.
				Therefore,
				one can compute a dense sequence of points ${a_n}$ in the intersection $\bigcap_k U_k$.%
				\footnote{%
					Starting with any interval $I_0 \subseteq \mathbb{R}$, since each $U_k$ is dense, we can effectively choose a decreasing sequence of intervals
					$I_k = (a_i - \epsilon_k, a_i + \epsilon_k) \subseteq U_k$ such that $I_k \subseteq (a_k - \epsilon_k/2, a_k + \epsilon_k/2)$.  Let $a =\lim_k a_k$.
					In a sense this amounts to the proof of the Baire category theorem being effective; see \cite{MR2519075, 2012arXiv1209.5478M} for similar arguments.
				}
				Notice that any such $a_n$ cannot be an atom of the push-forward measure $\mu_i$,
				and therefore $\mathbb{P}\{\omega \mid f_i(\omega) = a_n\} = 0$.

				To see that $\mathbb{P}\{\omega \mid f_i (\omega) \leq a_n\}$ is computable, just notice that since $a_n$ is not an atom of $\mu_i$ we have
				\[
					\mathbb{P}\{\omega \mid f_i (\omega) \leq a_n\}
					=\mu_i ({\openbegin} {-\infty},a_n{\closedend}) = \mu_i ((-{\infty},a_n)).
				\]
				Since $\mu_i ({\openbegin} {-\infty},a_n{\closedend})$ is the measure of a $\Pi^0_1$ set
				it is upper semicomputable,
				while $\mu_i ({\openbegin} {-\infty},a_n{\openend})$ is the measure of a $\Sigma^0_1$ set
				and is lower semicomputable \cite[Proposition~4.3.1]{MR2519075}.
				Therefore, $\mathbb{P}\{\omega \mid f_i (\omega) \leq a_n\}$ is computable.
			\end{proof}
			\begin{lem}\label{lem:approxAeComp}
				Given $\epsilon > 0$, $\delta > 0$, $n\in \mathbb{N}$, and a nonnegative a.e.~computable function $f$ with a computable expectation, we can find
				a computable function $g$,
				a $\Sigma^0_1$ set $U$ of computable measure, and
				a number $d$ (all uniformly computable from the names for $\varepsilon$, $\delta$, $n$, $f$, and $\mathbb{E} f$) such that
				$\mathbb{P}(U) < \delta$,
				$\| g \|_{\infty} < d$, and
				$|\E_m(f) - \E_m(g)| \leq \varepsilon$ outside of $U$ for all $m \leq n$.
			\end{lem}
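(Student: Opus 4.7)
The plan is to exploit $L^1$-computability: since $f \geq 0$ has computable expectation, by Proposition~\ref{pro:aeCompCondExp} and its cited background, $f$ itself is $L^1$-computable, so I obtain a computable sequence $(g_k)_{k \geq 0}$ of bounded computable functions with $\|f - g_k\|_{L^1} \leq 2^{-k}$ together with computable bounds $\|g_k\|_\infty \leq d_k$. I shall take $g := g_K$ and $d := d_K$ for a sufficiently large $K$ to be chosen uniformly from $\epsilon$, $\delta$, $n$; the set $U$ will catch the oscillations of the conditional-expectation sequence $\E_m(g_{K+j})$ as $j \to \infty$, which by $L^1$-contractivity of $\E_m$ converges to $\E_m(f)$ in $L^1$.

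Fix a computable sequence $a_j > 0$ of the form $a_j = c \cdot 2^{-j/2}$ with a constant $c$ chosen so that $\sum_j a_j < \epsilon/2$. By Markov's inequality applied to $\E_m(g_{K+j}) - \E_m(g_{K+j+1})$, whose $L^1$-norm is at most $2^{-K-j+1}$, the set $\{|\E_m(g_{K+j}) - \E_m(g_{K+j+1})| > a_j\}$ has measure at most $2^{1-K-j/2}/c$; summing over $j \geq 0$ and $m \leq n$ yields a bound of order $(n+1)\,2^{-K}/c$, and I pick $K$ large enough that this total is below $\delta/2$. Next, applying Lemma~\ref{lem:avoidPoints} to the computable family $\bigl(|\E_m(g_{K+j}) - \E_m(g_{K+j+1})|\bigr)_{m \leq n,\, j \geq 0}$, I perturb each $a_j$ slightly upward to a nearby $\tilde a_j$ so that $\P\{|\E_m(g_{K+j}) - \E_m(g_{K+j+1})| = \tilde a_j\} = 0$ for all relevant $m, j$, while $\sum_j \tilde a_j$ remains below $\epsilon$. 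Setting $U_{j,m} := \{|\E_m(g_{K+j}) - \E_m(g_{K+j+1})| > \tilde a_j\}$ and $U := \bigcup_{j,m} U_{j,m}$, we have that $U$ is $\Sigma^0_1$ and outside $U$ the telescoping estimate gives $|\E_m(g_{K+J}) - \E_m(g)| \leq \sum_{j < J} \tilde a_j < \epsilon$ for every $J$; since $\E_m(g_{K+j}) \to \E_m(f)$ in $L^1$, the limit equals $\E_m(f)$ almost everywhere, yielding $|\E_m(f) - \E_m(g)| \leq \epsilon$ a.e.\ outside $U$.

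The main obstacle will be showing that $\P(U)$ is computable and not merely lower semicomputable (the latter being automatic for $\Sigma^0_1$ sets). I resolve this by approximating $U$ via finite subunions $U_N := \bigcup_{j \leq N,\, m \leq n} U_{j,m}$. Writing $U_N = \{H_N > 0\}$ with computable $H_N := \max_{j \leq N,\, m \leq n}\bigl(|\E_m(g_{K+j}) - \E_m(g_{K+j+1})| - \tilde a_j\bigr)$, the choice of $\tilde a_j$ via Lemma~\ref{lem:avoidPoints} ensures that $0$ is not an atom of the push-forward of $H_N$; hence $\P\{H_N \leq 0\} = \P\{H_N < 0\}$ is simultaneously upper semicomputable (as the measure of a $\Pi^0_1$ set) and lower semicomputable (as the measure of a $\Sigma^0_1$ set), so $\P(U_N) = 1 - \P\{H_N \leq 0\}$ is computable. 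Combining with the computable tail bound $\P(U) \leq \P(U_N) + \sum_{j > N,\, m \leq n} \P(U_{j,m})$, whose right-hand side is a computable real decreasing to $\P(U)$ as $N \to \infty$, gives upper semicomputability, and hence computability, of $\P(U)$.
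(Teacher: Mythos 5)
Your architecture differs from the paper's. The paper introduces the truncation $h=\min\{f,d\}$ and applies Markov's inequality once per $m$ to the two differences $\E_m f-\E_m h$ and $|\E_m h-\E_m g|$ with a single threshold $a\in(\varepsilon/4,\varepsilon/2)$ chosen via Lemma~\ref{lem:avoidPoints}; this makes $U$ a \emph{finite} union of $\Sigma^0_1$ sets of computable measure, so no further work is needed. You instead telescope along the whole approximating sequence $(g_k)$ with geometrically decaying thresholds, which forces $U$ to be an infinite union; your argument that $\P(U)$ is nevertheless computable (finite subunions with atom-avoiding cutoffs plus a computable tail bound) is correct and is a legitimate, if more laborious, alternative.

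There is, however, one genuine gap. Outside $U$ your telescoping shows that $\bigl(\E_m(g_{K+J})\bigr)_J$ is Cauchy and stays within $\varepsilon$ of $\E_m(g)$, but the identification of its pointwise limit with $\E_m(f)$ is justified only by $L^1$-convergence, hence only almost everywhere. The lemma's conclusion must hold pointwise outside $U$ (on the domain of the relevant functions): in Theorem~\ref{thm:aeMart} it is invoked at one specific Schnorr random $\omega$, and your unexamined null set is not presented in a form (a null $\Sigma^0_2$ set, or the infinity-set of a nonnegative a.e.\ lower semicomputable function with computable integral) that guarantees it misses $\omega$. The paper sidesteps this by putting the discrepancy between $\E_m f$ and $\E_m h$ directly into $U$ via Markov's inequality, using that $\E_m f-\E_m h\ge 0$ is a.e.\ lower semicomputable, so that $\{\E_m f-\E_m h>a\}$ has a natural $\Sigma^0_1$ realization. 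Your proof is repairable in the same spirit: adjoin to $U$ the $\Sigma^0_1$ sets $\{\E_m(|g_{K+j}-f|)>b_j\}$ for a summable, atom-avoiding sequence $b_j\rightarrow 0$ (Markov gives measure at most $\|g_{K+j}-f\|_{L^1}/b_j$), after which $\E_m(g_{K+j})\rightarrow\E_m(f)$ pointwise on the domain outside the enlarged $U$; alternatively, note that your exceptional set is contained in $\{\sum_J\E_m(|g_{K+J}-f|)=\infty\}$ and invoke the integral-test characterization of Schnorr randomness that the paper uses elsewhere. As stated, though, ``a.e.\ outside $U$'' is weaker than what the lemma asserts and what its application requires.
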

			\begin{proof}
				A nonnegative a.e.\ computable function $f$ with a computable expectation is $L^1$-computable \cite{Hoyrup:2009ly}.
				By the definition of $L^1$-computable, we can find a nonnegative bounded computable function $g$ with a bound $d$ such that
				$\|f-g\|_{L^1} \leq \delta \cdot \epsilon/8(n+1)$.
				Unfortunately, since $f$ may not be bounded, $\mathbb{E}_m f$ may not be a.e.\ computable.
				Instead we have that $\mathbb{E}_m f$ is $L^1$-computable and a.e.\ lower semicomputable (Proposition~\ref{pro:aeCompCondExp}).
				Let $h = \min\{f,d\}$.
				Since $h$ and $g$ are bounded by $d$, we have $\mathbb{E}_m g$ and $\mathbb{E}_m h$ are a.e.\ computable.
				Therefore $|\mathbb{E}_m h - \mathbb{E}_m g|$ and $|\mathbb{E}_m f - \mathbb{E}_m h| = \mathbb{E}_m f - \mathbb{E}_m h$
				are a.e.\ lower semicomputable and $L^1$-computable.

				By Remark~\ref{rem:EnProperties}, for all $m$, $\| \mathbb{E}_m h - \mathbb{E}_m g\|_{L^1} \leq \|h-g\|_{L^1} \leq \|f-g\|_{L^1}$ and
				$\| \mathbb{E}_m f - \mathbb{E}_m h\|_{L^1} \leq \|f-h\|_{L^1} \leq \|f-g\|_{L^1}$.
				Let $a \in \{a_n\}$ be from Lemma~\ref{lem:avoidPoints} such that $\varepsilon/4 < a < \varepsilon/2$.
				We have by Markov's inequality that
				\begin{align*}
					\mathbb{P}\underbrace{\{\xi \mid (\mathbb{E}_m f(\xi) - \mathbb{E}_m h(\xi)) > a \}}_{:=V^0_m}
					&\leq \frac{\|f-g\|_{L^1}}{\varepsilon/4}
					\leq \frac{\delta/2}{n+1},\quad\text{and}\\
					\mathbb{P}\underbrace{\{\xi \mid |\mathbb{E}_m h(\xi) - \mathbb{E}_m g(\xi)| > a \}}_{:=V^1_m}
					&\leq \frac{\|f-g\|_{L^1}}{\varepsilon/4}
					\leq \frac{\delta/2}{n+1}.
				\end{align*}
				By Lemma~\ref{lem:avoidPoints}, the set $V^0_m$ has computable measure, but we would also like $V^0_m$ to be $\Sigma^0_1$.
				This depends on which  undefined points of $\mathbb{E}_m f - \mathbb{E}_m h$ we include in $V^0_m$.
				Define $V^0_m$ as the union of all basic open sets $B$ such that the algorithm for $\mathbb{E}_m f - \mathbb{E}_m h$ ensures that
				$(\mathbb{E}_m f- \mathbb{E}_m g)(\xi) > a$ for all $\xi \in B$ such that $\mathbb{E}_m f - \mathbb{E}_m h$ exists.
				Since $(\mathbb{E}_m f- \mathbb{E}_m g)$ is a.e.\ lower semicomputable, by this definition, $V^0_m$ is $\Sigma^0_1$.
				The same can be done with $V^1_m$.

				Finally, let $U = \bigcup_{m\leq n} V^0_m \cup V^1_m$.
				This set is $\Sigma^0_1$ with a computable measure satisfying the desired properties.
			\end{proof}
			\begin{df}\label{df:aeMart}
				A martingale $ (M_n) $ is an \emph{a.e.\ computable martingale} if $M_n (\omega)$ is a.e.\ computable uniformly from $n$ and $\omega_{<n}$.
			\end{df}
			Similar to Proposition~\ref{pro:compAdapted}, a martingale $(M_n)$ is an a.e.\ computable martingale if and only if
			$M_n$ is a.e.\ computable, uniformly in $n$.
			\begin{thm}\label{thm:aeMart}
				Let $\omega \in {(2^\N)}^\N$ be Schnorr random. Then the following are equivalent:
				\begin{enumerate}
				\item{} $(M_n (\omega))$ diverges as $n \rightarrow {\infty}$ for some computable nonnegative martingale $(M_n)$
					with a computable bound on the growth $d:\mathbb{N} \rightarrow \mathbb{R}$ such that $\| M_n \|_{\infty} \leq d(n)$.
				\item{} ${\omega}$ is not Doob random.
				\item{} $(M_n (\omega))$ diverges as $n \rightarrow {\infty}$ for some a.e.\ computable nonnegative martingale $(M_n)$.
				\end{enumerate}
			\end{thm}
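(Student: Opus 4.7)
The implications $(1) \Rightarrow (2)$ and $(2) \Rightarrow (3)$ are essentially immediate; moreover, $(2) \Rightarrow (1)$ follows directly from Theorem~\ref{thm:DoobSequences}, since the computable rational-valued nonnegative martingale it produces takes only finitely many values at each stage $n$ (listable from its index), yielding a computable bound $d(n)$ on $\|M_n\|_\infty$. So the content lies in $(3) \Rightarrow (2)$, which is where the Schnorr randomness hypothesis is used.

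The plan for $(3) \Rightarrow (2)$ is as follows. Given an a.e.\ computable nonnegative martingale $(M_n)$ with $(M_n(\omega))$ divergent, I would first apply Lemma~\ref{lem:approxAeComp} at each stage $n$ with parameters $\varepsilon_n = \delta_n = 2^{-n-1}$ to obtain a uniformly computable sequence of bounded computable approximations $(g_n)$ and a uniform sequence of $\Sigma^0_1$ sets $(U_n)$ of computable measure $\leq 2^{-n-1}$ such that $|\E_m(M_n) - \E_m(g_n)| \leq 2^{-n-1}$ off $U_n$ for all $m \leq n$. Since $\sum_n \P(U_n)$ is bounded by a computable number, a standard Schnorr-test packaging of the $(U_n)$ will give some $n_0$ with $\omega \notin U_n$ for all $n \geq n_0$.

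Next, I will apply the martingale correction trick from the proof of Theorem~\ref{thm:DoobSequences}, but at the level of differences. Set $\tilde{D}_k := \E_k(g_k) - \E_{k-1}(g_k)$, which are computable bounded functions with $\E_{k-1}(\tilde{D}_k) = 0$, and define $L_n := M_0 + \sum_{k=1}^n \tilde{D}_k$. Then $(L_n)$ is a computable martingale with computable bound $\|L_n\|_\infty \leq M_0 + 2\sum_{k \leq n} \|g_k\|_\infty$. Using the martingale identities $\E_n(M_n) = M_n$ and $\E_{n-1}(M_n) = M_{n-1}$, the approximation bound off $U_n$ will translate to $|\tilde{D}_k(\omega) - (M_k(\omega) - M_{k-1}(\omega))| \leq 2^{-k}$ for $k \geq n_0$, so $|L_n(\omega) - M_n(\omega)|$ will stay bounded along $\omega$, whence $(L_n(\omega))$ diverges along with $(M_n(\omega))$.

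The main obstacle will be that $(L_n)$ is \emph{a priori} signed rather than nonnegative, so it does not directly witness $(2)$. My plan for this step is first to round $L_n$ to a computable rational-valued martingale (as in the proof of $(1) \Rightarrow (2)$ of Theorem~\ref{thm:DoobSequences}) so that the stopping times become computable, and then to apply the upcrossing/savings constructions of $(2) \Rightarrow (3) \Rightarrow (4)$ of that theorem. The Schnorr randomness of $\omega$ is exactly what makes this work: the bound $|L_n(\omega) - M_n(\omega)| = O(1)$ along $\omega$ ensures that $(L_n(\omega))$ is bounded below and crosses some rational threshold infinitely often, so Doob's upcrossing construction produces a computable nonnegative martingale with $\limsup = \infty$ at $\omega$, establishing $(2)$.
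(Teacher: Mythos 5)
Your treatment of $(1)\Rightarrow(2)$, $(2)\Rightarrow(3)$, and the shortcut $(2)\Rightarrow(1)$ via Theorem~\ref{thm:DoobSequences} is fine: a computable rational-valued martingale has finite computable range at each stage, so its sup-norm is a computable bound, and this direction needs no Schnorr hypothesis. The core of your $(3)\Rightarrow(2)$ is essentially the paper's proof of $(3)\Rightarrow(1)$: your telescoping sum $L_n = M_0 + \sum_{k\le n}\bigl(\E_k(g_k)-\E_{k-1}(g_k)\bigr)$ is literally the paper's recursion $L_{n+1}=N_{n+1}-\E_n(N_{n+1})+L_n$ with $N_n=\E_n(g_n)$, and your single total Solovay test in place of the paper's doubly indexed family $(U^k_n)$ only changes the constant in $|L_n(\omega)-M_n(\omega)|=O(1)$, which is all that matters for transferring divergence.

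The gap is in your last step. The upcrossing construction of Theorem~\ref{thm:DoobSequences} yields a \emph{nonnegative} martingale only because its input is globally nonnegative: during an upcrossing phase one has $N_n=(L_n-L_{\sigma^{\mathrm{up}}_k})+N_{\sigma^{\mathrm{up}}_k}$, so $N$ inherits every downward excursion of $L$ that occurs after level $a$ is hit and before level $b$ is reached. Your $(L_n)$ is bounded below only along $\omega$ (more precisely, off the bad sets $U_n$); elsewhere it is bounded below only by the growing quantity $-\bigl(|M_0|+2\sum_{k\le n}\|g_k\|_\infty\bigr)$, so the upcrossing (or savings) output can go negative on a set of positive measure and does not witness $(2)$, which quantifies over \emph{nonnegative} computable martingales. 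The obvious repairs both fail: freezing the capital as soon as some continuation could make it negative can trigger at $\omega$ itself (a continuation landing in a bad set suffices), while freezing only after the capital actually goes negative suffers from overshoot since the increments are unbounded in $n$. To be fair, the paper's own proof of $(3)\Rightarrow(1)$ stops at the signed $(L_n)$ and never addresses nonnegativity, so you have correctly isolated the one delicate point; but your fix as stated does not close it. A repair that does work is to keep nonnegativity throughout by making the correction multiplicative rather than additive: with $N_n=\E_n(g_n)+\epsilon_n$ for suitable computable $\epsilon_n>0$ bounded away from zero at each stage, set $L_{n+1}=L_n\cdot N_{n+1}/\E_n(N_{n+1})$; this is a positive computable martingale, and choosing $\epsilon_n$ large relative to the additive error $2^{-n}$ turns the pointwise additive bounds into summable multiplicative ones along $\omega$, so that $L_n(\omega)$ tracks $M_n(\omega)$ up to a bounded factor. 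Some version of this (or of the freeze-and-sum-over-starting-capitals device) is needed before the proof is complete.
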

			\begin{proof}
				(1) $\Rightarrow$ (2):
					This direction is immediate from the definition of Doob randomness.

				(2) $\Rightarrow$ (3):
					This direction is immediate from the fact that a computable martingale is a.e.\ computable.

				(3) $\Rightarrow$ (1):
					It is enough to approximate the a.e.\ computable $(M_n)$ with a computable martingale $(L_n)$
					such that $|L_n(\omega) - M_n (\omega)|\leq 2^{-k}$ for a large enough $k$.

					By Lemma~\ref{lem:approxAeComp}, from $k$ and from the code for each $M_n$,
					we may effectively find a computable function $N_n$ (depending on $k$) such that
					$|N_n(\xi) - M_n(\xi)| \leq 2^{-(n+k+2)}$ and $|\E_{n-1}(N_n)(\xi) - M_{n-1}(\xi)| \leq 2^{-(n+k+2)}$
					for all $\xi$ outside a $\Sigma^0_1$ set $U^k_n \subseteq {(2^\N)}^\N$ of computable measure at most $2^{-(n+k+1)}$.
					Further, by Lemma~\ref{lem:approxAeComp}, we may assume $\|N_n\|_{\infty} < d(n)$ for a computable bound $d(n)$.
					(Also, the value of $N_n (\xi)$ depends only on $\xi_{<n}$.)
					Then ${\bigcup_n U^k_n}_{n\in \N ,k \in \N}$ is a total Solovay test\footnote{%
					A \emph{total Solovay test} is a sequence (or in this case a doubly-indexed sequence)
					of $\Sigma^0_1$ sets ${U^k_n}$ such that $\sum_n \sum_k \mu(U^k_n)$ is finite and computable.
					By \cite[Theorem 7.1.10]{MR2732288}, every Schnorr random $\omega$ is in only finitely many
					$U^k_n$.
					}
					and therefore there is some large $k$ such that $\omega \notin U^k_n$ for all $n$.

					However, $(N_n)$ may not be a martingale.
					To make it such, let $L_0 = N_0$ and recursively define $L_n$ as
					\[ L_{n+1} = N_{n+1} - \E_n(N_{n+1}) + L_n.\]
					Similarly to the proof of Theorem~\ref{thm:DoobSequences} ((1) $\Rightarrow$ (2)), $(L_n)$ is an a.e.~computable martingale.
					Also, $\|L_n\|_{\infty} \leq \sum_n 2d(n)$.

					To see that  $|L_n(\xi) - M_n(\xi)| \leq 2^{-k}$, we must do calculations similar to that of Theorem~\ref{thm:DoobSequences} ((1) $\Rightarrow$ (2)).
					To show $|L_n(\omega) - M_n(\omega)|  \leq (2 - 2^{-n}) 2^{-(k+1)} < 2^{-k}$ by induction, we will repeat the same calculations as in
					Theorem~\ref{thm:DoobSequences} ((1) $\Rightarrow$ (2)), with the following adjustment.
					Since there is no global bound, we will show the bound directly for ${\omega}$.
					We have $|L_0(\omega) - M_0(\omega) | \leq 2^{-(k+2)} < 2^{-(k+1)}$ and
					\begin{align*}
						|L_{n+1}(\omega) - M_{n+1}(\omega)|
						&= |N_{n+1}(\omega) - \E_n(N_{n+1})(\omega) + L_n - M_{n+1}(\omega)|\\
						&\leq |N_{n+1}(\omega) - M_{n+1}(\omega)|
							+ |\E_n(N_{n+1})(\omega) - M_n(\omega)|\\
							&\qquad + |M_n(\omega) - L_n(\omega)|
					\end{align*}
					By assumption $|N_{n+1}(\omega) - M_{n+1}(\omega)| \leq 2^{-((n+1)+k+2)}$
					and $|\E_n(N_{n+1})(\omega) - M_n(\omega)|\leq 2^{-((n+1)+k+2)}$.
					Hence by the induction hypothesis,
					\begin{align*}
						|L_{n+1}(\omega) - M_{n+1}(\omega)|
						&\leq 2^{-(n+k+3)}
						+ 2^{-(n+k+3)}
						+ (2-2^{-n})2^{-(k+1)}\\
						&= (2-2^{-(n+1)})2^{-(k+1)}
					\end{align*}
					as desired.
			\end{proof}

	\section{Doob random bit arrays}\label{sec:4}
		In this section, consider the space $2^{\N \times \N}$ with the uniform probability measure $\mathbb{P}$.
		We use the variables $\omega$, $\xi$ and $\psi$ for elements of $2^{\mathbb{N} \times\mathbb{N}}$.
		If $\omega \in 2^{\N \times \N}$, let $\omega_{m,n}$ denote the bit in the $m+1$st row and $n+1$st column, hence $\omega = {(\omega_{m,n})}_{(m,n)\in\N\times\N}$.
		Let $<$ denote the lexicographical order on $\N \times \N$; that is, $ (k,\ell) < (m,n) $ if and only if $k<m$ or both $k=m$ and $\ell < n$.
		Then let $\omega_{<(m,n)} = \big(\omega_{k,\ell} \mid (k,\ell)<(m,n)\big)$ and $\omega_{\geq (m,n)} = \big(\omega_{k,\ell} \mid (k,\ell)\geq(m,n)\big)$.
		Since the order type of $\big((k,\ell) \mid (k,\ell)\geq(m,n)\big)$ is the same as $\N \times \N$,
		we will consider $\omega_{\geq (m,n)}$ to be an element of  $2^{\N \times \N}$.
		Concatenation ${}^\frown$ is defined by
		\[
			{(\omega_{<(m,n)} {}^{\frown}\xi)}_{k,\ell} =
			\begin{cases}
				\omega_{k,\ell} & (k,\ell) < (m,n) \\
				\xi_{0,\ell-n} & k=m,\ell \geq n\\
				\xi_{(m-k,\ell)} & k > m
			\end{cases}.
		\]
		In particular, $\omega_{<(m,n)} {}^{\frown}\omega_{\geq (m,n)} = \omega$.

		We will naturally identify the spaces ${(2^\N)}^\N$ and $2^{\N\times\N}$
		by denoting the $m+1$st sequence of ${\omega}$ as the row $\omega_m=(\omega_{m,0},\, \omega_{m,1},\, \dots)$.
		Then we may associate $\omega_{< m}$ with $\omega_{< (m,0)}$ and $\omega_{\geq m}$ with $\omega_{\geq (m,0)}$.
		The definitions of conditional expectations $\E_n$ and martingales ${(M_n)}_{n\in \N}$ carry over from Definitions~\ref{df:cond_exp_seq_seq} and \ref{df:mart_seq_seq}, except that we now refer to such martingales as \emph{$\N$-indexed martingales}.
		The definitions of computable, ecu, Doob, and Schnorr randomness also carry over.
		\begin{rem}
			We are identifying ${(2^\N)}^\N$ with $2^{\N\times\N}$, but not with $2^\N$.
			To explain this, note that any identification of $2^{\N\times\N}$ and $2^\N$ will not preserve the notion of ``time'' (see Remark~\ref{rem:time}).
			On the other hand,  a subset of the $\N\times\N$-indices (times) correspond to the $\N$-indices (times)---identify $(m,0)$ with $m$.
			Therefore, it easy to view $2^{\N\times\N}$ as a refinement of ${(2^\N)}^\N$ which preserves the ``time structure'', except adding additional times.
			(Again, this can be made formal with filtrations. See Appendix~\ref{appendix}.)
		\end{rem}
		\begin{df}\label{df:cond_exp_mart_array}
			We define conditional expectation for indices in $\N\times\N$ by
			\[
				\E_{m,n} (f) (\omega)= \int g(\xi)\, d\mathbb{P}(\xi) = \E(g) \quad \text{where}\quad g(\xi) = f(\omega_{<(m,n)} {}^{\frown}\xi).
			\]
			Similarly, define an \emph{$\N\times\N$-indexed martingale} ${(M_{m,n})}_{(m,n)\in \N \times \N}$ as a sequence of functions almost surely satisfying
			\begin{enumerate}
				\item{} $M_{m,n} (\omega)$ depends only on $(m,n)$ and $\omega_{<(m,n)}$, and
				\item{} $\E_{k,\ell} (M_{m,n}) = M_{k,\ell}$ for $(k,\ell) \leq (m,n)$.
			\end{enumerate}
		\end{df}
		All the conditional expectation properties of Remark~\ref{rem:EnProperties} still hold.
		This next proposition shows that these $\N\times\N$-indexed conditional expectations and martingales are just extensions of the $\N$-indexed ones.
		\begin{pro}\label{pro:mart_embedding}
			The operators $\E_{m,0}$ and $\E_m$ are equal, and if $(M_{m,n})$ is an $\N\times\N$-indexed martingale, then $N_m = M_{m,0}$ is an $\N$-indexed martingale.
		\end{pro}
		\begin{proof}
			To see that $\E_{m,0} =\E_m$, take an integrable function $f$. Then by the identification of $2^{\N \times\N}$ and ${(2^\N)}^\N$,
			\begin{align*}
				\E_{m,0} (f) (\omega) &= \int_{2^{\N\times\N}} f (\omega_{< (m,0)} {}^{\frown} \xi)\, d\mathbb{P} (\xi)\\
				&= \int_{{ (2^\N)}^\N} f (\omega_{<m} {}^{\frown} \psi)\, d\mathbb{P} (\psi) = \E_{m} (f) (\omega).
			\end{align*}

			As for $N_m = M_{m,0}$, first notice that $M_{m,0}$ is determined by $\omega_{<m,0}=\omega_{<m}$.
			Then for $k < m$, $\E_{k} N_m = \E_{k,0} M_{m,0} = M_{k,0} = M_k$.  So $(N_m)$ is a martingale.
		\end{proof}

		\subsection{Doob randomness via martingales}
			In this subsection, we characterize Doob randomness via $\N\times\N$-indexed martingales.

			Say that $M_{n,m} (\omega)$ \emph{converges to} $a$ as $(m,n)\rightarrow {\infty}$ (or $\lim_{m,n} M_{m,n} (\omega) = a$) if
			for all $\varepsilon>0$ there is a pair $(k,\ell)$ such that for all $(m,n) > (k,\ell)$, $ |M_{m,n} - a| < \varepsilon $.
			The notions of $\limsup$ and $\liminf$ are similar.
			\begin{thm}\label{thm:DoobArray}
				For $\omega \in 2^{\N \times \N}$, the following are equivalent:
				\begin{enumerate}
					\item{} ${\omega}$ is not Doob random.
					\item{} $(M_{m,n} (\omega))$ diverges as $(m,n) \rightarrow {\infty}$ for some computable nonnegative martingale $(M_{m,n})$.
					\item{} $(M_{m,n} (\omega))$ diverges as $(m,n) \rightarrow {\infty}$ for some computable rational-valued nonnegative martingale $(M_{m,n})$.
					\item{} $\liminf_{m,n} M_{m,n} (\omega) = {\infty}$ for
					some computable rational-valued nonnegative martingale $(M_{m,n})$ and
					some computable function $k : \N \rightarrow \N$ such that
					$M_{m,n} = M_{m+1,0}$ for all $n \geq k(m)$.
				\end{enumerate}
			\end{thm}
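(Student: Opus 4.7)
The plan is the cycle $(1)\Rightarrow(2)\Rightarrow(3)\Rightarrow(4)\Rightarrow(1)$. For the easy direction $(4)\Rightarrow(1)$ take $L_m := M_{m,0}$, which by Proposition~\ref{pro:mart_embedding} is an $\N$-indexed computable nonnegative martingale; since $\{(m,0) : m > K\}\subseteq\{(m,n) : (m,n)>(K,0)\}$, the hypothesis $\liminf_{(m,n)} M_{m,n}(\omega) = \infty$ forces $\lim_m L_m(\omega) = \infty$, so $\omega$ is not Doob random.

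For $(1)\Rightarrow(2)$, Theorem~\ref{thm:DoobSequences} yields a computable nonnegative $\N$-indexed martingale $(N_m)$ with $(N_m(\omega))$ divergent; I would set $M_{m,n} := \E_{m,n}(N_{m+1})$. The tower property of Remark~\ref{rem:EnProperties}, together with the identity $N_{k+1} = \E_{k+1,0}(N_{m+1})$ supplied by Proposition~\ref{pro:mart_embedding} and the $\N$-indexed martingale property of $(N_j)$, gives $\E_{k,\ell}(M_{m,n}) = M_{k,\ell}$ for all $(k,\ell)\leq(m,n)$, so $(M_{m,n})$ is an $\N\times\N$-indexed computable nonnegative martingale. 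Since $M_{m,0}(\omega) = N_m(\omega)$ diverges, lex convergence of $(M_{m,n}(\omega))$ is impossible.

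For $(2)\Rightarrow(3)$, I would rerun the approximation argument of Theorem~\ref{thm:DoobSequences}\,$(1)\Rightarrow(2)$ one row at a time: approximate each within-row martingale $\{M_{m,n}\}_n$ by a computable rational-valued function with accuracy shrinking in $m$ and $n$, apply the correction recursion $L_{m,n+1} := N_{m,n+1} - \E_{m,n}(N_{m,n+1}) + L_{m,n}$ within each row to restore the martingale property, and define $L_{m+1,0}$ as the eventually-stabilized value of $(L_{m,n})_n$. Lemma~\ref{lem:rationalExpectation} keeps everything rational-valued, and the $\omega$-wise errors accumulate geometrically as in Theorem~\ref{thm:DoobSequences}.

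The main work is $(3)\Rightarrow(4)$: given rational-valued divergent $(M_{m,n})$, first apply the upcrossing construction of Theorem~\ref{thm:DoobSequences}\,$(2)\Rightarrow(3)$ in lex order to force lex-$\limsup = \infty$, then the savings construction of Theorem~\ref{thm:DoobSequences}\,$(3)\Rightarrow(4)$ to upgrade to lex-$\liminf = \infty$. Rational-valuedness makes each $M_{m+1,0}$ depend on at most $k(m)$ bits of row $m$ for a computable bound $k$, so $\{M_{m,n}\}_n$ assumes only finitely many values within each row, the stopping times $\sigma^{\text{up}}_k, \sigma^{\text{down}}_k, \tau_k$ are effectively computable as elements of $(\N\times\N)\cup\{\infty\}$, and the martingale-property, nonnegativity, and growth inductions translate verbatim from Theorem~\ref{thm:DoobSequences} once lex order replaces the order on $\N$. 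The stopping clause ``$M_{m,n} = M_{m+1,0}$ for $n\geq k(m)$'' then comes for free because $M_{m+1,0}$ is already determined by $\omega_{<(m,k(m))}$, so the martingale identity $M_{m,n} = \E_{m,n}(M_{m+1,0})$ collapses to equality. The principal obstacle is managing lex-order stopping times in the presence of limit points $(m+1,0) = \sup_n(m,n)$; rational-valuedness reduces each row to a finite process, which keeps every search effective and lets the $\N$-indexed constructions translate faithfully.
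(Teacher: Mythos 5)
Your cycle $(1)\Rightarrow(2)\Rightarrow(3)\Rightarrow(4)\Rightarrow(1)$ matches the paper's, and three of the four implications are essentially the paper's arguments: the extension $M_{m,n}:=\E_{m,n}(N_{m+1})$ for $(1)\Rightarrow(2)$, the restriction to the order-type-$\N$ index set $\{(m,n): n<k(m)\}$ followed by the upcrossing and savings constructions for $(3)\Rightarrow(4)$, and the projection $L_m=M_{m,0}$ for $(4)\Rightarrow(1)$ are all exactly what the paper does.

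The gap is in your $(2)\Rightarrow(3)$. You propose to rationalize row by row and to ``define $L_{m+1,0}$ as the eventually-stabilized value of $(L_{m,n})_n$,'' but that sequence does not stabilize. Unwinding your recursion gives
\[
L_{m,n} \;=\; L_{m,0} + \sum_{j<n}\bigl(N_{m,j+1}-\E_{m,j}(N_{m,j+1})\bigr),
\]
and each increment is within the approximation error of $M_{m,j+1}-M_{m,j}$, so $L_{m,n}\approx L_{m,0}+M_{m,n}-M_{m,0}$. For a general computable (non-rational-valued) martingale $(M_{m,n})$, the row sequence $(M_{m,n})_n$ converges to $M_{m+1,0}$ only in the a.e./$L^1$ sense of L\'evy's theorem and is not eventually constant, so $(L_{m,n})_n$ is a genuinely infinite series of rational corrections. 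Its limit is neither a rational-valued function nor a function of finitely many bits, so you cannot use it as $L_{m+1,0}$ and continue the recursion; this is precisely the limit-stage obstruction that makes the $\N\times\N$-indexed setting delicate. The paper sidesteps it entirely: it rationalizes only the $\N$-indexed skeleton $N_m=M_{m,0}$ (using the $(1)\Rightarrow(2)$ argument of Theorem~\ref{thm:DoobSequences} to get a rational-valued $(L_m)$ with $\|L_m-N_m\|_\infty<2^{-m}$) and then refills the whole array by $K_{m,n}:=\E_{m,n}(L_{m+1})$, which is rational-valued by Lemma~\ref{lem:rationalExpectation} and satisfies $\|K_{m,n}-M_{m,n}\|_\infty\le\|\E_{m,n}(L_{m+1}-N_{m+1})\|_\infty\le 2^{-(m+1)}$, so divergence at $\omega$ is preserved. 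Adopting that device repairs your proof; the rest of your outline goes through.
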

			\begin{proof}
				(1) $\Rightarrow$ (2):
					Take a computable martingale ${(M_m)}_{m\in \N}$ and extend it to a computable martingale
					${(N_{m,n})}_{(m,n)\in\N\times\N}$ by  $N_{m,n} = \E_{m,n}(M_{m+1})$.
					This is a martingale since $\E_{m,n}(M_{m+1})$ depends only on $\omega_{<m,n}$ and since for $(k,\ell) \leq (m,n)$,
					\begin{align*}
						\E_{k,\ell} N_{m,n}
						&= \E_{k,\ell} (\E_{m,n}(M_{m+1}))
						= \E_{k,\ell} (M_{m+1})
						= \E_{k,\ell} (\E_{k + 1, 0}  (M_{m+1})) \\
						&= \E_{k,\ell} (\E_{k + 1}  (M_{m+1}))
						= \E_{k, \ell}  (M_{k+1}) = N_{k,\ell}.
					\end{align*}

				(2) $\Rightarrow$ (3):
					First consider the $\N$-indexed martingale $N_m = M_{m,0}$ (Proposition~\ref{pro:mart_embedding}).
					By the proof of Theorem~\ref{thm:DoobSequences} ((1) $\Rightarrow$ (2)),
					we may adjust $(N_m)$ to be a rational-valued computable martingale $(L_m)$ such that $\|L_m - N_m\|_{\infty} < 2^{-m}$.
					As in (1) $\Rightarrow$ (2), we can extend $(L_m)$ to an $\N \times \N$-indexed martingale $(K_{m,n})$.  Last we have that
					\begin{align*}
						\|K_{m,n} - L_{m,n}\|
						&= |\E_{m,n}(K_{m+1,0}) - \E_{m,n}(M_{m+1,0})| \\
						&= |\E_{m,n}(L_{m+1}) - \E_{m,n}(N_{m+1})| \\
						&= | \E_{m,n}(L_{m+1} - N_{m+1}) |\\
						&\leq \E_{m,n}(|L_{m+1} - N_{m+1}|) \leq 2^{-(m+1)}.
					\end{align*}

					Hence $(K_{n,m}(\omega))$ diverges since $(M_{n,m}(\omega))$ does.

				(3) $\Rightarrow$ (4):
					Let $(M_{m,n})$ be as in (3).  Since each $M_{m+1,0}(\xi)$ is a computable rational-valued function, it can only depend on finitely many bits of
					$\xi_{<(m+1,0)}=\xi_{<m+1}$.
					Hence there is some $k=k(m,\xi)$ such that $M_{m+1,0}(\xi)$ only depends on $\xi_{<(m,k)}$.
					Since $\xi \mapsto k(m,\xi)$ is a computable function uniformly in $m$ on the effectively compact space $2^{\N \times \N}$,
					it has a maximum which is uniformly computable from $m$.
					So we may assume $k=k(m)$ is that maximum.
					Since $M_{m+1,0}$ only depends on the bits with index below $(m,k)$ we have
					$M_{m+1,0} = \E_{m,k} (M_{m+1,0}) = M_{m,k}$, and the same for all $n\geq k$.

					We can modify $(M_{m,n})$ to a martingale $(L_{m,n})$ with the same properties, but also such that $\liminf L_{m,n} (\omega) ={\infty}$.
					To do this, we can just use the upcrossing and savings techniques in the proofs in Theorem~\ref{thm:DoobSequences}.
					The proofs are the same, except with the following changes to the indices.   Use the indices $(m,n)$ such that $n < k(m)$.
					Since the order type of this set of indices is that of $\N$, the proofs are the same.

				(4) $\Rightarrow$ (1):
					Since $\liminf_{m,n} M_{m,n} (\omega) = {\infty}$, so does the martingale $(N_m)$ given by $N_m = M_{m,0}$.
			\end{proof}
			\begin{rem}\label{rem:gameArray}
				Using (4) of Theorem~\ref{thm:DoobArray} we can characterize Doob randomness in terms of a gambling strategy on bits.
				The gambler starts by betting on the bits of $\omega_{0,0},\,\omega_{0,1},\, \dots$.
				At some stage of the gambler's choosing, she must progress to the next row and start betting on the bits of $\omega_{1,0},\,\omega_{1,1},\, \dots$ in order.
				However, now the gambler may use all the bits of $\omega_0=(\omega_{0,0},\,\omega_{0,1},\, \dots)$,
				including the bits she never bet on, as an oracle for future betting.
				This process continues, with the condition that
				the gambler must eventually progress to each row $\omega_n$ (where she can then use $\omega_{<n}$ as an oracle).
				Her strategy must be total, in that she must specify for all possible ${\omega}$ how to bet and when to progress from $\omega_n$ to $\omega_{n+1}$ for all $n$.
				While the choice of when to progress to the next row may be adaptive (depending on the bits seen so far),
				Theorem \ref{thm:DoobArray} shows that it is sufficient for it to be independent of the array being bet on.
			\end{rem}
		\subsection{Doob randomness via computable randomness}
			We can also characterize Doob randomness entirely in terms of computable randomness.
			\begin{df}
				Let $C$ be an infinite, computable subset of $\N\times\N$.
				For $\omega \in 2^{\N\times\N}$, define $\omega_C \in 2^\N$ to be the bits of ${\omega}$ with indices in $C$ listed in some fixed computable ordering of $C$.
				Given a computable function $f:\N \rightarrow \N$, let
				\begin{align*}
					A_f = \{(m,n)\mid f(m) > n\},&\quad\text{and} \\
					B_f = \{(m,n)\mid f(m) \leq n\}.&
				\end{align*}
				($A$ for ``above'', $B$ for ``below''.)
				This definition will also be extended to noncomputable functions $f$.  In this case, $A_f$ and $B_f$ are ordered by an ordering computable from $f$.
			\end{df}
			\begin{lem}\label{lem:preservationOfCR}
				Computable randomness is preserved by computable permutations.
				Moreover, assume $\alpha \in 2^\N$ is computably random uniformly relative to $\beta \in 2^\N$.
				Then any computable permutation of the bits in $\alpha$ is still computably random uniformly relative to $\beta$.
			\end{lem}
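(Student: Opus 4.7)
The plan is to show the contrapositive in the uniform-relativization form, which contains the non-relativized first sentence as the special case of a trivial $\beta$. Given a uniform martingale test $(M^\gamma)_\gamma$ with $\limsup_n M^\beta_n(\pi(\alpha)) = \infty$, I will construct a uniform martingale test $(N^\gamma)_\gamma$ on $\alpha$-space with $\limsup_m N^\beta_m(\alpha) = \infty$. By applying the savings argument (Theorem~\ref{thisLast} and the proof of (3)$\Rightarrow$(4) in Theorem~\ref{thm:DoobSequences}) uniformly in $\gamma$, I may first assume $M^\beta_n(\pi(\alpha))$ tends monotonically to $\infty$.

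Set $\mc G_m := \sigma(\alpha_{<m})$ and $\mc F_n := \sigma(\pi(\alpha)_{<n})$, and define the uniformly computable quantities $j_m := \max\pi(\{0,\ldots,m-1\})+1$ and the ``verified prefix length'' $k_m := \max\{k : \pi^{-1}(\{0,\ldots,k-1\})\subseteq\{0,\ldots,m-1\}\}$; then $k_m, j_m \to\infty$ as $m\to\infty$, and because $\pi$ is a bijection one has $\mc F_{k_m}\subseteq\mc G_m\subseteq\mc F_{j_m}$. I would then set
\[
N^\gamma_m(\alpha) := \E\bigl[M^\gamma_{j_m}(\pi(\alpha'))\,\big|\,\alpha'_{<m}=\alpha_{<m}\bigr],
\]
a finite sum of $2^{j_m-m}$ values of $M^\gamma_{j_m}$ ranging over the completions of the unknown bits of $\pi(\alpha)_{<j_m}$. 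This is uniformly computable in $\gamma$, $\pi$, and $\alpha_{<m}$, and is nonnegative; the martingale identity $\E[N^\gamma_{m+1}\mid\mc G_m]=N^\gamma_m$ then follows from the tower property combined with $\mc G_m\subseteq\mc F_{j_m}$ and the $\mc F$-martingale property of $(M^\gamma)$, so $(N^\gamma)_\gamma$ is a uniform martingale test.

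The main obstacle will be establishing the pointwise divergence $\limsup_m N^\beta_m(\alpha)=\infty$. The tower identity yields $\E[N^\gamma_m\mid\mc F_{k_m}](\alpha)=M^\gamma_{k_m}(\pi(\alpha))\to\infty$, so $N^\gamma_m$ grows unboundedly when averaged over the $\mc F_{k_m}$-fibre through $\alpha$, but this is not yet divergence at $\alpha$ itself. To close this gap I would iterate the savings construction on the $\alpha$-side: because $(M^\gamma_{k_m}(\pi(\alpha)))_m$ is a martingale in the sub-filtration $(\mc F_{k_m})_m\subseteq(\mc G_m)_m$ whose values tend to infinity at $\alpha$ and are uniformly computable from $\alpha_{<m}$, a construction analogous to the upcrossing and savings arguments in the proof of Theorem~\ref{thm:DoobSequences} produces a genuine uniform $\mc G$-martingale test whose $\beta$-component diverges at $\alpha$, as required.
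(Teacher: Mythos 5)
The paper's own ``proof'' of this lemma is only a citation to the literature, so you are supplying an argument the paper omits. Your construction
$N^\gamma_m(\alpha)=\E\bigl[M^\gamma_{j_m}(\pi(\alpha'))\mid\alpha'_{<m}=\alpha_{<m}\bigr]$,
together with the verification that it is a nonnegative martingale uniformly computable in $\gamma$, is correct and is essentially the standard argument. The flaw is in your final paragraph. The proposed repair does not work as stated: $(M^\gamma_{k_m}(\pi(\alpha)))_m$ is a martingale only with respect to the sub-filtration $(\mc F_{k_m})_m$, not with respect to $(\mc G_m)_m$, because $\mc G_m$ already reveals some bits of $\pi(\alpha)$ in positions of $[k_m,j_m)$ on which $M^\gamma_{k_{m+1}}$ depends, so in general $\E[M^\gamma_{k_{m+1}}\mid\mc G_m]\neq M^\gamma_{k_m}$. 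The upcrossing and savings constructions of Theorem~\ref{thm:DoobSequences} need the martingale property with respect to the filtration one is betting in; applied to a process that is merely $\mc G$-adapted they do not return a martingale, and restoring the martingale property by conditioning on $\mc G_m$ just reproduces the problem you started with. (Note also that a $\mc G$-adapted computable process tending to infinity along $\alpha$ carries no information by itself --- $f(\alpha_{<m})=m$ does that for every $\alpha$ --- so the martingale structure is essential and cannot be waved at.)

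Fortunately no repair is needed: the savings normalization you impose in your first paragraph already closes the gap. Arrange, uniformly in $\gamma$, that $M^\gamma_n(\xi)\geq\tfrac12 M^\gamma_{n'}(\xi)$ for all $n'\leq n$ and \emph{all} $\xi$ (this is the pointwise savings property, as in the proof of Theorem~\ref{thm:DoobCharacterization}), and that $\liminf_n M^\beta_n(\pi(\alpha))=\infty$. By the definition of $k_m$, the value $M^\gamma_{k_m}(\pi(\alpha'))$ is constant on the cylinder $\{\alpha':\alpha'_{<m}=\alpha_{<m}\}$ and equals $M^\gamma_{k_m}(\pi(\alpha))$. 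Hence the pointwise inequality $M^\gamma_{j_m}\geq\tfrac12 M^\gamma_{k_m}$ passes through the conditional expectation defining $N^\gamma_m$ and yields
\[
N^\beta_m(\alpha)\;\geq\;\tfrac12\,\E\bigl[M^\beta_{k_m}(\pi(\alpha'))\mid\alpha'_{<m}=\alpha_{<m}\bigr]\;=\;\tfrac12\,M^\beta_{k_m}(\pi(\alpha))\;\longrightarrow\;\infty
\]
since $k_m\to\infty$. With this one line in place of your last paragraph the proof is complete.
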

			\begin{proof}
				See \cite[Introduction]{MR2956011} and \cite{MR1769372} for the result that computable randomness is preserved by permutations.
				The proof indicated there can be uniformly relativized to an oracle $\beta$.
			\end{proof}
			\begin{thm}\label{thm:DoobCharacterization}
				Each $\omega \in 2^{\N \times \N}$ is Doob random if and only if
				for all computable functions $f:\N \rightarrow \N$, ${\omega_{B_f}}$ is computably random uniformly relative to $\omega_{A_f}$.
				(By Lemma~\ref{lem:preservationOfCR}, the choice of ordering for $A_f$ and $B_f$ does not matter.)
			\end{thm}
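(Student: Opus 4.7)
The plan is to prove both implications of Theorem~\ref{thm:DoobCharacterization} by contrapositive, using Theorem~\ref{thm:DoobArray} (in particular clause~(4), the ``stopping'' characterisation of Doob randomness) as the principal engine.

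For the ``$\Leftarrow$'' direction I suppose $\omega$ is not Doob random and appeal to Theorem~\ref{thm:DoobArray}(4) to extract a computable rational-valued nonnegative $\N\times\N$-indexed martingale $(M_{m,n})$ together with a computable stopping function $k\colon\N\to\N$ satisfying $M_{m,n}=M_{m+1,0}$ for $n\geq k(m)$ and $\liminf_{m,n}M_{m,n}(\omega)=\infty$. Setting $f=k$, the stopping condition guarantees that $M_{m,n}$ only changes as $(m,n)$ crosses an $A_f$-position. I then build a uniform martingale test $(N^\gamma)_{\gamma\in 2^\N}$ by sampling $M$ at each lex position immediately after an $A_f$-bit, with one of the two ``halves'' of $2^{\N\times\N}$ filled in by the oracle $\gamma$ and the other varying. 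The identity $\tfrac{1}{2}(M_{m,n+1}(\xi;\xi_{(m,n)}=0)+M_{m,n+1}(\xi;\xi_{(m,n)}=1))=M_{m,n}(\xi)$ at each $A_f$-step verifies that each $N^\gamma$ is a genuine $\N$-indexed martingale, and computability of $(\gamma,j)\mapsto N^\gamma_j$ follows because each step uses only finitely many bits of $\gamma$. Substituting $\omega_{A_f}$ and $\omega_{B_f}$ for the oracle and bet-on coordinates (in the appropriate assignment) reproduces the trajectory $(M_{m,n}(\omega))$, whose divergence yields $\limsup_j N^{\omega_{A_f}}_j(\omega_{B_f})=\infty$.

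For the ``$\Rightarrow$'' direction I begin with a computable $f$ and a uniform martingale test $(N^\gamma)_{\gamma\in 2^\N}$ with $\limsup_j N^{\omega_{A_f}}_j(\omega_{B_f})=\infty$, and I build a computable $\N\times\N$-indexed martingale on $2^{\N\times\N}$ by closed-form conditional expectation. Letting $j(m,n)$ count the $B_f$-positions strictly below $(m,n)$ in lex order, define
\[
	M_{m,n}(\xi):=\E_{m,n}\!\left(N^{\xi_{A_f}}_{j(m,n)}(\xi_{B_f})\right).
\]
The tower property automatically makes $(M_{m,n})$ an $\N\times\N$-indexed martingale. Because each $N^\gamma_{j(m,n)}(\beta)$ is a computable rational-valued function of finitely many bits of $\gamma$ and $\beta$, Lemma~\ref{lem:rationalExpectation} ensures $M_{m,n}$ is a computable rational-valued function uniformly in $(m,n)$. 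A L\'evy-style upward-convergence argument along the lex filtration, together with the $\limsup$ hypothesis, then forces $(M_{m,n}(\omega))$ to diverge as $(m,n)\to\infty$, and Theorem~\ref{thm:DoobArray}(2) concludes that $\omega$ is not Doob random.

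The main obstacle in both directions is the mismatch between the order-type-$\omega^2$ lex indexing on $\N\times\N$ and the order-type-$\omega$ indexing on $\N$ used for the uniform martingale test. In the ``$\Leftarrow$'' direction this mismatch is resolved by the stopping structure from Theorem~\ref{thm:DoobArray}(4), which ensures $M$ only moves at $A_f$-positions so its ``steps'' align with the genuine $\N$-indexed evolution of $N^\gamma$ once the non-moving coordinates are fixed by $\gamma$. In the ``$\Rightarrow$'' direction, conditional expectation marginalises out the unknown future $A_f$-bits while faithfully tracking the $B_f$-bit trajectory already observed.
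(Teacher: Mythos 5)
Your overall architecture is the paper's: the backward direction reads a uniform martingale test off the stopping characterization in Theorem~\ref{thm:DoobArray}(4), and the forward direction builds an $\N\times\N$-indexed martingale by conditioning. In fact your martingale $M_{m,n}(\xi)=\E_{m,n}\bigl(N^{\xi_{A_f}}_{j(m,n)}(\xi_{B_f})\bigr)$ is, after unwinding, exactly the paper's final martingale $K_{m,n}$; the paper merely reaches it through the intermediate set-indexed martingales $M_{A\cup B_n}$ and $L_{A_{\ell(n)}\cup C_n}$, which organize the same tower-property computation you sketch. One housekeeping remark: the paper's displayed formulas for $A_f$ and $B_f$ are swapped relative to how they are used elsewhere (for the construction to make sense, $B_f$ must be the set with finitely many positions per row, so that $j(m,n)$ is finite and $B_f$ has lexicographic order type $\omega$); your prose inherits this ambiguity, since you say $M$ ``only changes at $A_f$-positions'' but conclude $\limsup_j N^{\omega_{A_f}}_j(\omega_{B_f})=\infty$. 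Once a single convention is fixed, the backward direction is sound and matches the paper.

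The genuine gap is the divergence step at the end of the forward direction. A ``L\'evy-style upward-convergence argument'' does not yield $\limsup_{m,n}M_{m,n}(\omega)=\infty$ from $\limsup_j N^{\omega_{A_f}}_j(\omega_{B_f})=\infty$: the function $N^{\cdot}_{j(m,n)}(\cdot)$ may consult oracle bits of $\xi_{A_f}$ at lexicographic positions $\geq(m,n)$, and $\E_{m,n}$ averages those out, so the average over hypothetical future oracle bits need not be anywhere near the value at the true oracle $\omega_{A_f}$; the largeness of $N^{\omega_{A_f}}_{j}(\omega_{B_f})$ simply does not transfer to $M_{m,n}(\omega)$. (L\'evy upward convergence is an almost-sure statement about a fixed integrand, whereas here both the integrand and the conditioning index move, and you need a conclusion at the specific point $\omega$.) The paper closes this hole by first upgrading the hypothesis to $\liminf_j N^{\omega_{A_f}}_j(\omega_{B_f})=\infty$ and imposing a savings property $N^\gamma_n\geq N^\gamma_m/2$ for all $m\leq n$ and all $\gamma$. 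Then, for any target $c$, one picks $k_0$ with $N^{\omega_{A_f}}_{k_0}(\omega_{B_f})\geq 2c$ and takes $(m,n)$ beyond every position that $N^{\cdot}_{k_0}(\cdot)$ depends on, so that monotonicity of $\E_{m,n}$ gives $M_{m,n}(\omega)\geq\E_{m,n}\bigl(N^{\xi_{A_f}}_{k_0}(\xi_{B_f})\bigr)(\omega)/2=N^{\omega_{A_f}}_{k_0}(\omega_{B_f})/2\geq c$. You need this savings device (or an equivalent one); without it the divergence claim is unsupported. Everything else in your proposal is essentially the paper's argument.
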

			\begin{rem}
				This can again be characterized as a game similar to Remark~\ref{rem:gameArray}.  Now the gambler may bet on any bit he chooses.
				However, as before, for each row he must decide when to stop betting on that row.
				Yet, as before, he may still use the other bits in that row as an oracle for future bets.
			\end{rem}
			For the proof of Theorem~\ref{thm:DoobCharacterization},
			we will need yet another generalization of conditional expectation and martingales, this time indexed by subsets of $\N\times\N$.
			\begin{df}\label{df:cond_exp_set_index}
				Given a computable set $D \subseteq \mathbb{N}\times\mathbb{N}$ with complement $D^c$,
				define the concatenation operator ${}^{\frown_D}$ to be the unique computable operator
				$2^\mathbb{N} \times 2^\mathbb{N} \rightarrow 2^{\N \times \N}$ such that
				for all $\alpha,\beta \in 2^\mathbb{N}$, if $\omega = \alpha\, {}^{\frown_D}\, \beta$,
				then $\omega_D = \alpha$ and $\omega_{D^c} = \beta$.
				For an integrable function $f:2^{\N\times\N}\rightarrow \mathbb{R}$, define
				\[
					\E_D (f)(\omega) = \int\! g(\xi) \,d\mathbb{P}(\xi) = \E(g)
					\quad\text{where}\quad g(\xi)=f(\omega_D\, {}^{\frown_D}\, \xi).
				\]
			\end{df}		
			
			Notice that $\E_{m,n} = \E_D$ where $D=\{(k,\ell) \mid (k,\ell) < (m,n)\}$.
			We will abbreviate this set as $\{{<}(m,n)\}$.
			The corresponding properties of Remark~\ref{rem:EnProperties} still hold---for example, if $C \subseteq D$, then $\E_C (\E_D f)=\E_C (f)$.
			For a sequence $C_0 \subseteq C_1 \subseteq \dots$ of computable sets, define a martingale $(M_{C_i})$ as a sequence of integrable functions such that
			(1) $M_{C_i}$ only depends on $C_i$ and $\omega_{C_i}$ and (2) for all $i<j$, $\E_{C_i}(M_{C_j}) = M_{C_i}$.

			\begin{proof}[Proof of Theorem~\ref{thm:DoobCharacterization}]
				($\Leftarrow$): Assume ${\omega}$ is not Doob random.  Let $(M_{m,n})$ and $k:\N\rightarrow\N$ be as is (3) of Theorem~\ref{thm:DoobArray}.
				As mentioned in Remark~\ref{rem:gameArray}, $(M_{m,n})$ bets on the bits of $\omega_{B_k}$ using $\omega_{A_k}$ (uniformly) as an oracle.
				A little thought reveals that the order that $(M_{n,m})$ bets on the bits of $\omega_{B_k}$ is of order-type $\N$.
				(That is, there are no limit stages.)
				Therefore, we can use $(M_{m,n})$ to construct an $\N$-indexed martingale which bets on
				$\omega_{B_k}$ (in some order) using $\omega_{A_k}$ uniformly as an oracle.
				Thereby, $\omega_{B_k}$ is not computably random uniformly relative to $\omega_{A_k}$.
				(Lemma~\ref{lem:preservationOfCR} shows that the choice of ordering of $A_k$ and $B_k$ is not important.)

				($\Rightarrow$): Assume ${\omega_{B_f}}$ is not computably random uniformly relative to $\omega_{A_f}$.
				By Lemma~\ref{lem:preservationOfCR} we may assume that $B_f$ is ordered using the lexicographical ordering of $\N \times \N$.
				Then there is a uniformly computable rational-valued martingale $(N^\beta_n)$
				(i.e.\ for all $\beta\in 2^\N$, $(N^\beta_n)$ is a martingale on $2^\N$ uniformly computable from $\beta$)
				such that $\liminf_n N^{\omega_{A_f}}_n (\omega_{B_f}) = {\infty}$.
				We may assume for every $\beta$ that $(N^\beta_n)$ has a savings property such that $N^\beta_n \geq N^\beta_m/2$ for all $m\leq n$.

				Our goal is to convert this class of martingales $(N^\beta_n)$ into a martingale $(K_{m,n})$ on $2^{\N \times \N}$.
				This is complicated by the fact that
				our uniform martingale $(N^\beta_n)$ looks at the bits in a different order than the lexicographical ordering on $\N \times \N$.
				The advantage of martingales indexed by sets is that it gives us a language for talking about betting on bits in a different order.

				We will convert the class of martingales $(N^\beta_n)$ as a single martingale indexed by sets.
				Let $B=B_f$ and $A=A_f$.  Let $A=\{a_n\}$ and $B=\{b_n\}$ be computable enumerations of $A$ and $B$.
				Let $A_n = \{a_0, \dots, a_{n-1}\}$ and $B_n = \{b_0, \dots, b_{n-1}\}$.
				Then we define the martingale  $M_{A \cup B_n} (\xi) = N^{\xi_A}_n (\xi_B)$.
				We will first show that $(M_n)$ is a computable martingale. Notice that $(A \cup B_0)\subseteq (A \cup B_1) \subseteq \dots$.
				Then $M_{A\cup B_n} (\xi)=N^{\xi_A}_n (\xi_B)$ is uniformly computable from $\xi_{A}$ and ${(\xi_B)}_{< n}$,
				or equivalently, $\xi_{A \cup B_n}$.  Lastly, for $m \leq n$, we can unwrap the definitions to get
				\begin{align*}
					\E_{A\cup B_m} (M_{A \cup B_n})(\xi)
					&= \int_{2^{\N\times\N}} \! M_{A \cup B_n} (\xi_{A \cup B_m} {}^{\frown_{A \cup B_m}} \psi)\, d\psi\\
					&= \int_{2^{\N\times\N}} \! N^{\xi_A}_n ( {(\xi_{A \cup B_m} {}^{\frown_{A \cup B_m}} \psi)}_B)\, d\psi\\
					&= \int_{2^\N} {\!} N^{\xi_A}_n [ {(\xi_B)}_{<m} {}^{\frown}{\gamma}]\, d\gamma\\
					&= \E_m (N^{\xi_A}_n) (\xi_B)
					= N^{\xi_A}_m (\xi_B)
					= M_{A\cup B_m} (\xi).
				\end{align*}
				Hence $(M_{A\cup B_n})$ is a computable martingale.

				Since $M_{A\cup B_n}$ is rational-valued,
				it only depends on finitely many bits of $\xi$.
				In fact, $M_{A\cup B_n}$ is truth-table computable in the following sense:
				There is some computable function $\ell:\N \rightarrow \N$ such that
				$M_{A\cup B_n}(\xi)$ is computable uniformly from $n$, $\xi_{A_{\ell(n)}}$, and $\xi_{B_n}$.

				Recall our assumption that $B={b_n}$ is ordered lexicographically.
				Define $C_n = \{(m,k) \mid (m,k) < b_n\} \in \N \times \N$.
				Then $A_{\ell{n}} \cup C_n$ are the infinitely many positions $C_m$ that come before $b_n$
				as well as the finitely many positions in $A_{\ell{n}} \smallsetminus C_n$ which represent oracle bits looked at in the future.
				The goal is get rid of the dependence on $A_{\ell{n}} \smallsetminus C_n$.
				Notice $A_{\ell(m)}\cup C_m \subseteq A\cup B_m$.

				Let $L_{A_{\ell(n)} \cup C_n} (\xi) =  M_{A \cup B_n} (\xi)$. Now we can show that $(L_{A_{\ell(n)} \cup C_n})$ is also a martingale.
				For $m < n$ we have
				\begin{align*}
					&\E_{A_{\ell(m)} \cup C_m}  L_{A_{\ell(n)} \cup C_n} \\
					&\quad =  \E_{A_{\ell(n)} \cup C_m} \left(\E_{A\cup B_m} M_{A \cup B_n} \right)
					&\quad
					\text{
						($A_{\ell(m)}\cup C_m \subseteq A\cup B_m$)
					}\\
					&\quad = \E_{A_{\ell(m)}\cup C_m} M_{A\cup B_m} \\
					&\quad = M_{A\cup B_m} = L_{A_{\ell(m)} \cup C_m} .
					& (M_{A\cup B_m}\text{ only depends on} \\
					&&\text{the bits in }A_{\ell(m)}\cup C_m)
				\end{align*}

				Last, we define a martingale on $2^{\N \times \N}$ by
				\[ K_{m,n} = \E_{m,n} L_{A_{\ell(k)}\cup C_k} \quad\text{where}\quad k=\min\{i :  (m,n)\leq b_i \in B\}. \]
				The following calculation shows that $K_{m,n}$ is a martingale.  Let $K_{m,n}$ be as in the previous formula.
				Assume $(m',n') \leq (m,n)$ and $k'\leq k$ such that $k=\min\{i :  (m,n)\leq b_i \in B\}$ and $k'=\min\{i :  (m',n')\leq b_i \in B\}$.  Then
				\begin{align*}
					\E_{m',n'}  K_{m,n}
					&=  \E_{m',n'} \left(\E_{m,n} L_{A_{\ell(k)}\cup C_k} \right) \\
					&= \E_{m',n'} L_{A_{\ell(k)}\cup C_k} \\
					&= \E_{m',n'} \left(\E_{A_{\ell(k')}\cup C_{k'}} L_{A_{\ell(k)}\cup C_k} \right) \\
					&= \E_{m',n'} L_{A_{\ell(k')}\cup C_{k'}} = K_{m',n'}.
				\end{align*}

				Finally, we show that $\limsup_{m,n} L_{m,n}(\omega) = {\infty}$.  Pick a large number $c>0$.
				Since $\liminf_k N^{\omega_A}_k(\omega_B) = {\infty}$, there is an $k_0$ such that $N_{k_0}(\omega) \geq 2 c$.
				Let $k > k_0$. Then by the above definitions as well as the savings property,
				\begin{align*}
				K_{b_k}(\omega)
				&= \E_{b_k} L_{A_{\ell(k)} \cup C_k} (\omega)
				= \E_{b_k} M_{A \cup B_k} (\omega)
				= \E_{b_k} N^{\omega_A}_k (\omega_B) \\
				&\geq \E_{b_k} \frac{N^{\omega_A}_{k_0} (\omega_B)}{2} \geq \E_{b_k} c = c.
				\end{align*}
				Hence $\limsup_{m,n} L_{m,n}(\omega) = {\infty}$ as desired and ${\omega}$ is not Doob random.
			\end{proof}
		\subsection{Doob, Schnorr, and computable randomness}
			Here we show that Doob randomness is strictly weaker than computable randomness and incomparable with Schnorr randomness.
			\begin{thm} [Miyabe \cite{Miyabe:2011zr, 2012arXiv1209.5478M}]\label{thm:vLcr}
				For $\alpha,\beta,\gamma\in2^\N$, if $\alpha\oplus\beta$ is computably random (uniformly relative to $\gamma$)
				then $\alpha$ is computably random uniformly relative to $\beta$ (respectively, $\beta\oplus\gamma$).
			\end{thm}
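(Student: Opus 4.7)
The plan is to prove the contrapositive: given a uniform computable martingale test $(N^\delta_n)$ witnessing that $\alpha$ is not computably random uniformly relative to $\beta$ (so $\limsup_n N^\beta_n(\alpha)=\infty$), construct a computable martingale on $2^\N$ that diverges on $\alpha\oplus\beta$. First I would pass, uniformly in $\delta$, to a rational-valued approximation, since the construction from Theorem~\ref{thm:DoobSequences} (1)$\Rightarrow$(2) applies uniformly in the oracle. Since $(\delta,\alpha)\mapsto N^\delta_n(\alpha)$ is then a rational-valued computable function on the compact space $2^\N\times 2^\N$, compactness yields a computable $h\colon\N\to\N$ such that $N^\delta_n(\alpha)$ depends only on $\delta_{<h(n)}$ and $\alpha_{<n}$.

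Next I invoke Lemma~\ref{lem:preservationOfCR}, which allows me to work with any computable reordering of the bits of $\alpha\oplus\beta$. I choose the reordering $\xi$ that produces bits in the order
\[
	\beta_0,\dots,\beta_{h(1)-1},\ \alpha_0,\ \beta_{h(1)},\dots,\beta_{h(2)-1},\ \alpha_1,\ \beta_{h(2)},\dots,\beta_{h(3)-1},\ \alpha_2,\ \dots,
\]
so that, writing $n_\alpha(t)$ and $n_\beta(t)$ for the numbers of $\alpha$- and $\beta$-bits revealed in $\xi_{<t}$, we have $n_\beta(t)\geq h(n_\alpha(t))$ throughout. Then define
\[
	M_t(\xi)\ =\ N^\beta_{n_\alpha(t)}(\alpha_{<n_\alpha(t)}).
\]
Because $N^\beta_{n_\alpha(t)}$ depends only on the bits $\beta_{<h(n_\alpha(t))}$ and $\alpha_{<n_\alpha(t)}$, all of which are revealed by time $t$, the value $M_t$ is uniformly computable from $\xi_{<t}$.

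Verifying the martingale property splits into two cases. If $\xi_t$ is the newly revealed $\alpha$-bit, then $n_\alpha$ advances by one; the martingale property of $N^\beta$ in its $\alpha$-argument gives $\E[M_{t+1}\mid\xi_{<t}]=M_t$, with the reordering ensuring $n_\beta(t+1)\geq h(n_\alpha(t+1))$ so $M_{t+1}$ is well-defined. If $\xi_t$ is a newly revealed $\beta$-bit, then $n_\alpha$ is unchanged and the new bit has index at least $h(n_\alpha(t))$, so $N^\beta_{n_\alpha(t)}$ does not depend on it, giving $M_{t+1}=M_t$ pointwise. Divergence is immediate: $(M_t)_t$ traverses the sequence $(N^\beta_k(\alpha))_k$ with repetitions, so $\limsup_t M_t(\xi)=\limsup_k N^\beta_k(\alpha)=\infty$. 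Composing with the inverse permutation yields a computable martingale on $2^\N$ witnessing that $\alpha\oplus\beta$ is not computably random.

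The main obstacle to overcome is that a direct construction without reordering would force an integral $\int N^{\beta_{<n_\beta(t)}{}^\frown\delta'}_{n_\alpha(t)}(\alpha)\,d\P(\delta')$ over the unseen portion of $\beta$, and in principle such an integral can wash out the divergence of $N^\beta_n(\alpha)$ whenever $N^\delta_n$ depends on $\delta$ in an unboundedly-complex way. Reducing to rational-valued tests (so that $N^\delta_n$ is truth-table in only finitely many bits of $\delta$) combined with Lemma~\ref{lem:preservationOfCR} (to reveal those bits in time) sidesteps this issue entirely. The relativized half of the theorem (replacing $\beta$ by $\beta\oplus\gamma$) is obtained by carrying the oracle $\gamma$ uniformly through every step of the argument.
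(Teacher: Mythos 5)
Your argument is correct in substance, but note that the paper does not actually prove this theorem: its ``proof'' is a one-line citation of Miyabe together with the remark that his argument relativizes uniformly to $\gamma$. Measured against the technique the paper itself uses for the closely analogous direction of Theorem~\ref{thm:DoobCharacterization}, your route is genuinely different. There, the oracle-use problem (the uniform martingale $N^\beta$ may consult bits of $\beta$ that the interleaved martingale has not yet seen) is handled by first imposing a savings property $N^\beta_n \geq N^\beta_m/2$ and then replacing the unseen oracle bits by a conditional expectation, so that the averaged capital can fall at most a factor of two below any level already attained. You instead handle it by (i) reducing to rational-valued, hence truth-table, uniform martingales via the approximation of Theorem~\ref{thm:DoobSequences} (1)$\Rightarrow$(2) and extracting a computable use bound $h$ by compactness, and (ii) permuting the bits of $\alpha\oplus\beta$ so that $\beta_{<h(n)}$ arrives before $\alpha_n$ is bet on, invoking Lemma~\ref{lem:preservationOfCR}. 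Your approach buys a martingale that tracks $N^\beta_n(\alpha)$ exactly, with no savings modification and no dilution of capital; the averaging approach keeps the natural interleaving and needs no truth-table reduction, at the cost of the savings argument. Both are legitimate, and your concluding remark correctly identifies why the naive averaging (without savings) could wash out divergence.

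Two small points to tighten. First, for $t$ in the initial block your $M_t(\xi)=N^\beta_0$ may depend on $\beta_{<h(0)}$, not all of which has been revealed, so the claim that all relevant bits are available by time $t$ fails there; either normalize so that $N^\delta_0$ is independent of $\delta$, or define $M_t$ on that block as the conditional expectation of $N^\beta_0$ given the revealed bits (rational-valued and computable since $N^\delta_0$ is truth-table in $\delta_{<h(0)}$). Second, arrange $h$ to be nondecreasing and unbounded (replace $h(n)$ by $\max\{h(0),\dots,h(n),n\}$) so that your displayed reordering really is a computable permutation of all the bits, and add the constant $2^{-k}$ back after the rational approximation to restore nonnegativity. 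None of this affects the correctness of the argument.
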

			\begin{proof} This is the same proof as in \cite{Miyabe:2011zr} uniformly relativized to $\gamma$.
			\end{proof}
			\begin{cor}\label{cor:vLcr}
			Let $\omega \in 2^{\N \times \N}$ and let $D \subset \N \times \N$ be an infinite computable set.
			If ${\omega}$ is computably random (uniformly relative to $\gamma$)
				then $\omega_D$ is computably random uniformly relative to $\omega_{D^c}$ (respectively, uniformly relative to $\omega_{D^c}$ and $\gamma$).
			\end{cor}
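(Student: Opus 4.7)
The plan is to reduce the corollary to Miyabe's Theorem~\ref{thm:vLcr} by recognizing $\omega_D \oplus \omega_{D^c}$ as a computable permutation of $\omega$ (under the standard identification $2^{\N\times\N} \cong 2^\N$). Concretely, I would first fix the computable bijection between $2^{\N\times\N}$ and $2^\N$ (the same one used in Section~\ref{sec:3}, namely $\omega \mapsto \bigoplus_m \omega_m$) so that computable randomness on the two spaces coincides. Then, using fixed computable enumerations of the computable sets $D$ and (assuming it is infinite) $D^c$, the map $\omega \mapsto \omega_D \oplus \omega_{D^c}$ is visibly a computable permutation of the coordinates of $\omega \in 2^\N$.

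By Lemma~\ref{lem:preservationOfCR}, computable randomness (and its uniform relativization to an oracle $\gamma$) is preserved under computable permutations, so $\omega_D \oplus \omega_{D^c}$ is computably random, uniformly relative to $\gamma$ in the relativized case. Now I would apply Theorem~\ref{thm:vLcr} with $\alpha := \omega_D$ and $\beta := \omega_{D^c}$: since $\alpha \oplus \beta$ is computably random (uniformly relative to $\gamma$), the theorem yields that $\omega_D$ is computably random uniformly relative to $\omega_{D^c}$ (respectively, uniformly relative to $\omega_{D^c} \oplus \gamma$), which is exactly the conclusion.

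The only subtlety is the edge case where $D^c$ is finite, since then $\omega_D \oplus \omega_{D^c}$ is not literally a permutation of $\omega$. In that case, however, $\omega_{D^c}$ is a finite string, so ``uniformly relative to $\omega_{D^c}$'' is the same as ``uniformly relative to $\gamma$'' (or just ``computably random''), and $\omega_D$ differs from $\omega$ by only finitely many bits of advice, which does not affect computable randomness. The asymmetric case where $D$ itself is infinite but $D^c$ is finite is therefore handled trivially, and the main case (both infinite) is a direct permutation-plus-van-Lambalgen argument.

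I do not foresee a substantive obstacle: the entire proof is a bookkeeping reduction combining Lemma~\ref{lem:preservationOfCR} and Theorem~\ref{thm:vLcr}. The most delicate point, if any, is making sure that the uniform relativization in Theorem~\ref{thm:vLcr} lines up correctly with the uniform relativization in Lemma~\ref{lem:preservationOfCR}, but both are stated uniformly in the oracle, so the composition goes through without modification.
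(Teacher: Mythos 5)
Your proposal is correct and matches the paper's proof, which is exactly the one-line reduction ``Follows from Theorem~\ref{thm:vLcr} and Lemma~\ref{lem:preservationOfCR}'': view $\omega_D \oplus \omega_{D^c}$ as a computable permutation of the bits of $\omega$, invoke preservation of (uniformly relativized) computable randomness under such permutations, and then apply Miyabe's theorem. Your extra remark on the case of finite $D^c$ is harmless but moot, since the paper's definition of $\omega_C$ already requires $C$ to be infinite.
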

			\begin{proof} Follows from Theorem~\ref{thm:vLcr} and Lemma~\ref{lem:preservationOfCR}.
			\end{proof}
			\begin{thm}\label{thm:crDoob}
			For $\omega \in 2^{\N \times \N}$, consider the the following:
				\begin{enumerate}
				\item{} ${\omega}$ is computably random.
					\item{} ${\omega}$ is e.c.u.\ random.
					\item{} ${\omega}$ is Doob random.
				\end{enumerate}
			We have (1) $\Rightarrow$ (2) $\Rightarrow$ (3) and both implications are strict.
			\end{thm}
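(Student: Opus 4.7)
For (1)~$\Rightarrow$~(2), the plan is to apply Corollary~\ref{cor:vLcr} with the computable partition $D = \{(m,\ell) : m \geq 1\}$ and $D^c = \{(0,\ell) : \ell \in \N\}$. Since $\omega$ is computably random, $\omega_D = \omega_{\geq 1}$ is computably random uniformly relative to $\omega_{D^c} = \omega_{<1}$, so $n = 1$ witnesses e.c.u.\ randomness of $\omega$.

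For (2)~$\Rightarrow$~(3), the plan is to verify the characterization of Doob randomness from Theorem~\ref{thm:DoobCharacterization}: that for every computable $f : \N \to \N$, the bits of $\omega$ indexed by $A_f$ and $B_f$ satisfy the required uniform relative computable randomness. Given a witness $n$ for e.c.u.\ randomness of $\omega$, split $A_f = A_f^< \cup A_f^\geq$ and $B_f = B_f^< \cup B_f^\geq$ according to whether the row index is below $n$ or at least $n$. Applying Corollary~\ref{cor:vLcr} inside the ambient space of tails $\omega_{\geq n}$ (which is uniformly CR relative to $\omega_{<n}$) with $D = A_f^\geq$ yields that $\omega_{A_f^\geq}$ is computably random uniformly relative to $\omega_{B_f^\geq} \oplus \omega_{<n}$, which weakens to the same against $\omega_{B_f}$ alone. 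Since the below-$n$ contribution $\omega_{A_f^<}$ consists of only $\sum_{m<n} f(m)$ many bits---a finite string---prepending it can alter any martingale's value by at most a bounded multiplicative factor, so the required property transfers to $\omega_{A_f}$.

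For the strict failure (1)~$\not\Leftarrow$~(2), the plan is to set $\omega_0 = 0^\infty$ and let $\omega_{\geq 1}$ be Martin-L\"of random (and hence computably random) uniformly relative to $0^\infty$; such an element exists by a measure-one argument. Then $n = 1$ witnesses e.c.u.\ randomness of $\omega$, but under the canonical isomorphism $2^{\N\times\N} \cong 2^\N$ the row-$0$ positions form a computable subset of the bit indices, and the computable bit-martingale that bets all capital on the bit $0$ at each such position doubles at every bet on $\omega$, yielding $\limsup_k M_k(\omega) = \infty$; thus $\omega$ is not computably random.

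The hard part will be the strict failure (2)~$\not\Leftarrow$~(3), where I must construct $\omega$ that is Doob random yet not e.c.u.\ random. The structural asymmetry to exploit is that, by Theorem~\ref{thm:DoobArray}(4), any martingale witness to failure of Doob randomness may be assumed to bet on only finitely many bits per row (via a computable stopping function $k:\N\to\N$), whereas a uniform bit-test witnessing failure of e.c.u.\ randomness at stage $n$ may freely bet on infinitely many bits of a single row of $\omega_{\geq n}$. The plan is to construct $\omega$ so that for every $n$ the non-randomness of $\omega_{\geq n}$ relative to $\omega_{<n}$ is witnessed only by bit-tests betting on cofinitely many bits of some row---and thus invisible to any finite-per-row row-level test---via a stage-wise diagonalization against all computable $\N$-indexed row martingales combined with an explicit encoding of infinitely-many-bit non-randomness in the tails.
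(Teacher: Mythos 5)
Your arguments for (1)~$\Rightarrow$~(2), (2)~$\Rightarrow$~(3), and (2)~$\not\Rightarrow$~(1) are essentially the paper's: the first is the same one-line application of Corollary~\ref{cor:vLcr}; the second is the same combination of van Lambalgen (Theorem~\ref{thm:vLcr}/Corollary~\ref{cor:vLcr}) applied to $\omega_{\geq n}$ with the rows below $n$ absorbed into the oracle, plus the observation that the finitely many remaining bits can be hard-coded; the third is the same all-zero-first-row example. (One bookkeeping remark: as literally defined, $A_f=\{(m,\ell):f(m)>\ell\}$ is the \emph{finite}-per-row part, so your conclusion ``$\omega_{A_f}$ is CR uniformly relative to $\omega_{B_f}$'' is the orientation that is actually true and intended, even though Theorem~\ref{thm:DoobCharacterization} is phrased the other way around; the paper itself uses both orientations, so just make sure the side you bet on is the one with finitely many bits per row.)

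The genuine gap is (3)~$\not\Rightarrow$~(2), which you have correctly identified as the hard part but have not proved. You isolate the right asymmetry---a Doob test bets on only finitely many bits per row (Theorem~\ref{thm:DoobArray}(4)) with a \emph{computable} per-row stopping function, while an e.c.u.\ test may consult an entire row---but ``stage-wise diagonalization against all computable row martingales'' does not obviously produce a single $\omega$ defeating \emph{every} computable $\N\times\N$-indexed martingale and every computable stopping function simultaneously, and you give no mechanism for doing so. The missing idea is a \emph{dominating function}: take $g$ with $g(m)>f(m)$ for almost all $m$ for every computable $f$, put the bits of a sequence $\alpha$ that is Martin-L\"of random relative to $g$ into the positions $\{(m,\ell):\ell<g(m)\}$, and fill the row tails $\{(m,\ell):\ell\geq g(m)\}$ with zeros. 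Then every row ends in zeros, so no row is computably random and $\omega$ is not e.c.u.\ random; and for any computable $f$, since $f<g$ up to finitely many bits, the bet-on region lies inside the random region, and van Lambalgen for Martin-L\"of randomness relativized to $g$ shows the bet-on bits are Martin-L\"of (hence uniformly computably) random relative to the oracle bits---which are computable from the in-between bits together with $g$ because the tails are identically zero. Domination is exactly what makes this work uniformly over all computable $f$ at once; without some such device your plan does not close.
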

			\begin{proof}
				(1) $\Rightarrow$ (2):
					This follows from Corollary~\ref{cor:vLcr} and the definition of e.c.u.\ random.

				(2) $\Rightarrow$ (3):
					Assume that $\omega_{\geq n}$ is computably random uniformly relative to $\omega_{<n}$ for some $n$.
					Fix a computable function $f:\N\rightarrow\N$.
					By Theorem~\ref{thm:DoobCharacterization}, it suffices to show that ${\omega_{B_f}}$ is computably random uniformly relative to $\omega_{A_f}$.
					We may assume that $f(k)=0$ for each $k < n$,
					since it would only add and remove finitely many bits to/from $\omega_{A_f}$ and ${\omega_{B_f}}$.
					Let $C = \{(k,\ell) \in A_f \mid k \geq n\}$ be the coordinates of $A_f$ for which the first coordinate is $\geq n$,
					and let $\omega_C$ be the corresponding bits of ${\omega}$ listed in a computable order.
					Note that $\omega_{A} \oplus \omega_{B}$ is a computable permutation of the bits in $\omega_{\geq n}$
					and that computable randomness is preserved under computable permutations (Lemma~\ref{lem:preservationOfCR}).
					Then $\omega_{A} \oplus \omega_{B}$ is computably random uniformly relative to $\omega_{<n}$.
					Hence by Theorem~\ref{thm:vLcr}, $\omega_B$ is computable random uniformly relative to
					$\omega_C \oplus \omega_{<n}$ which is just a computable permutation of $\omega_{A_f}$.

				(2) $\not\Rightarrow$ (1):
					Let ${\omega}$ be such that the first row $\omega_0 = (0,0,\dots,)$ but $\omega_{\geq 1}$ is computably random.
					Clearly ${\omega}$ is e.c.u.\ random, but not computably random.

				(3) $\not\Rightarrow$ (2):
					Let $g$ be a function which dominates all computable functions $f$, that is $g(n) > f(n)$ for all but finitely many $n$.
					Define ${\omega}$ as follows.
					Choose $\alpha \in 2^\N$ to be Martin-L{\"o}f random relative to $g$.%
					\footnote{%
						We could let $\alpha$ be computably random uniformly relative to $g$ instead.
						However, computable randomness uniformly relative to an oracle in $\N^\N$ is not in the existing literature.
						Therefore, we used a stronger notion of randomness, (defined in, e.g., \cite[Section 6.4]{MR2732288}).
					}
					Let $B_g$ be the set of coordinates below $g$.  Give $B_g$ an ordering computable from $g$.
					Then let ${\omega_{B_g}}$ be the bits of $\alpha$ put into the positions in $B_g$ using the ordering on $B_g$.
					Let ${\omega_{A_g}}$, i.e., $\omega$ in the positions above $g$, be all $0$s.  Then each row $\omega_n$ ends in $0$s.
					First, we show that ${\omega}$ is not e.c.u.\ random.
					By Corollary~\ref{cor:vLcr}, if $\omega_{\geq n}$ is computably random, then $\omega_n$ would be computably random which it is not.
					So ${\omega}$ is not e.c.u.\ random.

					Now, we show that ${\omega}$ is Doob random using the characterization in Theorem~\ref{thm:DoobCharacterization}.
					The proof is similar to (2) $\Rightarrow$ (3). Choose a computable function $f$.
					Let $A_f$ and $B_f$, respectively, be the coordinates above and below $f$.  Let $C_{f,g}$ be the coordinates above $f$ and below $g$.
					Without loss of generality, we may assume $f < g$ for all values,
					for this only changes finitely many bits of $\omega_{A_f}$ and ${\omega_{B_f}}$.

					We want to show that ${\omega_{B_f}}$ is computably random uniformly relative to $\omega_{A_f}$.

					Recall that $\omega_{B_g}$ is Martin-L{\"o}f random relative to $g$.
					By van Lambalgen's theorem for Martin-L{\"o}f randomness relativized to $g$
					(similar in form to Theorem~\ref{thm:vLcr}, see \cite[Theorem~6.9.1]{MR2732288}),
					we have that ${\omega_{B_f}}$ is Martin-L{\"o}f random relative to $\omega_{C_{f,g}}$ and $g$.
					Hence ${\omega_{B_f}}$ is also Martin-L{\"o}f random relative to $\omega_{A_f}$, since
					$\omega_{A_f}$ is computable from $\omega_{C_{f,g}}$ and $g$.
					Then, ${\omega_{B_f}}$ is also computably random uniformly relative to ${\omega_{A_f}}$  (see \cite{Miyabe:2011zr, 2012arXiv1209.5478M}).
			\end{proof}
			\begin{thm}\label{thm:SchnorrDoob}
			For $\omega \in 2^{\N \times \N}$, consider the following:
				\begin{enumerate}
				\item{} ${\omega}$ is computably random.
					\item{} ${\omega}$ is e.c.u.\ random and Schnorr random.
					\item{} ${\omega}$ is Doob random and Schnorr random.
					\item{} ${\omega}$ is Schnorr random.
				\end{enumerate}
				We have the positive results
				(1) $\Rightarrow$ (2) $\Rightarrow$ (3) $\Rightarrow$ (4) and the negative results
				(4) $\not\Rightarrow$ (3), and (2) $\not\Rightarrow$ (1).\footnote{%
					We do not know whether (3) $\Leftrightarrow$ (2), but we give a partial result in Section~\ref{sec:proof}.
				}
			\end{thm}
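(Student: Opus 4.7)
The plan is to derive the positive chain $(1)\Rightarrow(2)\Rightarrow(3)\Rightarrow(4)$ from results already established, and then produce two explicit counterexamples for the negative statements, leveraging the Schnorr randomness analogue of van Lambalgen's theorem on ${(2^\N)}^\N$ proved earlier in Section~\ref{sec:4}.

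For the positive direction, $(1)\Rightarrow(2)$ splits into two parts: the e.c.u.\ part is exactly Theorem~\ref{thm:crDoob}$(1)\Rightarrow(2)$, while the Schnorr randomness part follows from the standard fact that computable randomness implies Schnorr randomness on $2^\N$, transferred to ${(2^\N)}^\N$ via the isomorphism used in Definition~\ref{df:Doob-rand}. The implication $(2)\Rightarrow(3)$ is Theorem~\ref{thm:crDoob}$(2)\Rightarrow(3)$ combined with the trivial preservation of Schnorr randomness, and $(3)\Rightarrow(4)$ is immediate.

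For $(2)\not\Rightarrow(1)$, I would fix $\alpha\in 2^\N$ Schnorr random but not computably random (a classical existence result) and then, since computable randomness uniformly relative to $\alpha$ is a measure-one event relative to $\alpha$, pick $\xi\in{(2^\N)}^\N$ which is computably random uniformly relative to $\alpha$. Define $\omega\in 2^{\N\times\N}$ by $\omega_0=\alpha$ and $\omega_n=\xi_{n-1}$ for $n\geq 1$. Then $\omega$ is e.c.u.\ random by construction (take $n=1$ in Definition~\ref{df:Doob-rand}), and the Schnorr van Lambalgen theorem proved in Section~\ref{sec:4} gives that $\omega$ is Schnorr random. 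If $\omega$ were computably random, then Corollary~\ref{cor:vLcr} applied with $D=\{0\}\times\N$ would force $\omega_0=\alpha$ to be computably random, contradicting the choice of $\alpha$.

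For $(4)\not\Rightarrow(3)$, I would exploit the set-indexed characterization of Doob randomness. Fix a computable $f:\N\to\N$ for which both $A_f$ and $B_f$ are infinite (e.g.\ $f(m)=m+1$). Pick $\alpha\in 2^\N$ Schnorr random but not computably random, and pick $\beta\in 2^\N$ Schnorr random uniformly relative to $\alpha$ (again a measure-one event). Using the fixed computable orderings of $A_f$ and $B_f$, assemble $\omega\in 2^{\N\times\N}$ so that $\omega_{B_f}=\alpha$ and $\omega_{A_f}=\beta$; this assembly is a computable permutation of $\alpha\oplus\beta$. Schnorr van Lambalgen applied to $\alpha$ and $\beta$ shows $\alpha\oplus\beta$ is Schnorr random, and then permutation-invariance of Schnorr randomness (Lemma~\ref{lem:preservationOfCR} style, easily verified for Schnorr tests) gives that $\omega$ is Schnorr random. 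On the other hand, a computable martingale witnessing that $\alpha$ is not computably random is a uniform martingale test that succeeds on $\omega_{B_f}$ independently of the oracle, so $\omega_{B_f}$ is not computably random uniformly relative to $\omega_{A_f}$. By Theorem~\ref{thm:DoobCharacterization}, $\omega$ is not Doob random. The main obstacle is the correct invocation of Schnorr van Lambalgen in its uniform-oracle form on ${(2^\N)}^\N$; once that is in hand, both constructions reduce cleanly to Corollary~\ref{cor:vLcr} and Theorem~\ref{thm:DoobCharacterization}.
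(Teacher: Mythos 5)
Your positive chain and your witness for (2) $\not\Rightarrow$ (1) are essentially the paper's: the implications come from Theorem~\ref{thm:crDoob} together with the standard fact that computable randomness implies Schnorr randomness, and the (2) $\not\Rightarrow$ (1) example is exactly the paper's $\omega$ with $\omega_0$ Schnorr but not computably random and $\omega_{\geq 1}$ computably random uniformly relative to $\omega_0$ (just record, as the paper does implicitly, that computable randomness uniformly relative to $\omega_0$ implies Schnorr randomness uniformly relative to $\omega_0$ before invoking van Lambalgen for Schnorr randomness). Where you genuinely diverge is (4) $\not\Rightarrow$ (3). The paper recycles the same witness with rows and columns transposed, so that the non--computably-random sequence occupies the first \emph{column} $\omega^{(0)}=(\omega_{0,0},\omega_{1,0},\dots)$, and then exhibits an explicit $\N$-indexed martingale $N_n(\xi)=M_n(\xi^{(0)})$ diverging on $\omega$ (legitimate because the first $n$ bits of the column lie in the first $n$ rows). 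You instead place the non--computably-random $\alpha$ along $B_f$ for a fixed computable $f$ with $A_f$ and $B_f$ both infinite, and conclude via Theorem~\ref{thm:DoobCharacterization}, since $\omega_{B_f}=\alpha$ fails to be computably random even without an oracle. Both constructions are correct. Your route keeps everything inside the framework of Section~\ref{sec:4}---the whole argument reduces to Theorem~\ref{thm:DoobCharacterization} and Corollary~\ref{cor:vLcr}---at the cost of invoking the (nontrivial) characterization theorem; the paper's route is more self-contained, needing only the definition of an $\N$-indexed martingale on ${(2^\N)}^\N$ and permutation-invariance of Schnorr randomness. One small correction: your appeal to a ``Lemma~\ref{lem:preservationOfCR} style'' invariance is misdirected, since that lemma concerns computable randomness; for the Schnorr randomness of your assembled $\omega$ you should cite permutation-invariance of Schnorr randomness directly, as the paper does in its own proof of (4) $\not\Rightarrow$ (3).
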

			\begin{proof}
				(1) $\Rightarrow$ (2) $\Rightarrow$ (3) $\Rightarrow$ (4):
					This follows from Theorem~\ref{thm:crDoob}.

				(2) $\not\Rightarrow$ (1):
					Let ${\omega}$ be such that $\omega_0$ is Schnorr random, but not computably random (such a sequence exists by a result of Wang \cite{Wang})
					and such that $\omega_{\geq 1}$ is computably random uniformly relative to $\omega_0$
					(there are measure one many such possibilities for $\omega_{\geq 1}$).
					Then ${\omega}$ is e.c.u.\ random.
					Since $\omega_0$ is not computably random, neither is ${\omega}$.

					Furthermore, $\omega_{\geq 1}$ is Schnorr random uniformly relative to $\omega_0$, which in turn is Schnorr random.
					Recall that van Lambalgen's theorem for Schnorr randomness states that for $\alpha,\beta\in 2^\N$,
					$\alpha \oplus \beta$ is Schnorr random if and only if $\alpha$ is Schnorr random and $\beta$ is Schnorr random uniformly relative to $\alpha$
					\cite{Miyabe:2011zr, 2012arXiv1209.5478M}.
					Hence ${\omega}$ is Schnorr random.

				(4) $\not\Rightarrow$ (3):
					Let ${\omega}$ be as in (2) $\not\Rightarrow$ (1), except swap the columns and rows.
					This ${\omega}$ is still Schnorr random since Schnorr randomness is preserved by computable permutations of bits \cite{2012arXiv1203.5535R}.
					For $\xi \in 2^{\N \times \N}$, label the first column as $\xi^{(0)}=(\xi_{0,0},\, \xi_{0,1}\, \xi_{0,2}\, \dots)$.
					Then $\omega^{(0)}$ is not computably random, and so there is a martingale $(M_n)$ on $2^\N$ such that $\limsup_n M_n(\omega^{(0)}) = {\infty}$.
					Convert this to a martingale $(N_n)$ on $2^{\N \times \N}$ using $N_n(\xi) =  M_n(\xi^{(0)})$.
					Therefore $\limsup_n N_n(\omega) = {\infty}$ and ${\omega}$ is not Doob random.
			\end{proof}
			\begin{thm}
				Doob randomness is incomparable with Schnorr randomness.
			\end{thm}
			\begin{proof}
				Doob $\not\Rightarrow$ Schnorr: Assume ${\omega}$ is such that  $\omega_0 = (0,0,\dots)$ and $\omega_{\geq 1}$ is computably random.
				Then ${\omega}$ is e.c.u.\ random, and hence Doob random.
				But it is not Schnorr random.

				Schnorr $\not\Rightarrow$ Doob: This is (4) $\not\Rightarrow$ (3) of Theorem \ref{thm:SchnorrDoob}.
			\end{proof}
		\subsection{In the context of Schnorr randomness}\label{sec:proof}
			As we saw in Theorem \ref{thm:DoobCharacterization}, the real $\omega$ is Doob random if
			for each computable $f$, $\omega_{B_f}$ is uniformly computably random relative to
			$\omega_{A_f}$. For fixed $f$, we may call $\omega$ $f$-Doob random in this case, and speak of a \emph{test for Doob randomness} parametrized by $f$.

			We shall prove that for each fixed $f$, there exists an $f$-Doob random, Schnorr random $\omega$ that is not e.c.u.\ random.
			However, the stronger statement of Conjecture \ref{doesntwork} would more properly complete Theorem~\ref{thm:SchnorrDoob}:

			\begin{con}\label{doesntwork}
				There is a Doob random, Schnorr random ${\omega}$ that is not e.c.u.\ random.
			\end{con}

			We will need some lemmas.
			\begin{pro}\label{pro:SRnotCR}
				For every $\alpha \in 2^\N$, there is a $\beta \in 2^\N$ which is Schnorr random uniformly relative to $\alpha$,
				but not computably random uniformly relative to $\alpha$.
				(We can even show that $\beta$ is not Schnorr random relative to $\alpha$ in the usual sense.)
			\end{pro}
			\begin{proof}
				Relativize the construction and proof of Wang \cite{Wang}.
				Namely, his argument naturally relativizes to give
				a class of martingales ${F^\alpha}$ and a class of sequences $\xi^\alpha$ such that the following holds.
				(We use the terminology and notation in his paper.)
				\begin{enumerate}
					\item{} $F^\alpha$ is uniformly computable from the oracle $\alpha$.
					\item{} $\xi^\alpha$ does not pass the test $F^\alpha$.
					\item{} $\xi^\alpha$ passes every standard Schnorr test $(F^\alpha_e, h^\alpha_e)$ which is computable from $\alpha$.
				\end{enumerate}
				These show, respectively, that $\xi^\alpha$ is not computably random uniformly relative to $\alpha$, and that
				$\xi^\alpha$ is Schnorr random relative to $\alpha$.
				Let $\beta=\xi^\alpha$.
			\end{proof}

			\subsubsection{Schnorr randomness versions of theorems of Ku\v{c}era and Miyabe}\label{sub:Schnorr_versions}
				The following Schnorr randomness version of Ku\v{c}era's theorem \cite{MR820784}
				is due to Bienvenu and Miller \cite{MR2880269}.
				\begin{pro}[Ku\v{c}era's theorem for Schnorr randomness]\label{pro:Kucera}
					Every $\Pi^0_1$ set $C \subseteq 2^\N$, such that $\mathbb{P}(C)>0$ is computable,
					contains a tail of every Schnorr random.
				\end{pro}
				\begin{proof}[Proof.]
					Let $X$ be Schnorr random. 
					Then by \cite[Theorem 9, direction (ii)-implies-(i)]{MR2880269}, some tail of $X$ belongs to $C$,
					and we are done.
				\end{proof}
				The following is a Schnorr randomness version of Miyabe's theorem extending van Lambalgen's theorem to infinitely many reals \cite{MR2675686}.
				\begin{pro}[Miyabe's extension of van Lambalgen's theorem for Schnorr randomness]\label{pro:miyabe}
					Let $\omega \in {(2^\N)}^\N$. If $\omega_n$ is Schnorr random uniformly relative to $\omega_{<n}$ for all $n$,
					then there is $\xi \in {(2^\N)}^\N$ such that $\xi$ is Schnorr random and for each $n$, $\xi_n$ is a tail of $\omega_n$.
					(In other words, there is a $k_n$ such that $\xi_n = \alpha_{\geq k_n}$ for $\alpha = \omega_n$.)
				\end{pro}
				\begin{proof}
					Given a set ${U \subseteq {(2^\N)}^\N}$ and ${\xi_{< n}}$ in ${(2^\N)}^n$, we will use the notation
					\[
						U | \xi_{< n} = \{\omega \in {(2^\N)}^\N \mid \xi_{< n} {}^{\frown}\omega \in U\}.
					\]
					We will use angle brackets to denote singleton sequences.  For example, for ${\alpha \in 2^\N}$,
					\[
						U | \langle \alpha \rangle = \{\omega \in {(2^\N)}^\N \mid (\alpha, \omega_0, \omega_1, \dots) \in U\}.
					\]
					The set ${U | \xi_{< n}}$ can be read as ``$U$ given the initial segment ${\xi_{<n}}$'' and its measure has the convenient notation
					\[
						\mathbb{P}(U | \xi_{<n}).
					\]
					Enumerate (noneffectively) all the Schnorr tests $(U^s_n)$ on ${(2^\N)}^{\N}$.
					Recall each $U^s_n$ is effectively open uniformly in $n$ and ${\mu}(U^s_n)$ is computable uniformly in $n$ and at most $2^{-n}$.

					Fix ${\omega}$ as in the statement of the proposition.  We closely follow the construction in \cite[Theorem~13]{Bienvenu-et-al}.
					At stage $s$ we construct the rows $\xi_s \in 2^\N$ of $\xi$  (along with the effectively open helper sets $V_s, W_s$) to satisfy these requirements:
					
\begin{enumerate}[label=$\mathsf{R}_{\arabic*}^s$]
\item For the $s$th Schnorr test $(U^s_n)$, we have $V_s = U^s_n$ for some $n$ and $\mathbb{P}( V_s | {\xi_{<s}})\leq 2^{-(2s+1)}$.
\item $\mathbb{P}(W_s | {\xi_{<s+1}}) \leq 1 - 2^{-(2s+2)}$ where $W_s = \bigcup_{i < s+1} V_i$.
						\item[$\mathsf{R}_3^s$.] $\xi_s = {(\omega_s)}_{\geq k}$ for some $k$ (and is therefore Schnorr random).
					\end{enumerate}
					By $\mathsf{R}_2^s$, $\xi \notin \bigcup_i V_i$.  For if $\xi \in V_i$ for some $i$, then since $V_i$ is open,
					there is some $s$ for which all sequences extending $\xi_{<s}$ in $V_i$.  Since $s$ can be arbitrary large, assume $s>i$.  Then by $\mathsf{R}_2^s$,
					\[1 = \mathbb{P}(V_i | \xi_{<s}) \leq \mathbb{P}(W_{s-1} | \xi_{<s}) \leq 1 - 2^{-(2s+2)}.\]
					A contradiction.  Therefore, $\xi$ is Schnorr random, and $\mathsf{R}_3^s$ is the desired tail property.

					\medskip \noindent \emph{Construction at stage $s$:}

					Assume $\xi_{<s} = (\xi_0, \dots, \xi_{s-1})$ has been constructed as well as $W_{s-1} = \bigcup_{i < s} V_i$
					(for $s=0$ let $\xi_{<0}$ be the empty string and $W_{-1} = \varnothing$).
					Assume they satisfy $\mathsf{R}_1^i$, $\mathsf{R}_2^i$, and $\mathsf{R}_3^i$ for $i<s$.

					Let $(U^s_n)$ be the $s$th Schnorr test.
					For any Schnorr random $\alpha$, it follows from the proof of van Lambalgen's theorem for uniformly relative Schnorr randomness
					in \cite{2012arXiv1209.5478M} that $\sum_n \mathbb{P}(U^s_n | \langle \alpha \rangle) < {\infty}$.
					(Specifically, let $t(\alpha, \psi)=\sum_n \mathbf{1}_{U^s_n}(\langle \alpha \rangle {}^{\frown}\psi)$.
					This is a nonnegative lower semicomputable function on $2^\N \times {(2^\N)}^\N$ with a computable integral.
					Then \cite[Proof of Theorem~4.1]{2012arXiv1209.5478M} shows that for any Schnorr random $\alpha$, the function $\beta \mapsto t(\alpha,\beta)$ has a finite integral.
					For our $t$, this integral is $\sum_n \mathbb{P}(U^s_n | \langle \alpha \rangle)$.)  Similarly, $\sum_n \mathbb{P}(U^s_n | \xi_{<s}) < {\infty}$.
					Hence there is some $n$ such that $\mathbb{P}(U^s_n | \xi_{<s}) \leq 2^{-(2s+1)}$.  Let $V_s = U^s_n$, thereby satisfying requirement $\mathsf{R}_1^s$.

					Recall $W_s$ from the requirements.  Since it is a finite union of $\Sigma^0_1$ sets of computable measure,
					$W_s$ also is a $\Sigma^0_1$ set of computable measure. Define
					\[
						C = \{\alpha \in  2^\N \mid \mathbb{P}(W_s|{\xi_{<s} {}^{\frown}\langle \alpha \rangle}) \leq 1-c\}
					\]
					where the computable real $c \in {\closedbegin} 2^{-(2s+2)}, 2^{-(2s+1)} {\openend}$
					is chosen as follows so that the effectively closed set $C$ has computable measure.
					To choose $c$, use Lemma \ref{lem:avoidPoints} and the function $f(\alpha):=\mathbb{P}(W_s|{\xi_{<s} {}^{\frown}\langle \alpha \rangle})$.
					It remains to show that $f$ is $L^1$-computable.
					We do this by showing $f$ is a nonnegative lower semicomputable function with computable integral \cite{2012arXiv1209.5478M}.
					First, $f$ is lower semicomputable since the measures of $\Sigma^0_1$ sets are lower semicomputable.
					Second, we have $\int f(\alpha)\, d\alpha = \mathbb{P}(W_s|{\xi_{<s}})$.
					To compute $\mathbb{P}(W_s|{\xi_{<s}})$, we will use \cite[Corollary~6.10]{MR3123251}.  		
					Namely, the value of a nonnegative lower semicomputable function,
					in this case $g(\psi_{<s})=\mathbb{P}(W_s|{\psi_{<s}})$, with a computable integral, $\int g(\psi) \,d\psi = \mathbb{P}(W_s)$,
					is computable when evaluated at a Schnorr random point, $\psi_{<s} = \xi_{<s}$.

					By Markov's inequality,
					\[
						1-\P(C)			\leq\frac{1}{1-c}\int\P(W_s \mid {\xi_{<s} {}^{\frown} \la {\alpha} \ra}) \, d{\alpha} = \frac{1}{1-c} \P(W_s \mid {\xi_{<s}})
										=    \frac{1}{1-c}\P[ (W_{s-1} \mid {\xi_{<s}}) \cup (V_s \mid {\xi_{<s}})]
					\]
					\[
										\leq\frac{1}{1-c}\left(\P(W_{s-1} \mid {\xi_{<s}})+ \P(V_s \mid {\xi_{<s}})\right)
										<    \frac{1 - 2^{- (2 (s - 1) + 2)} + 2^{- (2s + 1)}}{1 - 2^{- (2s + 1)}} = 1.
					\]
					Therefore, ${\P(C) > 0}$ and we can apply Lemma \ref{pro:Kucera} to find some ${\alpha \in C}$ such that ${\alpha}$ is a tail of ${\omega_s}$.
					Let ${\xi_s = \alpha}$, thereby satisfying requirement ${\mathsf{R}_3^s}$.  Finally, by the definition of $C$ and the choice of $c$ we have
					\[
						{\P(W_s \mid {\xi_{<s+1}}) \leq 1 - 2^{-(2s+2)}},
					\]
					satisfying ${\mathsf{R}_2^s}$.
				\end{proof}
			\subsubsection{Schnorr randomness and $f$-Doob randomness does not imply e.c.u.\ randomness}
				\begin{thm}
					Fix a computable function $f$. There exists a Schnorr random $\omega$ which is $f$-Doob random but not e.c.u.\ random.
				\end{thm}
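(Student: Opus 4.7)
The plan is to tailor the $A_f$-part and $B_f$-part of $\omega$ separately, placing a Schnorr random but not computably random sequence on the $A_f$ coordinates and a Martin-L\"of random sequence (relative to that choice) on the $B_f$ coordinates. The three required properties will then follow from van Lambalgen-style facts together with a single uniform martingale construction that simultaneously witnesses the failure of computable randomness for every tail $\omega_{\geq n}$.

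Concretely, first invoke Proposition~\ref{pro:SRnotCR} with trivial oracle (or cite Wang \cite{Wang}) to obtain $\eta \in 2^{\N}$ that is Schnorr random but not computably random, and fix a computable martingale $e$ on $2^{\N}$ with $\limsup_k e(\eta \upharpoonright k) = \infty$. Then choose $\beta \in 2^{\N}$ Martin-L\"of random relative to $\eta$ (a measure-one choice). Identify $A_f$ and $B_f$ with $\N$ via their lexicographic orderings and define $\omega \in 2^{\N \times \N}$ by $\omega_{A_f} = \eta$ and $\omega_{B_f} = \beta$.

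For Schnorr randomness of $\omega$: since $\beta$ is Martin-L\"of random relative to $\eta$, it is Schnorr random uniformly relative to $\eta$, so by van Lambalgen's theorem for Schnorr randomness \cite{Miyabe:2011zr, 2012arXiv1209.5478M} the join $\eta \oplus \beta$ is Schnorr random; since $\omega$ is a computable permutation of $\eta \oplus \beta$ and Schnorr randomness is preserved by computable permutations \cite{2012arXiv1203.5535R}, $\omega$ is Schnorr random. For $f$-Doob randomness: $\omega_{B_f} = \beta$ is computably random uniformly relative to $\omega_{A_f} = \eta$ because Martin-L\"of random relative to $\eta$ implies computably random uniformly relative to $\eta$, which by Theorem~\ref{thm:DoobCharacterization} specialized to our fixed $f$ is exactly the $f$-Doob randomness condition.

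The heart of the proof is showing $\omega$ is not e.c.u.\ random, i.e., $\omega_{\geq n}$ fails to be computably random uniformly relative to $\omega_{<n}$ for every $n$. For each $n$, let $\tau_n \in 2^*$ be the string of $A_f$-bits lying in rows $< n$, listed in lex order; $\tau_n$ is computable from $\omega_{<n}$ and $f$. Define a martingale $N^{\omega_{<n}}$ on $(2^{\N})^{\N}$ (with indices shifted so it bets on $\omega_{\geq n}$) with initial capital $e(\tau_n)$ that bets according to $e$ exactly at coordinates in $A_f$ and stays constant on $B_f$-coordinates. This is a martingale by the martingale property of $e$, is computable uniformly in $\omega_{<n}$, and its value after reading enough of $\omega_{\geq n}$ to have processed $K$ new $A_f$-bits equals $e(\eta \upharpoonright (|\tau_n|+K))$; hence $\limsup N^{\omega_{<n}}(\omega_{\geq n}) = \infty$. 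The main obstacle I anticipate is ensuring that this single betting scheme is simultaneously uniform in $n$ and compatible with the order in which bits of $(2^{\N})^{\N}$ are consumed by a computable martingale; both requirements are met because $f$ is computable, so the $A_f/B_f$-decomposition is effectively accessible to the martingale at every stage, and the values of $N^{\omega_{<n}}$ depend only on the $A_f$-coordinates, whose processing order within $\omega_{\geq n}$ mirrors the order in which $e$ reads bits of $\eta$.
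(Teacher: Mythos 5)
Your construction template (a Schnorr-random-but-not-computably-random object on the $A_f$ coordinates, a Martin-L\"of random object relative to it on the $B_f$ coordinates) and your verifications of Schnorr randomness and of $f$-Doob randomness match the paper's proof. The gap is in the ``not e.c.u.'' part, and it is genuine. Note first that $A_f$, as it is actually used throughout Section~\ref{sec:4} (see Remark~\ref{rem:gameArray} and the proof of Theorem~\ref{thm:DoobCharacterization}), is the \emph{oracle} part of the array: it meets every row in a cofinite, hence infinite, set. Consequently $A_f$ cannot be identified with $\N$ via its lexicographic ordering (that order has type $\omega\cdot\omega$), your string $\tau_n$ of ``$A_f$-bits lying in rows $<n$'' is infinite rather than an element of $2^*$, and, most importantly, the $A_f$-bits of $\omega_{\geq n}$ are not a tail of $\eta$: they are $\eta$ with an infinite, co-infinite computable set of positions deleted. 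There is no reason such a restriction of $\eta$ remains non-computably-random---by van Lambalgen's theorem a non-computably-random join can perfectly well have one computably random half, so the non-randomness of $\eta$ could be concentrated entirely in the bits that land in rows $<n$. Then $\omega_{\geq n}$ could be computably random uniformly relative to $\omega_{<n}$ for some $n$, and $\omega$ would be e.c.u.\ random after all. (If you instead read $A_f$ literally as the finite initial segments of the rows, your tail argument goes through, but then the theorem itself would fail for $f\equiv 0$, since $\omega$ would just be a rearrangement of $\beta$, hence computably random and therefore e.c.u.\ random; so that reading cannot be the intended one.)

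This is exactly the difficulty that the paper's Propositions~\ref{pro:SRnotCR} and~\ref{pro:miyabe} (the Schnorr-randomness versions of Wang's construction, Ku\v{c}era's theorem, and Miyabe's infinite van Lambalgen theorem) are there to overcome: they produce a single array $\xi$ that is \emph{globally} Schnorr random while \emph{each row} $\xi_n$ is separately Schnorr random but not computably random uniformly relative to $\xi_{<n}$. Putting $\xi_{A_f}$ on the $A_f$ coordinates then guarantees that every row of $\omega$ has a non-computably-random tail, so by Corollary~\ref{cor:vLcr} no tail $\omega_{\geq n}$ can be computably random uniformly relative to $\omega_{<n}$. Your single sequence $\eta$ cannot do this job without essentially reproving Proposition~\ref{pro:miyabe}. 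A smaller, fixable point: even where your betting scheme does apply, a martingale witnessing non-computable-randomness of $\omega_{\geq n}$ must process bits in the interleaved order of $\bigoplus_k \omega_{n+k}$, not in the order in which $e$ reads $\eta$, so you need Lemma~\ref{lem:preservationOfCR} (invariance of uniform computable randomness under computable permutations) rather than the assertion that the two orders ``mirror'' each other.
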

				\begin{proof}
					Fix $f$ and give $A_f$ and $B_f$ a computable ordering. By Propositions~\ref{pro:SRnotCR} and \ref{pro:miyabe},
					there is a Schnorr random $\xi \in 2^{\N\times\N}$ such that
					for each $n$, $\xi_n$ is Schnorr random uniformly relative to $\xi_{<n}$, but
					$\xi_n$ is not computably random uniformly relative to $\xi_{<n}$. Let $\alpha \in 2^\N$ be Martin-L{\"o}f random relative to $\xi$.
					Let ${\omega}$ be defined by
					\begin{itemize}
						\item{} ${\omega_{A_f} = \xi_{A_f}}$ and
						\item{} ${\omega_{B_f}}$ is the bits of $\alpha$ put into the positions in $B_f$ using the ordering on $B_f$.
					\end{itemize}
					Note that each $\omega_n$ is Schnorr random, but not computably random.
					Then
					\begin{itemize}
						\item{} {${\omega}$ is Schnorr random}:
							Since $\xi$ is Schnorr random, so is $\xi_{A_f}=\omega_{A_f}$.
							Since $\alpha$ is Martin-L\"of random relative to $\xi$,
								$\alpha$ is uniformly Schnorr random relative to $\xi$.
							Therefore $\omega_{B_f}$ is uniformly Schnorr random relative to $\xi_{A_f}=\omega_{A_f}$.
							By van Lambalgen's theorem for uniform Schnorr randomness \cite{2012arXiv1209.5478M}, $\omega$ is Schnorr random.
						\item{} {${\omega}$ is not e.c.u.\ random}: By Corollary~\ref{cor:vLcr},
							if $\omega_{\geq n}$ is computably random uniformly relative to $\omega_{<n}$,
							then $\omega_n$ would be computably random relative to $\omega_{<n}$ which it is not.
						\item{} {${\omega}$ is $f$-Doob random} using the characterization in Theorem~\ref{thm:DoobCharacterization}.
							By the choice of $\alpha$, we have that ${\omega_{B_f}}$ is Martin-L{\"o}f random relative to $\omega_{A_f}$.
							Then, ${\omega_{B_f}}$ is also computably random uniformly relative to ${\omega_{A_f}}$  (see \cite{Miyabe:2011zr, 2012arXiv1209.5478M}).
					\end{itemize}
				\end{proof}

	\section{Doob random paths of Brownian motion}\label{sec:Brownian}
		In this section we work with the space $\Omega = \{f \in C({\closedbegin} 0,{\infty}{\openend}) : f(0)=0\}$, where $\P$ is the Wiener measure,
		i.e.\ the probability measure of Brownian motion.
		The martingale convergence theorem for Brownian motion goes back to Doob \cite{MR0058896} and is discussed in Appendix~\ref{appendix}.

		We will use the variables $W$, $X$, $Y$ and $Z$ for elements of $\Omega$.
		Let $W_t$ denote the value of $W$ at time $t$. Similar to the previous sections, we will use $W_{\leq s}$ to denote $W$ restricted to $[0,s]$.
		Also, $W_{\geq s}$ denotes the function $t \mapsto W_{s+t} - W_s$ (hence $W_{\geq s} \in \Omega$).
		Assume $W, \varW \in \Omega$.
		Then define concatenation (at time $s$) as
		\begin{equation}\label{eqn:concat}
			{(W_{\leq s} {}^{\frown}\varW)}_t =
			\begin{cases}
				W_t & t\leq s \\
				W_s + \varW_{t -s} & t > s
			\end{cases}.
		\end{equation}
		The concatenated path $W_{\leq s} {}^{\frown}\varW$
                now is computable from $W_{\leq s}$, $s$, and $\varW$.%
		\footnote{Since $C([0,s])$ may not be a computable metric space, we say that something is computable from $W_{\leq s}$ and $s$
		if it is computable from the function $\varW\in C([0,1])$ given by $\varW_t = W_{s\cdot t}$.
		Similarly, we can say that $W_{\leq s}$ is computable from $s$ and $W$, since
		the corresponding $\varW\in C([0,1])$ is computable uniformly from $s$ and $W$.
		}
		Notice that $W_{\leq s} {}^{\frown}W_{\geq s} = W$.
		
		Our definition of conditional expectation relies on these three properties of $(\Omega,\P)$.  (The second two properties, which come from the definition of Brownian motion (Defintion~\ref{df:BM}),
		are needed to establish that the properties of Remark~\ref{rem:EnProperties} still hold.)
		\begin{itemize}
			\item{} $(\Omega,\P)$ is a computable probability space.
			\item{} $(\Omega,\P)$ is \emph{stationary}, that is the map $W \mapsto W_{\geq t}$ is $(\Omega,\P)$-measure preserving.
			\item{} $(\Omega,\P)$ has \emph{independent increments}, that is the map $W \mapsto W_{\geq t}$ is independent of the map $W \mapsto W_{\leq t}$.
		\end{itemize}

		Since $(\Omega,\P)$ is a computable probability space,
		if $f:\Omega \rightarrow \mathbb{R}$ is a computable function and
		$\|f\|_{\infty} \leq C$ then $\E(f)$ is computable uniformly from $f$ and $C$ \cite[Corollary 4.3.2]{MR2519075}.
		Conditional expectation can be defined similarly to what was done in previous sections, as follows.

		\begin{df}\label{df:Et}
			Given an integrable function $f:\Omega \rightarrow \mathbb{R}$ define $\E_t (f)$ as the function given by
			\[
				\E_t (f) (W) = \int\! g(\varW)\, d\P(\varW) = \E(g)
				\quad\text{where}\quad g(\varW)=f(W_{\leq t} {}^{\frown}\varW).
			\]
		\end{df}
		Notice, if $f$ is computable and $\|f\|_{\infty} \leq C$ then $\E_t (f)(W)$ is computable uniformly from $t$, $f$, $W_{\leq t}$, and $C$.
		(See Remark~\ref{rem:bddInCondExp} for why the bound $C$ is needed.)
		Also, if $f$ is a.e.\ computable and $\|f\|_{\infty} \leq C$, then
		$W_{\leq t} \mapsto \E_t (f)(W)$ is an a.e.\ computable function uniformly from $t$, $f$, and $C$.
		The properties of Remark~\ref{rem:EnProperties} still hold.
		\begin{rem}\label{rem:bddInCondExp}
			Unlike the situation in Section \ref{sec:4}, here $\Omega$ is not compact.
			Hence a continuous function $f$ may be unbounded.
			One can show that $\E_t(f) (W)$ is not computable from $t$, $f$ and $W$, even when $f$ is integrable.
			We shall now present a function $f:\Omega \rightarrow \mathbb{R}$ such that
			\begin{itemize}
				\item{} $f$ is computable;
				\item{} $f$ is integrable, $\E |f| < +\infty$; and
				\item{} $\E_1 (f) (W) = +\infty$ for all $W\in \Omega$ with $W_1 = 0$,
			\end{itemize}
			thereby showing $\E_1 f$ is not computable.
			Let $f(W)=g(W_1, W_2 - W_1)$ for
			\[
				g(x, y) = \exp\left(\frac{ -{(x\alpha(y))}^2 + (x^2+y^2)}{2}\right)
			\]
			where
			\[
				\alpha(y) = e^{y^2/2}.
			\]
			Let $n(x)={e^{-x^2/2}/\sqrt{2\pi}}$, the p.d.f.\ of a standard $N(0,1)$ random variable.%
			\footnote{%
				This remark is the only place in this paper where we make use of the Gaussian distribution.
			}
			Note that
			\[
				g(x, y)n(x)n(y) = \frac{1}{2\pi}\exp\left(\frac{ -{(x\alpha(y))}^2}{2}\right)
				\quad \text{and}\quad
				g(0, y)n(y) = \frac{1}{\sqrt{2\pi}},
			\]
			and for $\alpha\in \mathbb{R}$,
			\[
				\int_{-{\infty}}^{\infty} e^{-{\alpha}x^2/2} dx
				= \int_{-{\infty}}^{\infty} e^{-{(\sqrt{{\alpha}}x)}^2/2} d(\sqrt{{\alpha}}x)/\sqrt{{\alpha}} = \sqrt{2\pi}/\sqrt{{\alpha}}.
			\]
			Treating $W$ as a random variable we have that $W_1$ and $W_2 - W_1$ are independent random variables each with an $N(0,1)$ normal distribution (Definition~\ref{df:BM}).  Thus
			\[
 				\E |f (W)| = \E (g(W_1, W_2-W_1)) = \int_{-{\infty}}^{\infty} \int_{-{\infty}}^{\infty}
				g(x, y)n(x)n(y)\,dx\,dy =
			\]
			\[
 				\int_{-{\infty}}^{\infty}
				\sqrt{2\pi}e^{-y^2/2}
				\,dy = 2\pi.
			\]
			And conditioning on $W_1 = 0$, we have
			\[
				\E_1 (f) (W) = \E (g(0,W_2 - W_1)) = \int_{-{\infty}}^{\infty} g(0, y)n(y)\,dy = \int_{-{\infty}}^{\infty} \frac{1}{\sqrt{2\pi}}\,dy = +{\infty}.
			\]
		\end{rem}
		\begin{df}\label{df:cont_mart}
		Say a sequence ${(M_t)}_{t\in {\closedbegin} 0,{\infty}{\openend}}$ of real-valued functions on $\Omega$ is a \textit{martingale} if both of the following hold almost surely.
			\begin{enumerate}
			\item{} $(M_t) (W)$ is \emph{adapted}, that is $M_t (W)$ is determined from $t$ and $W_{\leq t}$.
				\item{} $\E_s (M_t) = M_s$ for $s \leq t$.
			\end{enumerate}
		\end{df}
		Say that ${(M_t)}_{t\in {\closedbegin} 0,{\infty}{\openend}}$ is a \emph{computable martingale} if, furthermore, it is \emph{computably adapted},
		that is $M_t (W)$ is uniformly computable from $t$ and $W_{\leq t}$.
		\begin{pro}\label{pro:compAdaptedCont}
			A sequence ${(M_t)}_{t\in{\closedbegin} 0,{\infty}{\openend}}$ is a computable martingale if and only if
			it is a martingale and $M_t (W)$ is computable uniformly from $W$ and $t$.
			In particular the paths $t \mapsto M_t(W)$ are computable functions relative to $W$.
		\end{pro}
		\begin{proof}
			The proof is the same as Proposition~\ref{pro:compAdapted}.
		\end{proof}
		Motivated by Proposition~\ref{pro:compAdaptedCont}, a sequence ${(M_t)}_{t\in{\closedbegin} 0,{\infty}{\openend}}$ is an \emph{a.e.\ computable martingale} if
		it is a martingale and the map from $W$ to its path function $t \mapsto M_t(W)$ is an a.e.\ computable map from $\Omega$ to $C({\closedbegin} 0,{\infty}{\openend})$.

		Let $\mathbb{S}={\{s_n\}}_{n\in \N}$ be an unbounded set given by a computable sequence of nonnegative computable reals, for example $\N$ or $\mathbb{Q}^+$.
		Define \emph{$\mathbb{S}$-indexed martingales} ${(M_t)}_{t\in\mathbb{S}}$ similarly,
		that is, restrict the martingale definition to the indices $t \in \mathbb{S}$.
		An \emph{$\mathbb{S}$-indexed computable martingale},
		(respectively, an \emph{$\mathbb{S}$-indexed a.e.\ computable martingale}) is an $\mathbb{S}$-indexed martingale such that ${(M_{s_n})}_{n\in \N}$ is
		a computable sequence of computable functions (respectively, a.e.\ computable functions).
		\begin{lem}\label{lem:NtoR}
			Consider an $\mathbb{S}$-indexed computable martingale ${(M_t)}_{t\in\mathbb{S}}$ such that there is a computable bound on the growth
			$d:\mathbb{S}\rightarrow \mathbb{R}$, that is to say,
			$d(s_n)$ is computable uniformly from $n$ and $|M_s(W)| \leq d(s)$ for all $W \in \Omega$ and $s \in \mathbb{S}$.
			Then ${(M_t)}_{t\in\mathbb{S}}$ uniquely extends to a computable martingale ${(N_t)}_{t\in{\closedbegin} 0,{\infty}{\openend}}$.
			The latter is computable uniformly from the former martingale and the function $d$.%
			\footnote{If working with $L^1$-computable functions, we would not need the bound $d$.}
		\end{lem}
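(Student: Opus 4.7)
The plan is to define the extension pointwise by conditional expectation and let the tower property do the work. Specifically, for each $t\in\closedbegin 0,\infty\openend$ I would choose any $s\in\mathbb{S}$ with $s\geq t$ and set
\[
N_t(W)\;:=\;\E_t(M_s)(W).
\]
By property (5) of Remark~\ref{rem:EnProperties}, if $s\leq s'$ are both in $\mathbb{S}$ with $s\geq t$, then $\E_t(M_{s'})=\E_t(\E_s(M_{s'}))=\E_t(M_s)$, so $N_t$ is independent of the choice of $s$. Restricted to $t\in\mathbb{S}$, taking $s=t$ gives $N_t=\E_t(M_t)=M_t$ by property (1), so $(N_t)$ genuinely extends $(M_t)_{t\in\mathbb{S}}$.

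Next I would check the martingale axioms. Adaptedness is automatic, since $\E_t(M_s)(W)$ depends only on $W_{\leq t}$. For the defining identity, fix $u\leq t$ and pick $s\in\mathbb{S}$ with $s\geq t$; then
\[
\E_u(N_t)\;=\;\E_u(\E_t(M_s))\;=\;\E_u(M_s)\;=\;N_u,
\]
using property (5) twice. Uniqueness follows by the same computation: any martingale extension $(N'_t)$ must satisfy $N'_t=\E_t(N'_s)=\E_t(M_s)=N_t$ for any $s\in\mathbb{S}$ with $s\geq t$.

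For computability I would, given a name for $t$ and for $W$, first produce a rational $q>t$ and then search through $s_0,s_1,\dots$, approximating each and accepting the first $n$ for which the approximation verifies $s_n>q$. This search must terminate because $\mathbb{S}$ is unbounded and each $s_n$ is computable. With such $s=s_n$ in hand, the bound $\|M_s\|_\infty\leq d(s)$ together with the computability of $M_s$ makes $\E_t(M_s)(W)$ computable uniformly from $t$, $W_{\leq t}$, $s$, and $d(s)$, as noted immediately after Definition~\ref{df:Et}. By Proposition~\ref{pro:compAdaptedCont}, joint computability of $N_t(W)$ in $(t,W)$ is all that is required to conclude that $(N_t)_{t\in\closedbegin 0,\infty\openend}$ is a computable martingale, and the whole construction is uniform in the data $(M_t)_{t\in\mathbb{S}}$ and $d$.

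The main obstacle is the computable search for a suitable $s\in\mathbb{S}$ above $t$: since $t$ and the $s_n$ are given only by Cauchy names, $s_n\geq t$ is not decidable, but the strict inequality $s_n>q$ for rational $q>t$ is semi-decidable, and unboundedness of $\mathbb{S}$ plus computability of the $s_n$ guarantees that the search halts. Beyond that step everything reduces cleanly to the bounded-$C$ clause following Definition~\ref{df:Et} and to the five conditional-expectation properties of Remark~\ref{rem:EnProperties}.
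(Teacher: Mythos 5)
Your proposal is correct and follows essentially the same route as the paper: define $N_t=\E_t(M_s)$ for any $s\in\mathbb{S}$ above $t$, use the tower property for well-definedness, the martingale identity, and uniqueness, and then obtain computability by effectively locating some $s_n>t$ from a Cauchy name for $t$ and invoking the bounded-conditional-expectation clause with the bound $d(s_n)$. Your extra remark about the semi-decidability of $s_n>q$ versus the undecidability of $s_n\geq t$ just fills in a detail the paper leaves implicit.
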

		\begin{proof}
			If there is such a martingale $(N_t)$ extending $(M_s)$, then for all $s\in \mathbb{S}$ and all $0<t<s$ it must satisfy
			\[
				N_t = \E_t (\E_s (M_s)) = \E_t (M_s).
			\]
			Furthermore, this equation defines a martingale $(N_t)$, and the definition of $N_t$ is invariant under the choice of $s$.
			(For $r<t<s$, $\E_r (N_t) = \E_r (\E_t M_s) = \E_r M_s = N_r$.)

			It remains to show that the martingale $(N_t)$ is a computable martingale.
			We show $ N_t (W) $ is uniformly computable from $t$ and $W_{\leq t}$ as follows.
			Fix a Cauchy-name for $t$ and use it to effectively find some $n$ such that $s_n > t$ (where $\mathbb{S} = {\{s_n\}}_{n \in\N}$).
			Recall that $N_t(W) = \E_t M_{s_n}(W)$, and that this conditional expectation is uniformly computable from
			$t$, $W_t$, $M_{s_n}$ and the bound $d(s_n)$.
		\end{proof}
		\begin{df}\label{df:Doob-random-Brownian}
			Let $W \in \Omega$. Say that $W$ is \emph{Doob random} (respectively, \emph{$\mathbb{S}$-Doob random})
			if $(M_t (W))$ converges as $t \rightarrow {\infty}$
			for all nonnegative computable martingales ${(M_t)}_{t\in{\closedbegin} 0,{\infty}{\openend}}$ (respectively, ${(M_t)}_{t\in\mathbb{S}}$).
		\end{df}
		The definition of Schnorr randomness naturally extends to any computable probability space (see for example \cite{Hoyrup-Rojas-Gacs}).
		\begin{thm}\label{thm:DoobS}
			Let $W \in \Omega$ be Schnorr random.
			Let $\mathbb{S} = \{s_0 < s_1 < \dots \}$ be the set formed by a computable, increasing, unbounded sequence of computable reals.
			Then the following are equivalent:
			\begin{enumerate}
				\item{} $(M_s (W))$ diverges as $s \rightarrow {\infty}$ for some computable nonnegative martingale ${(M_s)}_{s\in \mathbb{S}}$
					and some computable $d:\mathbb{S} \rightarrow \mathbb{R}$
					where $M_s \leq d(s)$ for all $s\in \mathbb{S}$.
				\item{} $W$ is not $\mathbb{S}$-Doob random.
				\item{} $(M_s (W))$ diverges as $s \rightarrow {\infty}$ for some a.e.\ computable nonnegative martingale ${(M_s)}_{s\in \mathbb{S}}$.
			\end{enumerate}
		\end{thm}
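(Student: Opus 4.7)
The implications (1) $\Rightarrow$ (2) and (2) $\Rightarrow$ (3) are immediate: the former from the definition of $\mathbb{S}$-Doob randomness, and the latter since every computable martingale is trivially a.e.\ computable. The substance is (3) $\Rightarrow$ (1). The plan is to transcribe the proof of (3) $\Rightarrow$ (1) in Theorem~\ref{thm:aeMart} into the Wiener space, indexing the construction by $\mathbb{S}$ via $n \mapsto s_n$. As the authors note in the preamble to the subsection containing Lemma~\ref{lem:approxAeComp}, the lemmas of that subsection were written with exactly this application in mind.

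Given an a.e.\ computable nonnegative martingale $(M_s)_{s\in\mathbb{S}}$ such that $M_{s_n}(W)$ diverges, apply Lemma~\ref{lem:approxAeComp} uniformly in $n$ and $k$ to each $M_{s_n}$. This yields a bounded computable function $N_n$ with $\|N_n\|_\infty \leq d(n)$ for some computable $d$, together with a $\Sigma^0_1$ set $U_n^k\subseteq\Omega$ of computable measure at most $2^{-(n+k+1)}$, such that outside $U_n^k$,
\[
	|N_n - M_{s_n}| \leq 2^{-(n+k+2)}
	\quad\text{and}\quad
	|\E_{s_{n-1}}(N_n) - M_{s_{n-1}}| \leq 2^{-(n+k+2)}.
\]
Since $\{U_n^k\}_{n,k}$ is a total Solovay test and $W$ is Schnorr random, \cite[Theorem~7.1.10]{MR2732288} gives some $k$ with $W \notin U_n^k$ for every $n$; fix this $k$.

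Correct $(N_n)$ to a martingale by $L_0 = N_0$ and $L_{n+1} = N_{n+1} - \E_{s_n}(N_{n+1}) + L_n$. The same telescoping induction as in Theorem~\ref{thm:aeMart} yields $|L_n(W) - M_{s_n}(W)| < 2^{-k}$, so $(L_n(W))$ inherits the divergence of $(M_{s_n}(W))$. Each $L_n$ is computable with a cumulative growth bound computable from $d$, making $(L_n)$ a computable $\mathbb{S}$-indexed martingale of the form required by (1); nonnegativity is arranged by the same constant-shift device used implicitly at the end of the proof of Theorem~\ref{thm:aeMart}.

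The main obstacle is verifying that Lemma~\ref{lem:approxAeComp} transfers to $(\Omega,\P)$, which unlike $2^{\N\times\N}$ is not compact. A general computable function on $\Omega$ need not be bounded, and Remark~\ref{rem:bddInCondExp} shows that without a bound $\E_t(f)$ may even be infinite. Nevertheless, the lemma only requires $L^1$-approximation of a nonnegative a.e.\ computable function with computable expectation by \emph{bounded} computable functions, and a.e.\ computability of $\E_t$ applied to bounded a.e.\ computable functions; both hold on $(\Omega,\P)$ (the first by $L^1$-computability of such functions, the second by the discussion following Definition~\ref{df:Et}). Once these ingredients are assembled, the rest of the argument is a verbatim transcription of the discrete case.
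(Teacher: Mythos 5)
Your proof is correct and follows the paper's own route exactly: the paper treats (1)~$\Rightarrow$~(2) and (2)~$\Rightarrow$~(3) as immediate and disposes of (3)~$\Rightarrow$~(1) by declaring it ``the same proof as (3)~$\Rightarrow$~(1) in Theorem~\ref{thm:aeMart}'', relying on the earlier remark that the lemmas of that subsection were written so as to transfer to $(\Omega,\P)$. Your explicit check that Lemma~\ref{lem:approxAeComp} survives the loss of compactness (via $L^1$-computability of nonnegative a.e.\ computable functions with computable expectation, and a.e.\ computability of $\E_t$ on bounded functions) is precisely the verification the paper leaves implicit; the one place where you are no more detailed than the paper is the nonnegativity of the corrected martingale $(L_n)$, which the referenced proof of Theorem~\ref{thm:aeMart} also does not spell out.
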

		\begin{proof}
			(1) $\Rightarrow$ (2):
				Immediate.

			(2) $\Rightarrow$ (3):
				This is immediate from the fact that a computable martingale is a.e.\ computable.

			(3) $\Rightarrow$ (1):
				This is the same proof as (3) $\Rightarrow$ (1) in Theorem~\ref{thm:aeMart}.
		\end{proof}
		We need the following lemma about a.e.\ computable martingales.
		\begin{lem}\label{lem:ae_comp_mart}
			Let $(M_t)$ be an a.e.\ computable martingale.
			\begin{enumerate}
				\item{} Let $\{q_i\}$ be a computable enumeration of the rationals.
					The doubly indexed sequence of local minimums ${(\min_{t\in [q_i,q_j]} M_t(W))}_{i\in \mathbb{N}, j \in \mathbb{N}}$ is a.e.\ computable from $W$.
					The same is true for the sequence of local maximums.
				\item{} From a name for $(M_t)$, we can compute a dense sequence of reals $\{a_n\}$
					which are almost-surely not local maximums or local minimums of the paths in $(M_t)$.
					That is for all $i$ and $j$, we have
					\[
						\P \left\{W \,\middle|\, a_n = \min_{t\in [q_i,q_j]} M_t(W)\right\}
						=\P \left\{W \,\middle|\, a_n = \max_{t\in [q_i,q_j]} M_t(W)\right\}=0.
					\]
				\item{} Let $a_n$ be from part~(2). Let $\tau : \Omega \rightarrow [0,{\infty}]$ be an a.e.\ computable function.%
					\footnote{An extended real $t$ is \emph{computable in $[0,{\infty}]$} exactly if $t/(t+1)$ is computable in $[0,1]$.
					Using $[0,{\infty}]$ allows us to compute the time of an event when it happens, and else wait forever.}
					Further assume $M_{\tau (W)} (W)$ almost-surely does not equal $a_n$.
					Then let $\sigma : \Omega \rightarrow [0,{\infty}]$ be the first hitting time after $\tau(W)$, that is
					\[
						\sigma(W) = \min\{t\geq \tau(W) \mid M_t(W) = a_n \}.
					\]
					This $\sigma$ is an a.e.\ computable function.
			\end{enumerate}
		\end{lem}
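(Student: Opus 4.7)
The plan is to handle the three parts in order, with the bulk of the work in part~(3).

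For part~(1), I use the fact, already observed in the paragraph following Proposition~\ref{pro:compAdaptedCont}, that an a.e.\ computable martingale is exactly a martingale for which $W\mapsto(t\mapsto M_t(W))$ is an a.e.\ computable map from $\Omega$ into $C(\closedbegin 0,\infty\openend)$. Since taking the maximum and minimum on a fixed compact rational interval $[q_i,q_j]$ is a computable operation on continuous functions, composing gives the claim uniformly in $(i,j)$. For part~(2), I feed the computable sequence of a.e.\ computable functions from part~(1) (both the mins and the maxes, listed as a single doubly-indexed sequence) into Lemma~\ref{lem:avoidPoints}; the resulting dense $\{a_n\}$ has the stated property.

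For part~(3), I would produce a Cauchy name for $\sigma(W)\in[0,\infty]$ (via $s\mapsto s/(s+1)$) from simultaneous upper and lower approximations, working on the measure-one set on which $M$ and $\tau$ are a.e.\ computable, the $a_n$-conclusion of part~(2) holds for $W$, and $M_{\tau(W)}(W)\neq a_n$. For the upper bound, I semi-enumerate rational pairs $q_i<q_j$ such that (a)~an upper approximation of $\tau(W)$ certifies $\tau(W)\leq q_i$ and (b)~$\min_{[q_i,q_j]}M_\bullet(W)<a_n<\max_{[q_i,q_j]}M_\bullet(W)$. Each strict inequality in~(b) is positively verifiable precisely because, by part~(2), neither the min nor the max equals $a_n$; continuity then yields $\sigma(W)\leq q_j$.

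The lower bound is the delicate step. For $r<\tau(W)$ it suffices to use the upper approximations of $\tau$. For $r\geq\tau(W)$ I exploit the hypothesis $M_{\tau(W)}(W)\neq a_n$: compute $M_{\tau(W)}(W)$ (a.e.\ computable from $W$), find a rational $\eta>0$ with $\eta<|M_{\tau(W)}(W)-a_n|$, and by continuity of $t\mapsto M_t(W)$ find a rational $\delta>0$ such that $|M_t(W)-a_n|>\eta/2$ on $[\tau(W)-\delta,\tau(W)+\delta]$. Then, given any lower approximation $\tau^-$ with $\tau(W)-\tau^-<\delta$ and any rational $q_i\in(\tau^-,\tau(W))$ together with $q_j=r$, I can semi-decide via part~(1) and part~(2) whether $M$ avoids $a_n$ on $[q_i,q_j]$; such a witness certifies $r<\sigma(W)$. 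Enumerating all such witnesses gives a nondecreasing sequence of rational lower bounds converging to $\sigma(W)$ whenever $\sigma(W)<\infty$.

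The main obstacle is exactly this lower approximation: a naive attempt would require the exact value of $\tau(W)$, which is only approximable. The trick is to combine the hypothesis $M_{\tau(W)}(W)\neq a_n$ with path continuity to obtain a quantitative buffer around $\tau(W)$ on which $M$ stays bounded away from $a_n$, so that rational-endpoint intervals overlapping $\tau(W)$ are nevertheless valid witnesses for the avoidance statement.
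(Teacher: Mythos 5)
Your proposal is correct and takes essentially the same route as the paper: part~(1) via computable max/min over compact rational intervals applied to the a.e.\ computable path map, part~(2) by feeding the doubly-indexed family of local extrema into Lemma~\ref{lem:avoidPoints}, and part~(3) by producing lower and upper rational bounds for $\sigma(W)$ from semi-decidable statements about $\min$ and $\max$ over rational intervals, with part~(2) guaranteeing the relevant strict inequalities hold almost surely. One slip of wording: to certify $r<\tau(W)\leq\sigma(W)$ you need \emph{lower} approximations of $\tau(W)$ exceeding $r$, not upper ones; your $\eta$--$\delta$ buffer around $\tau(W)$ is a nice touch that makes the "interval contains $\tau(W)$" certification more explicit than in the paper's own argument.
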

		\begin{proof}
			(1): By our definition of a.e.\ computable martingale,
			the path functions $t \mapsto M_t(W)$ are almost-surely computable (as continuous functions in $C({\closedbegin} 0,{\infty}{\openend})$) from $W$.
			The minimum of a continuous function $f: \mathbb{R} \rightarrow \mathbb{R}$
			over a rational compact interval $[q_i,q_j]$ is uniformly computable from $f$ and $[q_i, q_j]$
			\cite[Cor.~6.2.5]{Weihrauch2000}.
			Hence the sequence of local minimums (and maximums) is a.e.\ computable, uniformly from $W$.

			(2) This just follows from Lemma~\ref{lem:avoidPoints}
			since the sequence of local minimums and maximums is a computable sequence of a.e.\ computable functions by part (1).
			(Note the proof of Lemma~\ref{lem:avoidPoints} holds for any computable probability space, including the Wiener measure.)

			(3) It is enough to compute $\sigma(W)$ from the path $f := t \mapsto M_t(W)$ and from the starting point $t_0 := \tau(W)$ for a.e.\ $W$.
			We assumed with measure one that $f(t_0) \neq a_n$.
			Without loss of generality, $f(t_0) > a_n$.
			Using part (1), enumerate all rational intervals $[q_i, q_j]$ containing $t_0$ such that $\min_{t\in[q_i,q_j]} f(t) > a_n$.
			Then $\tau(W)$ is the supremum of the left endpoints of these intervals.  Hence $\tau(W)$ is a.e.\ lower semicomputable.
			We show the time $\tau(W)$ is also upper semicomputable as follows.  By part (2), we may assume $a_n$ is not a local maximum of $f$.
			As before, enumerate all rational intervals $[q_i, q_j]$ such that $q_i > t_0$ and $\max_{t\in[q_i,q_j]} f(t) < a_n$.
			Then $\tau(W)$ is the infimum of the left endpoints of these intervals since $a_n$ is not a local maximum.
			(It is possible that $\tau(W)$ is the infimum of the empty set, in which case $\tau(W) = {\infty}$.)
			Hence $\tau(W)$ is a.e.\ upper semicomputable, and therefore, a.e \ computable.
		\end{proof}
		\begin{thm}\label{thm:DoobCont}
			Let $W \in \Omega$ be Schnorr random. Let $\mathbb{S}$ be an enumerable unbounded set of nonnegative computable reals.
			Then the following are equivalent:
			\begin{enumerate}
			\item{} $W$ is not Doob random.
			\item{} $(M_t (W))$ diverges as $t \rightarrow {\infty}$ for some a.e.\ computable nonnegative martingale $(M_t)$.
			\item{} $\limsup_{t\rightarrow {\infty}} M_t (W) = {\infty}$ for some a.e.\ computable nonnegative martingale $(M_t)$.
				\item{} $\liminf_{t\rightarrow {\infty}} M_t (W) = {\infty}$ for some a.e.\ computable nonnegative martingale $(M_t)$.
				\item{} $\liminf_{s\rightarrow {\infty}} M_s (W) = {\infty}$ for some a.e.\ computable nonnegative martingale ${(M_s)}_{s\in \mathbb{S}}$
				\item{} $W$ is not $\mathbb{S}$-Doob random.
			\end{enumerate}
		\end{thm}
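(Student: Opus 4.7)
I would prove the cycle $(1) \Rightarrow (2) \Rightarrow (3) \Rightarrow (4) \Rightarrow (5) \Rightarrow (6) \Rightarrow (1)$. The easy links are $(1) \Rightarrow (2)$, because every computable martingale is a.e.\ computable; $(4) \Rightarrow (5)$, by restriction of the index to $\mathbb{S}$; and $(5) \Rightarrow (6)$, which holds since $\liminf_s M_s(W) = \infty$ forces $(M_s(W))$ to diverge, so by Theorem~\ref{thm:DoobS} the point $W$ is not $\mathbb{S}$-Doob random.

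For $(6) \Rightarrow (1)$, I would apply Theorem~\ref{thm:DoobS} to obtain a computable, uniformly bounded $\mathbb{S}$-indexed nonnegative martingale $(M_s)_{s \in \mathbb{S}}$ along which $(M_s(W))$ diverges, and then use Lemma~\ref{lem:NtoR} to extend it to a computable $[0,\infty)$-indexed martingale $(N_t)$ that agrees with $(M_s)$ on $\mathbb{S}$. Divergence of $(M_s(W))$ on $\mathbb{S}$ then implies $(N_t(W))$ cannot converge as $t \to \infty$, so $W$ is not Doob random.

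The substantive work is in $(2) \Rightarrow (3)$ and $(3) \Rightarrow (4)$, which adapt the upcrossing and savings-property constructions of Theorem~\ref{thm:DoobSequences} to the continuous-time, a.e.\ computable setting. For $(2) \Rightarrow (3)$, if $\liminf_t M_t(W) = \infty$ we are already done; otherwise pick reals $a < b$ straddling $\liminf_t M_t(W)$ and $\limsup_t M_t(W)$. The naive hitting times of levels $a, b$ need not be a.e.\ computable, so I would draw $a, b$ from the dense sequence of Lemma~\ref{lem:ae_comp_mart}(2), which almost surely are not local extrema of any path of $(M_t)$; Lemma~\ref{lem:ae_comp_mart}(3) then makes the up- and down-crossing hitting times $\sigma^{\mathrm{up}}_k, \sigma^{\mathrm{down}}_k$ a.e.\ computable. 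The upcrossing martingale $N_t$---growing on upcrossings and constant on downcrossings, defined exactly as in Theorem~\ref{thm:DoobSequences}---is then an a.e.\ computable nonnegative martingale, and the same telescoping identity yields $N_{\sigma^{\mathrm{down}}_k(W)}(W) \geq k(b - a)$, so $\limsup_t N_t(W) = \infty$. The argument for $(3) \Rightarrow (4)$ is analogous: fix a sequence of doubling levels $a_k$ drawn from Lemma~\ref{lem:ae_comp_mart}(2) with $a_{k+1} \geq 2 a_k$, obtain a.e.\ computable hitting times $\tau_k$ from Lemma~\ref{lem:ae_comp_mart}(3), and define the savings martingale $N_t = \tfrac{1}{2} N_{\tau_k} + \tfrac{1}{2} N_{\tau_k} \cdot M_t / M_{\tau_k}$ on $(\tau_k, \tau_{k+1}]$; on the event $\tau_k < \infty$ we have $M_{\tau_k} = a_k$ by path-continuity, so $N_{\tau_{k+1}}(W) \geq \tfrac{3}{2} N_{\tau_k}(W)$ and hence $\liminf_t N_t(W) = \infty$.

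The main obstacle is ensuring that the stopping times in these two constructions are genuinely a.e.\ computable in continuous time: a naive hitting time of a level can fail to be upper semicomputable if the martingale touches that level only as a local maximum. This is exactly what Lemma~\ref{lem:ae_comp_mart} fixes, provided that all level parameters are drawn from its dense sequence of non-extremum values; once that is arranged, the discrete-time combinatorics of Theorem~\ref{thm:DoobSequences} transfer verbatim.
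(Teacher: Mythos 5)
Your proposal follows the paper's proof essentially step for step: the same cycle of implications, the same use of Lemma~\ref{lem:ae_comp_mart}(2)--(3) to draw the crossing and doubling levels from a dense set of a.s.\ non-extremal values so that the continuous-time hitting times become a.e.\ computable, and the same use of Theorem~\ref{thm:DoobS} together with Lemma~\ref{lem:NtoR} to pass between a.e.\ computable and computable martingales and between $\mathbb{S}$-indexed and $[0,\infty)$-indexed ones. Your rendering of $(3)\Rightarrow(4)$ with \emph{fixed} doubling levels $a_k$ taken from the dense sequence is a sound concretization of the paper's terse ``same adaptations'' remark, and is in fact the right move, since the adaptive levels $2M_{\tau_k}(W)$ of the discrete proof would not obviously lie in the non-extremal set. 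One wrinkle to repair: Theorem~\ref{thm:DoobS} is stated only for $\mathbb{S}$ given by a computable \emph{increasing} unbounded sequence, whereas here $\mathbb{S}$ is merely enumerable, so in $(5)\Rightarrow(6)$ (and likewise when you invoke Theorem~\ref{thm:DoobS} in $(6)\Rightarrow(1)$) you should first pass to an increasing subsequence $\mathbb{S}'\subseteq\mathbb{S}$, apply Theorem~\ref{thm:DoobS} there, and extend the resulting bounded computable martingale back over $\mathbb{S}$ via Lemma~\ref{lem:NtoR} --- exactly the detour the paper takes.
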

		\begin{proof}
			(1) $\Rightarrow$ (2):
				This is immediate, since computable martingales are a.e.\ computable martingales.

			(2) $\Rightarrow$ (3):
				The proof is similar to the ``upcrossing method'' used in Theorem~\ref{thm:DoobSequences}.
				Assume $a$ and $b$ are such that they are in the set $\{a_n\}$ from part (2) of Lemma~\ref{lem:ae_comp_mart}
				and such that $\liminf_t M_t(W) \leq a < b \leq \limsup_t M_t(W)$.
				Define the upcrossing and downcrossing times as before:
				\begin{align*}
					{\sigma}^\text{up}_0 &= 0 \\
					{\sigma}^\text{down}_{k} &= \inf\{t\in {\openbegin} {\sigma}^\text{up}_{k},   {\infty}{\closedend} : M_t = b\}\\
					{\sigma}^\text{up}_{k+1} &= \inf\{t\in {\openbegin} {\sigma}^\text{down}_{k}, {\infty}{\closedend} : M_t = a\}
				\end{align*}
				These times are a.e.\ computable by part (3) of Lemma~\ref{lem:ae_comp_mart}.
				Also, since the paths of $(M_t)$ are almost-surely continuous, we have that
				$\lim_k {\sigma}^\text{down}_{k}=\lim_k {\sigma}^\text{up}_{k}={\infty}$ almost-surely.

				Now define the new martingale as follows. Let $N_0 = M_0$ and
				\[
					N_t =
					\begin{cases}
						({M}_t - {M}_{{\sigma}^\text{up}_k}) + N_{{\sigma}^\text{up}_k}
						& t \in {\openbegin} {\sigma}^\text{up}_k, {\sigma}^\text{down}_k{\closedend}\\
						N_{{\sigma}^\text{down}_k}
						& t \in {\openbegin} {\sigma}^\text{down}_k, {\sigma}^\text{up}_{k+1}{\closedend}
					\end{cases}.
				\]
				Similarly to Theorem~\ref{thm:DoobSequences}, this is a well-defined martingale.
				Moreover, the paths $t \mapsto N_t(\varW)$ are almost-surely computable from $\varW$,
				since ${\sigma}^\text{up}_k(\varW)$, ${\sigma}^\text{down}_k(\varW)$, and $t \mapsto M_t(\varW)$ are almost-surely computable from $\varW$.
				(Notice the paths are continuous, which lets us compute $N_t(\varW)$ when $t$ is on the boundary of the upcrossing/downcrossing intervals.)
				Hence, $(N_t)$ is an a.e.\ computable martingale.
				Similarly to before, we have $\limsup_{t\rightarrow {\infty}} N_t (W) = {\infty}$.

			(3) $\Rightarrow$ (4): Follow the ``savings property method'' used in Theorem~\ref{thm:DoobSequences}
				with the same adaptions to the continuous setting that we made in (2) $\Rightarrow$ (3) of this theorem.

			(4) $\Rightarrow$ (5):
				Because $\liminf_{t\rightarrow {\infty}} M_t (W) = {\infty}$, it is enough to restrict $(M_t)$ to indices from the unbounded set $\mathbb{S}$.

			(5) $\Rightarrow$ (6):
				Choose an unbounded increasing subsequence $\mathbb{S}' = \{ s_0 < s_1 < \dots \} \subseteq \mathbb{S}$.
				By Theorem \ref{thm:DoobS}, we can find a computable $\mathbb{S}'$-indexed martingale ${(N_s)}_{s\in \mathbb{S}'}$
				with some computable $d$ such that $M_s \leq d(s)$ for all $s\in \mathbb{S}'$ and such that $\liminf_{s\rightarrow {\infty}} N_s (W) = {\infty}$.
				Then extend $(N_s)$ to an $\mathbb{S}$-indexed martingale by using the proof of Lemma~\ref{lem:NtoR}.

			(6) $\Rightarrow$ (1):
				By Theorem \ref{thm:DoobS}, we may assume there is some $d$ such that $M_s \leq d(s)$ for all $s\in \mathbb{S}$.
				Then by Lemma~\ref{lem:NtoR}, we can extend ${(M_s)}_{s\in \mathbb{S}}$ to a martingale indexed on the reals.
		\end{proof}
		\subsection{The isomorphism between ${(2^\N)}^\N$ and Brownian motion}
			In addition to the Wiener measure space $(\Omega,\P)$ defined above,
			there is a Wiener measure space $(C([0,1]),\P)$ which is the measure of Brownian motion in the interval $[0,1]$.
			This subsection will use the following additional facts about Brownian motion.
			\begin{itemize}
				\item{} $(C([0,1]),\P)$ is a computable probability space.
				\item{} $(C([0,1]),\P)$ is an atomless probability space.
				\item{} Brownian paths almost-surely start at $0$, i.e. $\P\{W:W_0=0\}=1$.
				\item{} The space $(\Omega,\P)$ is isomorphic to the product space ${(C[0,1],\P)}^\N$,
				through the map $W \mapsto (W_{[0,1]}, W_{[1,2]}, \ldots)$ where $W_{[s,t]}$ is $W$ restricted to $[s,t]$.
			\end{itemize}
		
			\begin{df}[{\cite[Definition~5.3]{MR2519075}}, see also \cite{2012arXiv1203.5535R}]
				An \emph{a.e.\ computable isomorphism} $B:2^\N \rightarrow \Omega$ is an a.e.\ computable function such that
				the push-forward measure of $B$ (when $2^\N$ has the fair coin measure) is the Wiener measure,
				and such that there is an a.e.\ computable inverse map $B^{-1}:\Omega \rightarrow 2^\N $ satisfying
				\begin{equation}\label{eq:aeIsomorphism}
					B^{-1}(B(\alpha))=\alpha \ \text{a.e.}\quad\text{and}\quad B(B^{-1}(W))=W\ \text{a.e.}
				\end{equation}
				One can also replace $\Omega$ with $C([0,1])$. (Note,
				(\ref{eq:aeIsomorphism}) holds for all Schnorr random (even Kurtz random) $\alpha$ and $W$ \cite{2012arXiv1203.5535R}.)
			\end{df}

			Since $C([0,1])$ with the Wiener measure is an atomless computable probability space,
			we can apply the Carath{\'e}odory isomorphism theorem to find
			an a.e.\ computable isomorphism $\widetilde{B}:2^\N \rightarrow C([0,1])$ \cite{MR2519075, kjos-hanssen-nerode}.

			We extend $\widetilde{B}$ to an a.e.\ computable isomorphism $B:{(2^\N)}^\N \rightarrow \Omega$ as follows.
			Let us denote concatenation of functions $f\in C[0,a]$ and $g\in C[0,b]$ by ${}^\frown$, i.e.,
			\[
				f^\frown g \in C[0,a+b],
			\]
			\[
				f^\frown g(x)=f(x),\quad x\in [0,a],
			\]
			\[
				f^\frown g(x)=f(a) - g(0) +g(x-a),\quad x\in [a,b].
			\]
			We note that this makes ${}^\frown$ associative.
			Let
			\begin{align*}
				B(\omega) &= \widetilde{B}(\omega_0) {}^{\frown}\widetilde{B}(\omega_1) {}^{\frown}\dots \\
				B(\omega_{< n}) &=\widetilde{B}(\omega_0) {}^{\frown}\dots {}^{\frown}\widetilde{B}(\omega_{n-1}) = {(B(\omega))}_{\leq n}\quad(n\in \N).
			\end{align*}
			For $W \in \Omega$ let $W_{[s,t]}$ be $W$ restricted to $[s,t]$.\footnote{%
				More formally $W_{[s,t]}={(W_{\geq s})}_{\leq t}$.
				}
			Then the inverse of $B$ is the a.e.\ computable function
			\begin{align*}
				B^{-1}(W) &= \left(\widetilde{B}^{-1}(W_{[0,1]}),\,\widetilde{B}^{-1}(W_{[1,2]}),\,\dots \right) \\
				B^{-1}(W_{\leq n}) &= \left(\widetilde{B}^{-1}(W_{[0,1]}),\,\dots,\widetilde{B}^{-1}(W_{[n-1,n]})\right) = {(B^{-1}(W))}_{< n}\quad(n\in\N).
			\end{align*}

			We have already mentioned Schnorr randomness on $(\Omega,\P)$.
			For \emph{computable randomness} on $(\Omega,\P)$, we will use the definition in \cite{2012arXiv1203.5535R}.
			Say that $W$ is \emph{e.c.u.\ random} if there is an $n\in \N$ such that
			$B^{-1}(W_{\geq n})$ is computably random uniformly relative to $B^{-1}(W_{\leq n})$.
			\begin{lem}\label{lem:preservation}
				Let $W\in\Omega$ be Schnorr random.  Then $\omega = B^{-1}(W)$ is Schnorr random.
				Further $W$ is, respectively, computably random, e.c.u.\ random, or Doob random, if and only if $\omega=B^{-1}(W)$ is.
			\end{lem}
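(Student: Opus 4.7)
The plan is to handle the four randomness notions in turn, exploiting that $B:(2^\N)^\N \to \Omega$ is an a.e.\ computable measure-preserving isomorphism with a.e.\ computable inverse. For Schnorr randomness and for computable randomness, the equivalence is an instance of the general preservation of these notions under a.e.\ computable isomorphisms established in \cite{2012arXiv1203.5535R}; applied once to $B$ and once to $B^{-1}$, this closes those two cases. For e.c.u.\ randomness the equivalence is essentially definitional: since $B^{-1}(W_{\leq n}) = \omega_{<n}$ and $B^{-1}(W_{\geq n}) = \omega_{\geq n}$ (a.e., hence for Schnorr random $W$), the clause defining e.c.u.\ randomness of $W$ is literally the e.c.u.\ randomness of $\omega$; the only additional fact to cite is that ``computably random uniformly relative to'' is preserved under the a.e.\ computable oracle transformation induced by $\widetilde{B}$.

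The substantive case is Doob randomness. Since $W$ (equivalently $\omega$) is Schnorr random, Theorem~\ref{thm:DoobCont} reduces ``$W$ is not Doob random'' to the existence of an a.e.\ computable nonnegative $\N$-indexed martingale $(M_n)$ on $\Omega$ with $M_n(W)$ divergent, and Theorem~\ref{thm:aeMart} gives the analogous reduction on $(2^\N)^\N$. I would then transport martingales via $N_n := M_n \circ B$ and $M_n := N_n \circ B^{-1}$, verifying in order: (i) uniform a.e.\ computability of the composition (immediate); (ii) adaptedness, which uses the identity $B(\omega_{<n}{}^\frown \xi) = \widetilde{B}(\omega_0){}^\frown\cdots{}^\frown\widetilde{B}(\omega_{n-1}){}^\frown B(\xi)$ built into the construction of $B$; and (iii) the martingale identity $\E_m(N_n) = N_m$, which after unfolding the two conditional expectations reduces to a Fubini argument using measure-preservation of $\widetilde{B}$ on each factor. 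Since $B$ and $B^{-1}$ are inverse on Schnorr random points, $M_n(W) = N_n(\omega)$ for every $n$, so the two divergence conditions are equivalent.

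The main obstacle will be step (iii): showing that $B$ intertwines the conditional expectation $\E_n$ on $\Omega$ (defined by concatenating paths at real time $n$) with the conditional expectation $\E_n$ on $(2^\N)^\N$ (defined by concatenating sequences at index $n$). This requires the product-structure description $\Omega \cong C([0,1])^\N$ together with measure-preservation of $\widetilde{B}$ on each factor, and care must be taken to argue the equalities outside a common null set on which all the a.e.\ computable maps in sight are defined and well-behaved. Once these identities are established the martingale correspondence is automatic, and the three remaining claims (Schnorr, computable, e.c.u.) will have been disposed of by citation and definition unwinding.
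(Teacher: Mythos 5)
Your proposal is correct and follows essentially the same route as the paper: citation of invariance under a.e.\ computable isomorphisms for Schnorr and computable randomness, definition-unwinding for e.c.u.\ randomness, and for Doob randomness a reduction to $\N$-indexed martingales via Theorem~\ref{thm:DoobCont} followed by transporting martingales through $B$, with adaptedness coming from $B(\xi_{<m}{}^\frown\psi)={(B(\xi))}_{\leq m}{}^\frown B(\psi)$ and the identity $\E_m(N_n)=N_m$ coming from a change of variables using measure preservation. The only cosmetic difference is that you make explicit the appeal to the a.e.\ computable martingale characterizations (Theorems~\ref{thm:aeMart} and~\ref{thm:DoobS}/\ref{thm:DoobCont}), which the paper leaves implicit.
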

			\begin{proof}
				For Schnorr and computable randomness, this follows from the fact that
				a.e.\ computable isomorphisms preserve Schnorr and computable randomness \cite{2012arXiv1203.5535R}.
				For e.c.u.\ randomness, this follows from the definition.

				For Doob randomness, by Theorem~\ref{thm:DoobCont}, it is enough to replace Doob randomness with $\N$-Doob randomness.
				We will prove one direction.
				The other is the same.
				Assume $W$ is not $\N$-Doob random.
				Then there is a martingale ${(M_n)}_{n\in \N}$ on $\Omega$ such that $(M_n(W))$ diverges as $n \rightarrow {\infty}$.
				Define a new martingale $(N_n)$ on ${(2^\N)}^\N$ by $N_n(\xi) = M_n(B(\xi))$.
				Since $B$ is a.e.\ computable, so is $(N_n)$.
				It remains to show it is a martingale.
				$N_n(\xi)$ is a.e.\ computable from ${B(\xi)}_{\leq n} =B(\xi_{< n}) $ which is computable from $\xi_{< n}$.
				Also for $m \leq n$,
				\begin{align*} \E_m (N_n) (\xi)
					&= \int_{{ (2^\N)}^\N} N_n (\xi_{<m} {}^{\frown} \psi) d\psi\\
					&= \int_{{ (2^\N)}^\N} M_n\big(B (\xi_{<m} {}^{\frown} \psi)\big) d\psi\\
					&= \int_{{ (2^\N)}^\N} M_n\big({B (\xi)}_{\leq m} {}^{\frown} B (\psi)\big) d\psi\\
					&= \int_{\Omega} M_n\big({B (\xi)}_{\leq m} {}^{\frown} \varvarW\big) d\P (\varvarW)\\
					&= \E_m (M_n) (B(\xi)) = M_m (B(\xi)) = N_m (\xi)
				\end{align*}
				which completes the proof.
			\end{proof}
			\begin{thm}\label{thm:DoobAndSchnorr}
				For a $W \in \Omega$, consider the the following:
				\begin{enumerate}
				\item{} $W$ is computably random.
					\item{} $W$ is e.c.u.\ random and Schnorr random.
					\item{} $W$ is Doob random and Schnorr random.
					\item{} $W$ is Schnorr random.
				\end{enumerate}
				We have (1) $\Rightarrow$ (2) $\Rightarrow$ (3) $\Rightarrow$ (4) and the negative results
				(4) $\not\Rightarrow$ (3), and (2) $\not\Rightarrow$ (1).\footnote{We do not know whether (3) $\Leftrightarrow$ (2).}
			\end{thm}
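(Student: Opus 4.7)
The plan is to reduce this theorem almost entirely to its discrete analogue, Theorem~\ref{thm:SchnorrDoob}, by transferring everything through the a.e.\ computable isomorphism $B:{(2^\N)}^\N \to \Omega$ and invoking Lemma~\ref{lem:preservation}. Since Lemma~\ref{lem:preservation} says that on Schnorr random inputs the notions of computable, e.c.u., and Doob randomness are mutually preserved between $W$ and $\omega = B^{-1}(W)$, each implication and counterexample on $\Omega$ corresponds verbatim to the one proved on ${(2^\N)}^\N$ in Theorem~\ref{thm:SchnorrDoob}.

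For the positive chain (1) $\Rightarrow$ (2) $\Rightarrow$ (3) $\Rightarrow$ (4), I would first recall that computable randomness on $\Omega$ implies Schnorr randomness (this follows from the definition via $B$ together with the corresponding discrete implication). Then for a computably random (hence Schnorr random) $W$, I set $\omega = B^{-1}(W)$ and apply Lemma~\ref{lem:preservation} to conclude $\omega$ is computably random; Theorem~\ref{thm:SchnorrDoob} then gives $\omega$ is e.c.u.\ random and Schnorr random, and pushing back via Lemma~\ref{lem:preservation} yields (2). The step (2) $\Rightarrow$ (3) is handled identically: the Schnorr randomness of $W$ lets me invoke Lemma~\ref{lem:preservation} to transfer ``e.c.u.\ random'' to $\omega$, apply Theorem~\ref{thm:SchnorrDoob} to get ``Doob random'' for $\omega$, and push back to $W$. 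The implication (3) $\Rightarrow$ (4) is immediate.

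For the negative results I would run the reverse transfer: start with a witness $\omega \in {(2^\N)}^\N$ produced in the proof of Theorem~\ref{thm:SchnorrDoob}, and set $W = B(\omega)$. Since $B$ preserves Schnorr randomness (in both directions, by a.e.\ isomorphism), $W$ is Schnorr random. Lemma~\ref{lem:preservation} then lets me transfer the relevant property of $\omega$ (e.c.u.\ random but not computably random, respectively Schnorr random but not Doob random) to $W$, yielding (2) $\not\Rightarrow$ (1) and (4) $\not\Rightarrow$ (3).

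The main subtle point to watch is that Lemma~\ref{lem:preservation} is stated under the standing hypothesis that $W$ is Schnorr random, so at each application one must first verify Schnorr randomness of $W$ before transferring computable/e.c.u./Doob randomness across $B$. In the positive direction this is free, since (1) already entails Schnorr randomness via the definition of computable randomness on $\Omega$; in the negative direction it is free because $W = B(\omega)$ with Schnorr random $\omega$, and a.e.\ computable isomorphisms preserve Schnorr randomness \cite{2012arXiv1203.5535R}. No new probabilistic or combinatorial arguments should be required beyond what is already in Section~\ref{sec:4} and Lemma~\ref{lem:preservation}.
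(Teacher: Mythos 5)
Your proposal is correct and is exactly the paper's argument: the paper proves Theorem~\ref{thm:DoobAndSchnorr} by citing Theorem~\ref{thm:SchnorrDoob} together with Lemma~\ref{lem:preservation}, which is precisely the transfer-through-$B$ strategy you describe. Your additional care about verifying the Schnorr randomness hypothesis before each application of Lemma~\ref{lem:preservation} is a correct and worthwhile elaboration of the details the paper leaves implicit.
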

			\begin{proof}
				Follows from Theorem~\ref{thm:SchnorrDoob} and Lemma~\ref{lem:preservation}.
			\end{proof}
			\begin{thm}
				Doob randomness is incomparable with Schnorr randomness.
			\end{thm}
			\begin{proof}
				Doob $\not\Rightarrow$ Schnorr: Let ${\mathbf{0}}$ be the constant zero function.
				Let $Y={\mathbf{0}}_{\leq 1} {}^{\frown} X_{\geq 1}$ where ${X \in \Omega}$ is Doob random.
				Clearly, $Y$ is not Schnorr random, but we claim it is Doob random.
				Assume not.
				Then by Theorem~\ref{thm:DoobCont} there is a computable martingale ${(M_n)}_{n \in \N, n\geq 1}$ such that ${\lim_n M_n(Y) = {\infty}}$.
				Define a new martingale $N_n (W) = M_n ({\mathbf{0}}_{{\le} 1} {}^{\frown} W_{{\ge} 1})$.  To see that this is a martingale, for $m<n$,
				\begin{align*}
					(\E_m N_n)(W)	&= \int\! N_n\left(W_{{\le} m} {}^{\frown}Z\right)\, dZ\\
									&= \int\! M_n\left(\mathbf{0}_{{\le} 1} {}^{\frown}W_{[1,m]} {}^{\frown}Z\right)\, dZ\\
									&= \int\! M_n\big({(\mathbf{0}_{{\le} 1} {}^{\frown}W_{{\ge} 1})}_{{\le} m} {}^{\frown}Z \big)\, dZ\\
									&= (\E_m M_n) (\mathbf{0}_{{\le} 1} {}^{\frown}W_{{\ge} 1})\\
									&= M_m (\mathbf{0}_{{\le} 1} {}^{\frown}W_{{\ge} 1}) = N_m (W).
				\end{align*}
				Then ${\lim_n N_n(X) = {\infty}}$ which contradicts that $X$ is Doob random.

				Schnorr $\not\Rightarrow$ Doob: This is (4) $\not\Rightarrow$ (3) of Theorem \ref{thm:DoobAndSchnorr}.
			\end{proof}
	\begin{appendix}
		\section{Conditional expectation, martingales, and Doob's theorem}\label{appendix}
			As explained in the introduction, we chose definitions of conditional expectation and martingales to fit our specific situations and to make our proofs more elementary.
			The reader however, may be curious to know the more general definition of martingale used in probability theory.
			In this section we will sketch the details.
			This section is not necessary for understanding the rest of the paper.
			For more information on probability theory and discrete time martingales, see \cite{MR1155402}.
			For Brownian motion and continuous time martingales, see \cite{MR2001996}.
			For more about the computability of filtrations and martingales, see \cite{Rute:2013uq}.
	
			\subsection{Discrete-time martingales}
				Fix a probability space $(\Omega, \mathcal{B}, \P)$.
				(For simplicity assume $\Omega$ is a Polish space and $\mathcal{B}$ is its Borel $\sigma$-algebra.)
				Given a $\sigma$-algebra $\mathcal{F} \subseteq \mathcal{B}$ and an integrable function $f$,
				define the \emph{conditional expectation} $\E(f \mid \mathcal{F})$ as
				the a.e.\ unique $\mathcal{F}$-measurable function $g$ such that $\int_A g\, d\P = \int_A f\, d\P$ for all $A \in \mathcal{F}$.
				The rough idea is that $\mathcal{F}$ encodes information that we have available to us,
				and $\E(f\mid \mc{F})$ is the average or expected value of $f$ given the information that we know.
				The following are useful properties of conditional expectation (analogous to Remark~\ref{rem:EnProperties}).
				\begin{prop}[{\cite[ch.9]{MR1155402}}]\label{prop:cond_exp_properties}
					Let $f$ and $g$ be integrable functions, and let $\mc{F},\mc{G} \subseteq \mc{B}$ be $\sigma$-algebras.
					\begin{enumerate}
						\item{} If $f$ is $\mathcal{F}$-measurable, then $\E(f \mid \mathcal{F}) = f$.
						\item{} $\E(cf + g \mid \mathcal{F}) = c\, \E (f \mid \mathcal{F}) + \E(g \mid \mathcal{F})$
						for $c\in \mathbb{R}$.
						\item{} $|\E(f \mid \mathcal{F})| \leq  \E(|f| \mid \mathcal{F})$, and therefore $\|\E(f \mid \mathcal{F})\|_{\infty} \leq \|f\|_{\infty}$.
						\item{} If $f$ is $\mathcal{F}$-measurable, then $\E(fg \mid \mathcal{F}) = f\, \E(g \mid \mathcal{F})$.
						\item{} If $\mathcal{F} \subseteq \mathcal{G}$, then $\E( \E(f \mid \mathcal{G}) \mid \mathcal{F})= \E(f \mid \mathcal{F})$.
					\end{enumerate}
				\end{prop}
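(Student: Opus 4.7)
The plan is to derive all five properties directly from the defining characterization: $\E(f\mid\mathcal{F})$ is the a.e.\ unique $\mathcal{F}$-measurable function $g$ satisfying $\int_A g\,d\P = \int_A f\,d\P$ for every $A\in\mathcal{F}$. So in each case I would exhibit a candidate function, verify it is $\mathcal{F}$-measurable, check the defining integral equation, and invoke a.e.\ uniqueness.

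Properties (1), (2), and (5) are immediate. For (1), $f$ itself is $\mathcal{F}$-measurable and trivially witnesses its own defining equation. For (2), $c\E(f\mid\mathcal{F})+\E(g\mid\mathcal{F})$ is $\mathcal{F}$-measurable, and for $A\in\mathcal{F}$ linearity of the integral gives $\int_A[c\E(f\mid\mathcal{F})+\E(g\mid\mathcal{F})]\,d\P = c\int_A f\,d\P+\int_A g\,d\P=\int_A(cf+g)\,d\P$. For (5), given $\mathcal{F}\subseteq\mathcal{G}$, the function $\E(f\mid\mathcal{F})$ is $\mathcal{F}$-measurable, and for $A\in\mathcal{F}\subseteq\mathcal{G}$ we have $\int_A\E(f\mid\mathcal{F})\,d\P=\int_A f\,d\P=\int_A\E(f\mid\mathcal{G})\,d\P$, so uniqueness applied to $\E(\,\cdot\,\mid\mathcal{F})$ of the inner expression gives the tower identity.

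The delicate step in (3) is that the inequality $|\E(f\mid\mathcal{F})|\le\E(|f|\mid\mathcal{F})$ is pointwise a.e., not merely integrated. I would split on the $\mathcal{F}$-measurable sets $A^+=\{\E(f\mid\mathcal{F})\ge 0\}$ and $A^-=\{\E(f\mid\mathcal{F})<0\}$. For any $B\in\mathcal{F}$, writing $B^\pm = B\cap A^\pm$,
\[
\int_B|\E(f\mid\mathcal{F})|\,d\P = \int_{B^+}\E(f\mid\mathcal{F})\,d\P - \int_{B^-}\E(f\mid\mathcal{F})\,d\P = \int_{B^+} f\,d\P - \int_{B^-} f\,d\P \le \int_B |f|\,d\P = \int_B \E(|f|\mid\mathcal{F})\,d\P.
\]
Since both integrands are $\mathcal{F}$-measurable and the inequality holds on every $B\in\mathcal{F}$, a standard argument (testing against $\{|\E(f\mid\mathcal{F})|>\E(|f|\mid\mathcal{F})+\varepsilon\}\in\mathcal{F}$) yields the pointwise a.e.\ bound. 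The $L^\infty$ estimate then falls out: for $B\in\mathcal{F}$ one has $\int_B\E(|f|\mid\mathcal{F})\,d\P=\int_B|f|\,d\P\le\|f\|_\infty\,\P(B)$, forcing $\E(|f|\mid\mathcal{F})\le\|f\|_\infty$ a.e.

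Property (4) is the main obstacle, and I would prove it by the standard three-layer approximation. First, for an indicator $f=\mathbf{1}_B$ with $B\in\mathcal{F}$: $\mathbf{1}_B\E(g\mid\mathcal{F})$ is $\mathcal{F}$-measurable, and for any $A\in\mathcal{F}$,
\[
\int_A \mathbf{1}_B\E(g\mid\mathcal{F})\,d\P = \int_{A\cap B}\E(g\mid\mathcal{F})\,d\P = \int_{A\cap B} g\,d\P = \int_A \mathbf{1}_B g\,d\P,
\]
so uniqueness gives $\E(\mathbf{1}_B g\mid\mathcal{F})=\mathbf{1}_B\E(g\mid\mathcal{F})$. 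Second, linearity (property (2)) extends this to $\mathcal{F}$-measurable simple $f$. Third, for nonnegative $\mathcal{F}$-measurable $f$, take an increasing sequence of simple $f_n\uparrow f$ and pass to the limit using monotone convergence (applied both to the conditional expectation, which is legal via the monotone convergence theorem for conditional expectation, and to the ordinary integrals in the defining equation). Finally, split a general integrable $\mathcal{F}$-measurable $f$ as $f^+-f^-$ and invoke linearity one more time. The only place integrability has to be watched is that $fg$ must itself be integrable, which is covered by the hypothesis on $f$ and $g$.
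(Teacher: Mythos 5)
The paper does not actually prove this proposition; it is quoted from Williams' \emph{Probability with Martingales} (the cited ch.~9) and used as background. Your argument is, in essence, the standard textbook derivation from the defining relation $\int_A \E(f\mid\mathcal{F})\,d\P=\int_A f\,d\P$ for $A\in\mathcal{F}$, and items (1), (2), (3), and (5) are handled correctly --- in particular your treatment of (3), upgrading the integral inequality over all $B\in\mathcal{F}$ to a pointwise a.e.\ inequality by testing against $\{|\E(f\mid\mathcal{F})|>\E(|f|\mid\mathcal{F})+\varepsilon\}$, is exactly right.

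Two wrinkles in your item (4). First, the monotone convergence step as written does not quite go through: if $f_n\uparrow f$ are simple and $g$ changes sign, then $f_n g$ is not monotone in $n$, so you cannot invoke monotone convergence directly. The fix is routine but should be stated: first reduce to $g\geq 0$ by writing $g=g^+-g^-$ and using linearity (property (2)), and only then run the ladder indicator $\rightarrow$ simple $\rightarrow$ nonnegative $\mathcal{F}$-measurable $f$ via monotone convergence, finishing with $f=f^+-f^-$. Second, your closing remark that integrability of $fg$ ``is covered by the hypothesis on $f$ and $g$'' is false: the product of two integrable functions need not be integrable (e.g.\ $f=g=x^{-1/2}$ on $(0,1)$). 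The correct hypothesis for (4) --- and the one under which the cited reference proves it --- is that $f$ is $\mathcal{F}$-measurable and \emph{either} bounded \emph{or} such that $fg\in L^1$; the proposition as stated in the paper is loose on this point, but every application in the paper (properties of $\E_n$ in Remark~\ref{rem:EnProperties}, the indicator manipulations in the upcrossing and savings arguments) uses a bounded $f$, so nothing downstream is affected. With these two repairs your proof is complete and matches the standard one.
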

	
				Often it is useful to \emph{augment} a $\sigma$-algebra by adding in all null sets (and closing under the $\sigma$-algebra operations).
				The conditional expectation of the augmented $\sigma$-algebra remains the same (up to a.e.\ equivalence).
	
				A \emph{filtration} ${(\mathcal{F}_n)}_{n\in\N}$ is a sequence of $\sigma$-algebras $\mathcal{F}_n \subseteq \mathcal{B}$ such that $\mathcal{F}_n \subseteq \mathcal{F}_{n+1}$.
				(In other words, we gain information over time $n$.)
				A \emph{martingale} ${(M_n)}_{n\in\N}$ \emph{adapted} to the filtration ${(\mc{F}_n)}_{n\in\N}$ is a sequence of integrable functions
				$M_n:\Omega \rightarrow \mathbb{R}$ such that $M_n$ is $\mathcal{F}_n$ measurable
				and $\E(M_n \mid \mathcal{F}_m) = M_m$ a.s.\ whenever $m\leq n$.
				(In other words, the expected capital in the future given the past is the same as the current capital.)
				Doob's martingale convergence theorem is as follows.
				\begin{thm}[{Doob \cite{MR0058896}\cite[Theorem 11.5]{MR1155402}}]\label{thm:doob_discrete}
					If $(M_n)$ is martingale such that $\sup_n \E|M_n|<\infty$,
					then $M_n$ converges almost surely as $n \rightarrow \infty$.
				\end{thm}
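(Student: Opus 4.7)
The plan is to prove Theorem~\ref{thm:doob_discrete} by Doob's classical upcrossing argument. Fix rationals $a<b$. For each path, count the number $U_n[a,b](\omega)$ of completed upcrossings of the interval $[a,b]$ by $M_0(\omega), M_1(\omega), \dots, M_n(\omega)$, defined exactly as in the proof of Theorem~\ref{thm:DoobSequences}: one tracks alternating hitting times of levels $\le a$ and $\ge b$, and counts finished pairs. The goal is to prove Doob's upcrossing inequality,
\[
(b-a)\,\E\,U_n[a,b] \;\le\; \E[(M_n-a)^+].
\]

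To establish this inequality, I would introduce the predictable process $H_k$ which equals $1$ on indices $k$ lying in an ``upcrossing phase'' of $[a,b]$ (after a visit to a level $\le a$, before the next visit to a level $\ge b$) and $0$ otherwise; the key feature is that $H_k$ is $\mathcal{F}_{k-1}$-measurable because the decision to bet or not depends only on the history up to time $k-1$. The martingale transform $N_n := \sum_{k=1}^{n} H_k (M_k - M_{k-1})$ is then itself a martingale (by Proposition~\ref{prop:cond_exp_properties}(2) and (4), since $\E(H_k(M_k-M_{k-1})\mid \mc{F}_{k-1}) = H_k\,\E(M_k - M_{k-1}\mid\mc{F}_{k-1}) = 0$), so $\E N_n = 0$. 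On the other hand, each completed upcrossing contributes at least $b-a$ to $N_n$, and the last, possibly incomplete, upward excursion contributes at least $-(M_n-a)^-$, giving the pathwise bound $N_n \ge (b-a) U_n[a,b] - (M_n-a)^-$. Taking expectations and rearranging yields the upcrossing inequality.

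Once the upcrossing inequality is in hand, the rest is routine. By monotone convergence the total upcrossing count $U_\infty[a,b] = \lim_n U_n[a,b]$ satisfies
\[
(b-a)\,\E\,U_\infty[a,b] \;\le\; \sup_n \E[(M_n - a)^+] \;\le\; |a| + \sup_n \E|M_n| \;<\;\infty,
\]
so $U_\infty[a,b]<\infty$ almost surely. Taking the countable union of the measure-zero exceptional sets over all rational pairs $a<b$, we conclude that almost surely, for no pair $a<b$ does $(M_n(\omega))$ upcross $[a,b]$ infinitely often; hence $\liminf_n M_n(\omega) = \limsup_n M_n(\omega)$ as an extended real, and so $M_\infty(\omega):=\lim_n M_n(\omega)$ exists in $[-\infty,\infty]$ a.s. Finally Fatou's lemma gives $\E|M_\infty| \le \liminf_n \E|M_n| \le \sup_n \E|M_n| < \infty$, so $M_\infty$ is a.s.\ finite, completing the proof.

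The main obstacle, and the only genuinely non-routine step, is the construction of the predictable process $H_k$ and the verification of the pathwise bound $N_n \ge (b-a)U_n[a,b] - (M_n-a)^-$; everything else is bookkeeping (checking measurability of the hitting times) together with two standard measure-theoretic moves (countable union over rationals, and Fatou). One must be careful that $H_k$ is predictable, i.e.\ $\mathcal{F}_{k-1}$-measurable rather than $\mathcal{F}_k$-measurable, as this is exactly what makes the transform a martingale via property (4) of Proposition~\ref{prop:cond_exp_properties}; this is the continuous-time analogue of the fact, used repeatedly in the paper, that an adapted gambling strategy must choose its bet based on the past.
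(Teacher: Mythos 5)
Your proof is correct: it is the classical upcrossing argument, and it is essentially the proof in the reference the paper cites (Williams, Theorem 11.5), since the paper itself states Theorem~\ref{thm:doob_discrete} as a known classical fact and gives no proof of its own. The only thing worth noting is a cosmetic mismatch: your pathwise bound $N_n \ge (b-a)U_n[a,b] - (M_n-a)^-$ yields the upcrossing inequality with $\E[(M_n-a)^-]$ on the right, not $\E[(M_n-a)^+]$ as displayed; both are bounded by $|a|+\sup_n\E|M_n|$, so the conclusion is unaffected. Your construction is also the non-effective template for the paper's own upcrossing martingale in the proof of Theorem~\ref{thm:DoobSequences}, where the extra work is making the hitting times and the transformed martingale computable.
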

	
				Notice if $(M_n)$ is a nonnegative martingale, then $\E |M_n| = \E M_n =\E M_0$, and therefore the convergence theorem applies.
	
				Given a \emph{random variable} (a measurable function) $X:\Omega \rightarrow \mc{X}$
				(where $\mathcal{X}$ is a Polish space),
				denote $\E( f \mid X)= \E(f \mid \sigma(X))$ where $\sigma(X)$ is the smallest $\sigma$-algebra containing the sets $X^{-1} (B)$ for all Borel sets $B \subseteq \mathcal{X}$.
				The \emph{distribution} $\P_X$ of $X$ is the push-forward measure on the space $\mc{X}$ given by $\P_X(A)=\P(X^{-1}(A))$.
				Two random variables, $X$ and $Y$, are said to be \emph{independent}
				if their joint distribution $\P_{(X,Y)}$ (i.e., the distribution of the pair $(X,Y)$) is equal to the product measure $\P_X \times \P_Y$.

				We will use the following lemma to give a more explicit computation of the conditional expectation operator, arriving at the definitions used in this paper.
	
				\begin{lem}[{\cite[Theorem~6.2.1]{MR2722836}}]\label{lem:ind_cond_exp}
					Let $X$ and $Y$ be independent random variables.
					Let $\varphi$ be a function with $\E|\varphi(X,Y)| < \infty$ and let $g(x)=\E(\varphi(x,Y))$.
					Then
					\[ \E(\varphi(X,Y)\mid X) = g(X) \quad \text{a.s.}\]
					In other words, for a.e. $\omega \in \Omega$,
					\[ \E(\varphi(X,Y)\mid X)(\omega) = \int \varphi(X(\omega),y)\, d\P_Y(y).\]
				\end{lem}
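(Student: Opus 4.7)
The plan is to verify that $g(X)$ satisfies the two defining properties of the conditional expectation $\E(\varphi(X,Y)\mid X)$: namely, $\sigma(X)$-measurability, and the matching-integrals identity $\int_A g(X)\, d\P = \int_A \varphi(X,Y)\, d\P$ for every $A \in \sigma(X)$. Since $\E(\varphi(X,Y)\mid X)$ is by definition the a.s.\ unique $\sigma(X)$-measurable function with this property, once these two are established we are done.

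First I would establish that $g$ is well-defined and Borel measurable. Applying Tonelli's theorem to $|\varphi|$ against the product measure $\P_X \times \P_Y$ (which, by independence, equals the joint distribution $\P_{(X,Y)}$) shows that $y\mapsto |\varphi(x,y)|$ is $\P_Y$-integrable for $\P_X$-a.e.\ $x$, and that the map $x \mapsto \int |\varphi(x,y)|\, d\P_Y(y)$ is Borel measurable. Fubini then gives the same for the signed integral $g(x)$. After redefining $g$ to be $0$ on the $\P_X$-null set where the inner integral fails to converge, $g$ is Borel measurable, so $g(X)$ is $\sigma(X)$-measurable. Its integrability follows from
\[ \E|g(X)| \leq \int\!\int |\varphi(x,y)|\, d\P_Y(y)\, d\P_X(x) = \E|\varphi(X,Y)| < \infty, \]
where the equality uses independence to rewrite the joint distribution as a product.

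Second I would verify the defining identity. Every $A \in \sigma(X)$ has the form $A = X^{-1}(B)$ for some Borel set $B \subseteq \mathcal{X}$. Using independence together with Fubini (justified by the integrability established above), I compute
\begin{align*}
\int_A \varphi(X,Y)\, d\P
&= \int \mathbf{1}_B(x)\,\varphi(x,y)\, d(\P_X \times \P_Y)(x,y) \\
&= \int \mathbf{1}_B(x) \left(\int \varphi(x,y)\, d\P_Y(y)\right) d\P_X(x) \\
&= \int \mathbf{1}_B(x)\, g(x)\, d\P_X(x) \\
&= \int_A g(X)\, d\P.
\end{align*}

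The substance of the argument is the rewriting of the joint integral as an iterated integral, which is exactly the content of independence combined with Fubini; the rest is bookkeeping. The main mild obstacle is ensuring that $g$ is genuinely Borel measurable on all of $\mc{X}$ and handling the $\P_X$-null exceptional set where the inner integral might be undefined, but both are disposed of by the standard product-measure version of Fubini/Tonelli.
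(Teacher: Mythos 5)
Your proof is correct and is exactly the standard argument: the paper gives no proof of this lemma, citing Durrett's Theorem~6.2.1, and your verification of $\sigma(X)$-measurability plus the matching-integrals identity via independence (so $\P_{(X,Y)}=\P_X\times\P_Y$) and Fubini--Tonelli is precisely that cited proof. The handling of the $\P_X$-null exceptional set and the measurability of $g$ is the right level of care; nothing is missing.
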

	
				The main idea is that $X$ encodes the past, and $Y$ encodes the future.  An integrable random variable $f$ can be expressed as a function of the past and the future, $\varphi(X,Y)$.
				If the past $X$ and future $Y$ are independent (as is the case with repeated coin-flipping), then Lemma~\ref{lem:ind_cond_exp} gives us an explicit formula for the conditional expectation.
	
				\begin{ex}\label{ex:1}
					Let $(\Omega, \P)$ be  ${(2^\N)}^\N$ with the the uniform measure as in Section~\ref{sec:3}.
					Let $Z_n: {(2^\N)}^\N \rightarrow 2^\N$ be $Z_n(\omega) = \omega_n$.
					Let $X_n = (Z_0, \ldots, Z_{n-1})$ and $Y_n = (Z_{n},Z_{n+1},\ldots)$.
					Fix $n$.  Then $X_n$ and $Y_n$ are independent, and $\P_{Y_n} = \P$.
					Given an integrable $f$, let $\varphi(x,y) = f(x {}^\frown y)$.
					Combining $f = \varphi(X_n,Y_n)$, Lemma~\ref{lem:ind_cond_exp}, and Definition~\ref{df:cond_exp_seq_seq}, we have
					\[
						\E(f\mid X_n)
					 	= \E(\varphi(X_n,Y_n)\mid X_n)
					 	= \int \varphi(X_n,\beta)d\P(\beta)
					 	= \E_n(f) \quad \text{a.s.}
					\]
					Let $\mc{F}_n$ be the augmentation of $\sigma(X_n)$.
					Then ${(\mc{F})}_{n\in\N}$ is a filtration since $\sigma(Z_0,\ldots, Z_{n-1}) \subseteq \sigma(Z_0,\ldots, Z_{n-1}, Z_{n})$.
					An integrable function $f$ is $\mc{F}_n$-measurable if and only if for a.e.\ $\alpha$ the value of $f(\alpha)$ depends only on $X_n(\alpha)=\alpha_{<n}$.
					Therefore, $(M_n)$ is a martingale in the sense of Definition~\ref{df:mart_seq_seq} exactly
					if it is a martingale adapted to the filtration $\mc{F}_n$.
		
					A similar result holds for the martingales on $2^\N$ that were introduced in Section~\ref{sec:2}.
				\end{ex}

				More generally, given a partial order $(J,\leq)$, a \emph{filtration} ${(\mathcal{F}_j)}_{j\in J}$ is
				a $J$-indexed family of  $\sigma$-algebras $\mathcal{F}_j \subseteq \mathcal{B}$ such that $\mathcal{F}_i \subseteq  \mathcal{F}_{j}$ whenever $i \leq j$.
				A \emph{martingale} ${(M_j)}_{j\in J}$ \emph{adapted} to the filtration ${(\mathcal{F}_j)}_{j\in J}$ is a sequence of integrable functions
				$M_j:\Omega \rightarrow \mathbb{R}$ such that $M_j$ is $\mathcal{F}_j$ measurable
				and $\E(M_j \mid \mathcal{F}_i) = M_i$ a.s.\ whenever $i\leq j$.

				\begin{thm}[Doob, see {\cite[Ch.II Theorem~2.5 and Corollary~2.4]{MR1725357}}]\label{Doob_conv_linear}
					Assume $(J,\leq)$ is countable and linearly ordered.
					(For notational convenience assume $J \subseteq [0,1]$ and $\sup J = 1$.)
					Assume ${(M_j)}_{j\in J}$ is a martingale such that $\sup_{j \in J} \E|M_j|<\infty$.
					Then $M_j$ converges almost surely as $j \rightarrow 1$.
					Furthermore, if $1 \in J$ and $\mc{F}_1 = \sigma(\bigcup_{j<1} \mc{F}_i)$,
					then $\lim_{j \rightarrow 1} M_j = M_1$ (both a.e.\ and in the $L^1$-norm).
				\end{thm}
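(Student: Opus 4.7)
The plan is to reduce to the classical discrete-time convergence theorem via Doob's upcrossing inequality applied to finite subfiltrations, and then use uniform integrability for the closed-martingale part.

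First, enumerate $J$ (in some arbitrary order) as $\{j_0, j_1, \dots\}$ and for each $n$ let $J_n=\{j_0,\dots,j_n\}$ sorted increasingly. The restriction of $(M_j)_{j\in J}$ to $J_n$ is a finite martingale, so Doob's upcrossing inequality gives, for each pair of rationals $a<b$,
\[
 (b-a)\,\E[U^{J_n}_{[a,b]}] \;\leq\; \E[(M_{\max J_n}-a)^{-}] \;\leq\; |a|+\sup_{j\in J}\E|M_j|,
\]
where $U^{J_n}_{[a,b]}$ counts the upcrossings of $[a,b]$ by the sub-martingale indexed by $J_n$. Since adding points to a linearly ordered index set can only increase the upcrossing count, $U^{J_n}_{[a,b]}$ is nondecreasing in $n$ and its supremum $U^{J}_{[a,b]}$ counts the upcrossings over all of $J$. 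Monotone convergence then gives $\E[U^{J}_{[a,b]}]\leq (|a|+\sup_j \E|M_j|)/(b-a)<\infty$, so $U^{J}_{[a,b]}<\infty$ a.s.

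Next I would deduce the a.s.\ limit. If $\ell:=\liminf_{j\to 1} M_j$ and $L:=\limsup_{j\to 1} M_j$ satisfied $\ell<L$ on a set of positive measure, then on a smaller positive-measure set we could choose rationals $a<b$ with $\ell<a<b<L$, forcing $U^{J}_{[a,b]}=\infty$ there, a contradiction. Hence $\ell=L$ a.s., and Fatou applied to $|M_j|$ shows this common value is a.s.\ finite. This establishes the first assertion.

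For the ``furthermore'' clause, if $1\in J$ then the martingale relation gives $M_j=\E(M_1\mid\mc{F}_j)$ for every $j<1$. The family $\{\E(M_1\mid\mc{F}_j):j\in J\}$ is uniformly integrable (a standard property of conditional expectations of a single $L^1$ function), so combining UI with the a.s.\ convergence already obtained yields convergence in $L^1$. To identify the limit $L$ with $M_1$: for any $j_0<1$ and any $A\in\mc{F}_{j_0}$, the martingale property gives $\int_A M_j\,d\P=\int_A M_1\,d\P$ for all $j\geq j_0$ in $J$; letting $j\to 1$ and using $L^1$-convergence gives $\int_A L\,d\P=\int_A M_1\,d\P$. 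A monotone-class argument extends this equality from $\bigcup_{j<1}\mc{F}_j$ to $\mc{F}_1=\sigma(\bigcup_{j<1}\mc{F}_j)$. Since $L$ is $\mc{F}_1$-measurable as a pointwise a.s.\ limit of $\mc{F}_j$-measurable functions, $L=M_1$ a.s.

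The main obstacle is the bookkeeping in the upcrossing step: one must verify that upcrossing counts over finite subsets $J_n$ really do approximate upcrossings over all of $J$ monotonically, despite $J$ being only linearly (not well) ordered and the enumeration being in a possibly incompatible order. Once that lemma is in hand, everything else follows from standard finite-martingale facts and uniform integrability.
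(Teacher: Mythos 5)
The paper does not actually prove this theorem: it is quoted as a classical result of Doob with a pointer to Revuz--Yor, so there is no in-paper argument to measure yours against. Your proof is correct and is the standard one (essentially that of the cited reference): the finite-index upcrossing inequality, monotonicity of upcrossing counts under refinement of the index set, and monotone convergence give $\E[U^{J}_{[a,b]}]<\infty$ and hence a.s.\ convergence, while uniform integrability of $\{\E(M_1\mid\mc{F}_j)\}_{j\in J}$ together with a monotone-class argument closes the martingale at $1$. The only details worth spelling out in a full write-up are routine: in the oscillation step one interleaves witnessing indices using that $\{j: M_j<a\}$ and $\{j: M_j>b\}$ are both cofinal in $J\cap[0,1)$, and Fatou/Vitali should be applied along an increasing cofinal sequence $j_n\uparrow 1$ (which exists since $J$ is countable) before transferring the $L^1$-convergence to the whole net via $M_j=\E(M_{j'}\mid\mc{F}_j)$.
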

	
				\begin{ex}\label{ex:2}
					Let $(\Omega,\P)$ be $2^{\N \times \N}$ with the uniform measure $\P$ as in Section~\ref{sec:4}.
					Let $(J,\leq)$ be $\N \times \N$ with the lexiographic order.
					Let $Z_{m,n}: 2^{\N\times\N} \rightarrow \{0,1\}$ be $Z_{m,n}(\omega) = \omega_{m,n}$, and
					let $X_{m,n} = (Z_{i,j} \mid (i,j) < (m,n))$.
					By the same argument as the previous example, we have $\mathbb{E}_{m,n}(f) = \E(f\mid X_{m,n})$ a.s.
					(where $\E_{m,n}$ is as in Definition~\ref{df:cond_exp_mart_array}).
					Let $\mathcal{F}_{m,n}$ be the augmentation of $\sigma(X_{m,n})$.
					The martingales in Definition~\ref{df:cond_exp_mart_array} are exactly the martingales adapted to the filtration
					${(\mathcal{F}_{m,n})}_{(m,n)\in \N \times \N}$.
		
					Notice the filtration ${(\mathcal{F}_m)}_{m\in\N}$ from Example~\ref{ex:1}
					is equal to the subfiltration ${(\mathcal{F}_{m,0})}_{m\in\N}$
					(under the natural identification of ${(2^\N)}^\N$ and $2^{\N\times\N}$).
					Furthermore, $\mathcal{F}_{m+1,0}=\sigma(\bigcup_n \mc{F}_{m,n})$.
					Therefore, $M_{m+1,0} = \lim_n M_{m,n}$ by Theorem~\ref{Doob_conv_linear}.
		
					Now, for each $D \subseteq \N \times \N$, let $X_D = (Z_{i,j} \mid (i,j) \in D)$.
					We have $\mathbb{E}_D(f) = \E(f\mid X_D)$ a.s.\ (where $\E_D$ is as in Definition~\ref{df:cond_exp_set_index}).
				\end{ex}
			\subsection{Continuous-time martingales and Brownian motion.}
				In the continuous-time setting,
				using the index set $J=\closedbegin 0,\infty \openend$,
				the definition of filtration and martingale are the same as above.
				However, for Doob's martingale convergence theorem to hold,
				we also require that ${(M_t)}_{t\in\closedbegin 0,\infty \openend}$ be \emph{a.s.\ right continuous},
				that is for almost every $\omega \in \Omega$,
				if $s\in  \closedbegin 0,\infty \openend$ then $\lim_{t \rightarrow s^+} M_t(\omega) = M_{s}(\omega)$.
				\begin{thm}[Doob {\cite{MR0058896}\cite[Theorem C.5]{MR2001996}}]\label{thm:doob_cont}
					If ${(M_t)}_{t\in\closedbegin 0,\infty \openend}$ is an a.s.\ right-continuous martingale such that $\sup_t \E|M_t|<\infty$,
					then $(M_t)$ converges almost surely as $t \rightarrow \infty$.
				\end{thm}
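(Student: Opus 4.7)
\medskip

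\noindent\emph{Proof proposal.} The plan is to reduce to the discrete-time case (Theorem~\ref{thm:doob_discrete}) by restricting the martingale to a countable dense subset of $\closedbegin 0,\infty\openend$ and then transferring the conclusion back to all real times via a.s.\ right-continuity. Concretely, let $D = \bigcup_n D_n$ where $D_n$ is, say, the set of dyadic rationals in $[0,n]$, so $D$ is countable, dense in $\closedbegin 0,\infty\openend$, and can be enumerated so that each $D_n$ is seen in finitely many steps.

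First I would invoke Doob's upcrossing inequality in discrete time: for any finite increasing sequence $t_0 < t_1 < \dots < t_N$ in $D$ and any rationals $a<b$, the expected number of upcrossings of $[a,b]$ by $(M_{t_k})_{k\le N}$ is bounded by $(b-a)^{-1}\bigl(\E|M_{t_N}| + |a|\bigr)$, which in turn is bounded by $(b-a)^{-1}(\sup_t \E|M_t| + |a|) < \infty$ using the hypothesis. Taking $N \to \infty$ along $D_n$ and then $n \to \infty$ via monotone convergence, the total number $U^D_{a,b}$ of upcrossings of $[a,b]$ by the sequence $(M_t)_{t\in D}$ has finite expectation, hence is a.s.\ finite. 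Taking a union over the countably many rational intervals $[a,b]$, we conclude that with probability one, $(M_t)_{t\in D}$ has finitely many upcrossings of every rational interval, which forces $\lim_{t\to\infty, t\in D} M_t$ to exist in $[-\infty,\infty]$ a.s.

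Next, to promote convergence along $D$ to convergence along all of $\closedbegin 0,\infty\openend$, I would use the a.s.\ right-continuity hypothesis. For almost every $\omega$ the path $t\mapsto M_t(\omega)$ is right-continuous, so any real time $t$ can be approached from the right through $D$, and therefore
\[
\limsup_{t\to\infty} M_t(\omega) = \limsup_{t\to\infty,\, t\in D} M_t(\omega), \quad
\liminf_{t\to\infty} M_t(\omega) = \liminf_{t\to\infty,\, t\in D} M_t(\omega).
\]
Combined with the previous step, this shows $M_\infty(\omega) := \lim_{t\to\infty} M_t(\omega)$ exists in $[-\infty,\infty]$ a.s. Finally, Fatou's lemma yields
\[
\E|M_\infty| \le \liminf_{t\to\infty} \E|M_t| \le \sup_t \E|M_t| < \infty,
\]
so $M_\infty$ is a.s.\ finite, completing the proof.

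The main obstacle, and the only place where the continuous-time setting does real work beyond Theorem~\ref{thm:doob_discrete}, is the transfer step: one has to justify that upcrossings on $D$ capture all upcrossings of the path. This is exactly why the a.s.\ right-continuity hypothesis is essential—without it one could have wild oscillations at times outside $D$ that the discrete-time argument cannot see. Everything else (the upcrossing bound, the union over rational intervals, and the Fatou step for finiteness) is a straightforward adaptation of the discrete-time proof sketched after Theorem~\ref{thm:doob_discrete}.
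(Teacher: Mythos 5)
Your proposal is correct, but note that the paper does not actually prove Theorem~\ref{thm:doob_cont}: it is stated in the appendix as a classical result and attributed to Doob \cite{MR0058896} and to \cite[Theorem C.5]{MR2001996}, so there is no in-paper argument to compare against. What you have written is the standard textbook proof, and all of its essential steps are sound: the discrete upcrossing inequality applied to finite subsets of $D$ (valid because the restriction of $(M_t)$ to finitely many times is a discrete-time martingale), the monotone-convergence passage to a bound on $\E\, U^D_{a,b}$ that is uniform thanks to $\sup_t \E|M_t|<\infty$, the countable union over rational intervals forcing $\lim_{t\to\infty,\,t\in D} M_t$ to exist in $[-\infty,\infty]$ a.s., the transfer to all real times via the identity $\sup_{t\ge T} M_t = \sup_{t\ge T,\,t\in D} M_t$ on the full-measure event where paths are right-continuous, and the Fatou step giving a.s.\ finiteness of the limit. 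You also correctly isolate where right-continuity is genuinely needed: without it, the uncountably many times outside $D$ are invisible to the discrete argument, and pathwise convergence over all of $\closedbegin 0,\infty\openend$ can fail even when convergence along $D$ holds. One cosmetic imprecision: your $D_n$ (dyadic rationals in $[0,n]$) is infinite, so ``taking $N\to\infty$ along $D_n$'' should be read as exhausting $D_n$ (and then $D$) by an increasing sequence of finite subsets and applying monotone convergence to the resulting upcrossing counts; this is exactly what your appeal to monotone convergence accomplishes, so nothing breaks.
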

				\begin{df}\label{df:BM}
				\emph{Brownian motion} is defined as a sequence of random variables ${(B_t)}_t\in\closedbegin 0,\infty\openend$ with the following properties.
				\begin{enumerate}
					\item{} $B_0 = 0$ almost surely.
					\item{} $B_t-B_s$ has a normal distribution with mean $0$ and variance $t-s$ independent of $\{B_u\colon u\le s\}$, for all $t>s\ge 0$.
					\item{} $B$ has continuous paths, $t \mapsto B_t$, almost surely.	
				\end{enumerate}
				\end{df}
				Since the paths are continuous, Brownian motion can be characterized by the distribution of its continuous paths.  This is the \emph{Wiener measure} on $C(\closedbegin 0,\infty \openend)$ or $C([0,1])$.

				Let $(\Omega,\P)$ be $C(\closedbegin 0,\infty \openend)$ with the Wiener measure.
				As in Section~\ref{sec:Brownian}, let $W$ denote an element of $\Omega$.  Let $X_{t}(W) = W_{\leq t}$ and let $\mc{F}_t$ be the augmentation of $\sigma(X_{t})$.
				(Because of continuity, $\sigma(X_{t})$ is also equal to the smallest $\sigma$-algebra generated by all sets of the form
				$\{f \in \Omega : f(t_1) \in B_1, \ldots, f(t_k) \in B_k\}$ for $0 \leq t_1 < \ldots < t_k \leq t$ and Borel $B_1,\ldots,B_k \subseteq \mathbb{R}$.)
				This $(\mathcal{F}_t)$ is know as the \emph{filtration of Brownian motion} and the martingales adapted to $(\mathcal{F}_t)$ are known as \emph{Brownian martingales}.
		
				By the same proof as in Example~\ref{ex:1} we have for each $0 \leq t < \infty $ that $\E_t f = \E(f \mid \mc{F}_t)$ a.s.\ where $\E_t$ is as in Definition~\ref{df:Et}.
				We can see that an integrable function $f$ is $\mc{F}_t$-measurable if and only if $f(W)$ is almost surely determined by $t$ and $W_{\leq t}$.
				Therefore, the martingales in Definition~\ref{df:cont_mart} are exactly the Brownian martingales.
		
				The computable and a.e.\ computable martingales of Section~\ref{sec:Brownian} were defined so that the paths $t \mapsto M_t(W)$ are a.s.\ continuous (Proposition~\ref{pro:compAdaptedCont} and the following paragraph).
				Therefore, Doob's martingale convergence theorem applies.
				Conversely, it is a theorem that every Brownian martingale has a version with continuous paths \cite[Ch.V Theorem~3.2]{MR1725357}.
				(A martingale $(M_t)$ is said to be a \emph{version} or \emph{modification} of another martingale $(N_t)$ if for all $t$, $\P\{W : M_t(W) = N_t(W)\}=1$.)
	\end{appendix}

\section*{Acknowledgments}
We would like to thank the anonymous referees for their helpful feedback.

	\bibliographystyle{plain}
	\bibliography{KNR}
\end{document}